\documentclass[10pt,a4paper]{article}

\usepackage[margin=25mm]{geometry}
\usepackage[british]{babel}
\usepackage[T1]{fontenc}
\usepackage[utf8]{inputenc}
\usepackage{lmodern}
\usepackage{amsmath,amssymb,amsthm,mathtools}
\usepackage{braket}
\usepackage{microtype}
\usepackage{graphicx,xcolor}
\usepackage{tikz}
\usepackage{authblk}
\usepackage{booktabs}

\usetikzlibrary{arrows.meta,calc,decorations.pathreplacing}
\usepackage{amsmath,tikz,graphicx}
\usepackage{enumitem}
\setlist{topsep=4pt,itemsep=2pt,parsep=0pt}
\usepackage{cite}
\usepackage{hyperref}

\usepackage{bookmark}
\usepackage{tocbasic}

\usepackage{float}
\usepackage{algorithm}

\DeclareTOCStyleEntries[
  dynnumwidth,
  numwidth=0pt,
  numsep=0.6em
]{tocline}{section,subsection}

\DeclareTOCStyleEntry[
  dynindent,
  indentfollows=section
]{tocline}{subsection}

\newtheorem{theorem}{Theorem}
\newtheorem{proposition}{Proposition}
\newtheorem{corollary}{Corollary}
\newtheorem{lemma}{Lemma}
\newcommand{\restatedtitle}{}

\newtheorem*{restatedresult}{\restatedtitle}
\newenvironment{namedresult}[1]{\renewcommand{\restatedtitle}{#1}\begin{restatedresult}}{\end{restatedresult}}
\theoremstyle{definition}
\newtheorem{definition}{Definition}
\theoremstyle{remark}

\newcommand{\HT}{\mathrm{HT}}
\newcommand{\HTp}{\mathrm{HT}^{+}}
\newcommand{\Setup}{\texttt{Setup}}
\newcommand{\Update}{\texttt{Update}}
\newcommand{\Checkk}{\texttt{Check}}
\newcommand{\Shift}{\textsc{Shift}}
\newcommand{\refX}{\mathrm{ref}_{\mathcal{X}}}

\providecommand{\QFF}{}\renewcommand{\QFF}{\hyperref[thm:qff]{\textsc{QFF}}}
\providecommand{\KMOR}{}\renewcommand{\KMOR}{\hyperref[alg:kmorqff]{\textsc{KMORQFF}}}
\providecommand{\IIS}{}\renewcommand{\IIS}{\hyperref[alg:iis]{\textsc{InterpolatedIncrementalSearch}}}
\providecommand{\BPCS}{}\renewcommand{\BPCS}{\hyperref[alg:bpcs]{\textsc{BoundedCandidateSearch}}}
\providecommand{\IAGJK}{}\renewcommand{\IAGJK}{\hyperref[alg:agjk]{\textsc{InterpolatedAGJK}}}

\DeclarePairedDelimiter{\ceil}{\lceil}{\rceil}

\allowdisplaybreaks[2]

\title{Quantum-classical algorithm interpolation for spatial search}

\author[1]{Diogo Ramos}
\author[1,2]{Duarte Magano\thanks{%
  \href{mailto:duarte.magano@fc.up.pt}
       {\nolinkurl{duarte.magano@fc.up.pt}}}}

\affil[1]{%
  Centro de Física das Universidades do Minho e do Porto, LaPMET,
  Departamento de Física e Astronomia, Faculdade de Ciências,
  Universidade do Porto,
  Rua do Campo Alegre, 4169-007 Porto, Portugal}

\affil[2]{21strategies GmbH, Hallbergmoos, Germany}

\date{}
\begin{document}
\maketitle
\begin{abstract}
Quantum-walk algorithms can quadratically reduce the dependence of spatial search on the classical hitting time $\mathrm{HT}$. Realising this speedup using standard quantum-walk algorithms, however, requires deep coherent circuits, limiting their applicability on near-term and early fault-tolerant quantum devices. We derive a parameterised family of hybrid quantum-classical spatial search algorithms that continuously interpolate between the classical and fully coherent quantum regimes, trading off circuit depth for total query complexity. For any $\gamma\in[0,1]$, there is an algorithm that uses circuits of depth $\widetilde{O}(\mathrm{HT}^{\gamma/2})$ and finds a marked vertex with constant success probability at total cost $\widetilde{O}(\mathrm{HT}^{1-\gamma}\mathsf{S} + \mathrm{HT}^{1-\gamma/2}(\mathsf{U} + \mathsf{C}))$, where $\mathsf{S}$, $\mathsf{U}$, and $\mathsf{C}$ denote the setup, update, and checking costs, respectively. We also derive an optimal continuous depth-query trade-off for fixed-point amplitude amplification, providing a potential building block for other hybrid designs.
\end{abstract}

\setcounter{tocdepth}{2}
\tableofcontents
\medskip

\section{Introduction}\label{ch:intro}

Quantum computers promise efficient solutions to problems that are intractable on classical computers.
Translating these asymptotic advantages into practical gains, however, remains severely constrained by hardware limitations. Current noisy intermediate-scale quantum \cite{preskill2018quantum} and early fault-tolerant  \cite{katabarwa2024early} devices support only limited circuit depths, with noise accumulating as circuit length increases.
To accommodate these hardware constraints, many near-term quantum algorithms adopt a hybrid architecture in which short, specialised coherent quantum subroutines are embedded within a classical optimisation loop \cite{cerezo2021variational}. Prominent examples include the Variational Quantum Eigensolver \cite{peruzzo2014variational} and the Quantum Approximate Optimization Algorithm \cite{farhi2014qaoa,mcclean2016theory}.

Seeking a rigorous characterisation of the hybrid computing paradigm, Sun and Zheng introduced the hybrid query/decision-tree model \cite{sun2019hybrid}.
In this model, Boolean functions are computed by sequentially executing quantum circuits, each of which performs at most $D$ queries.
The model thus asks: given a strict cap $D$ on sequential quantum depth, how should the computational effort be divided between quantum and classical resources to minimise the time to solution?
The hybrid query model therefore initiated a programme to quantify rigorously the trade-off between quantum coherence and query complexity.

As a first example, consider unstructured search.
Grover's algorithm finds a marked element in a search space of size $N$ with $O(\sqrt{N})$ queries \cite{grover1996fast}.
Given a depth limit $D$, the natural strategy is to partition the search domain into $N/D^2$ blocks and apply Grover's algorithm to each block.
Sun and Zheng showed that this strategy is optimal, establishing a hybrid query complexity of $\Theta(N/D)$ for unstructured search \cite{sun2019hybrid}.

Another early step in the hybrid complexity programme was taken by Wang, Higgott, and Brierley~\cite{wang2019accelerated}.
Consider a randomised algorithm $\mathcal{A}$ that outputs \texttt{YES} with probability $p$ (and \texttt{NO} with probability $1-p$).
Estimating $p$ to additive error $\epsilon$ by repeated sampling requires $\Theta(1/\epsilon^2)$ samples.
Amplitude estimation \cite{brassard2002amplification} provides such an estimate using a single coherent quantum circuit that makes $O(1/\epsilon)$ queries to $\mathcal{A}$.
Wang et al.~\cite{wang2019accelerated} proposed a hybrid algorithm that, for any $\gamma \in [0,1]$, uses circuits of depth at most $D(\gamma)$ and performs a total of $T(\gamma)$ queries, where
\begin{equation}
    T(\gamma)=O\!\left(\frac{1}{\epsilon^{2-\gamma}}\right)
    \text{ and }
    D(\gamma)=O\!\left(\frac{1}{\epsilon^{\gamma}}\right).
\end{equation}
This construction provides a continuous interpolation between the classical ($\gamma = 0$) and quantum ($\gamma = 1$) limits.
We illustrate the concept in Figure~\ref{fig:interpolation_philosophy}.
The result was later proved rigorously in Ref.~\cite{giurgica2022qjournal}, with alternative derivations given in Refs.~\cite{magano2022simplifying,Rall2023amplitudeestimation,vu2025lowdepth}.
Ambainis, Gily\'en, and Kokainis~\cite{ambainis2026tight} recently established its optimality.

\begin{figure*}
  \centering
  \includegraphics[width=\textwidth]{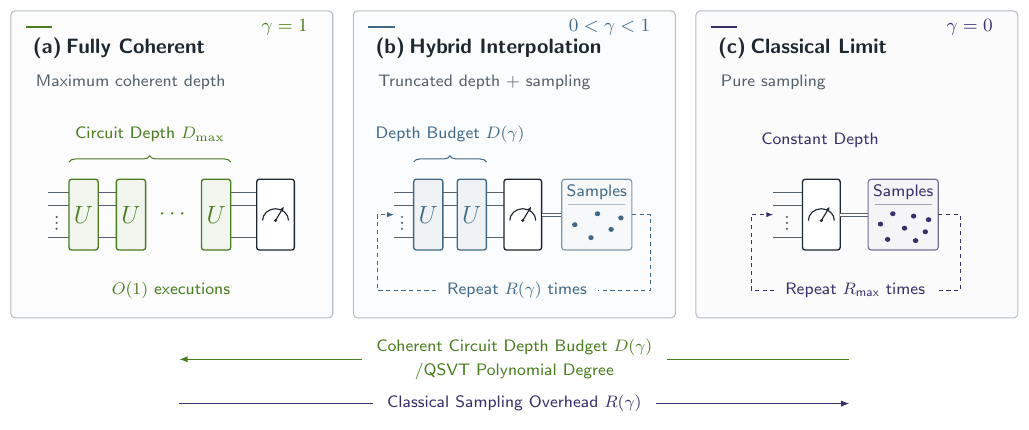}
  \caption{The framework of quantum-classical interpolation. The parameter $\gamma \in [0, 1]$ smoothly trades coherent circuit depth $D(\gamma)$ for classical repetitions $R(\gamma)$, continuously connecting deep unitary evolution ($\gamma = 1$) to classical sampling ($\gamma = 0$).}
  \label{fig:interpolation_philosophy}
\end{figure*}

Beyond these two problems, relatively few continuous quantum-classical interpolations are known.
Examples include phase estimation (with an upper bound related to amplitude estimation and a lower bound from Ref.~\cite{ambainis2026tight}),
AND-OR trees (with an upper bound from Ref.~\cite{murca2024making} and a lower bound for two-level trees from Ref.~\cite{ambainis2026tight}), and element distinctness (with an upper bound from Ref.~\cite{ambainis2026tight}).
In this work, we develop such an interpolation for \emph{spatial search}.

Grover's algorithm assumes an unstructured search space, which does not accurately represent many search problems.
For instance, molecular configurations may be connected by transitions that transform one into another, while web pages are connected by hyperlinks.
In such settings, the data form a structured space, and accessing one item from another depends on the connections between them.
This abstraction is naturally modelled as a graph.
Let $G=(X,E)$ be a connected undirected graph on $N:=|X|$ vertices and let $M\subseteq X$ be a non-empty set of marked vertices. Suppose that a marked vertex can be recognised only when it is visited and that movement along the graph is restricted to its edges.
The task, known as spatial search, is to output an element of $M$ while respecting these locality constraints. Unstructured search is recovered when $G$ is the complete graph, in which every vertex is directly accessible from every other vertex.

Classically, such settings are often tackled using random-walk algorithms because a walker on a connected graph eventually visits every vertex. Let $P$ denote the transition matrix of a random walk on $G$. A walker samples a starting vertex, checks whether it is marked, and, if not, moves to one of its neighbours according to $P$; this process is repeated until a marked vertex is found. The expected number of steps required to reach $M$ is the hitting time $\HT(P,M)$, which depends on the topology of the underlying graph.

Random-walk algorithms involve three distinct operations, whose costs are usefully accounted for separately \cite{magniez2011search,kmor,ambainisqff,apers2021unified}.
Preparing the initial sample incurs a cost $\mathsf S$, advancing the walk by one step, a cost $\mathsf U$, and determining whether the current vertex is marked, a cost $\mathsf C$. These notions are rigorously defined in Section~\ref{sec:costmodel}. In these terms, the cost of finding a
marked vertex by classical random walk search is of order
\begin{equation}\label{eq:intro-classical}
\mathsf S+\HT(P,M)\,(\mathsf U+\mathsf C).
\end{equation}

Advances in the theory of quantum walks and quantum search have established that replacing a classical random walk with coherent quantum-walk dynamics can yield a quadratic reduction in the number of steps required \cite{Sze04a, magniez2012hitting,ambainisqff,kmor,apers2021unified,piddock2019,apers2022ctqw}.
This extends Grover's quadratic speedup to spatial search, reducing the dependence on the classical hitting time from $\HT(P,M)$ to $\sqrt{\HT(P,M)}$, up to logarithmic factors.

Using the quantum-walk schemes of Refs.~\cite{kmor,ambainisqff}, we develop hybrid quantum-classical algorithms for spatial search that interpolate continuously between the classical hitting-time dependence $\HT$ and the quantum scaling $\sqrt{\HT}$.
For $\gamma\in[0,1]$, we obtain algorithms with quantum-walk depth
\begin{equation}
    \widetilde O\left(\HT^{\gamma/2}\right)
\end{equation}
that find a marked vertex with constant success probability and hybrid complexity
\begin{equation}\label{eq:intro-interpolation}
    \widetilde O\!\left(
        \HT^{1-\gamma}\cdot\mathsf S
        \;+\;
        \HT^{1-\gamma/2}\cdot(\mathsf U+\mathsf C)
    \right),
\end{equation}
for any number of marked vertices. On the complete graph, this trade-off matches the hybrid lower bound for unstructured search and therefore cannot be improved in general.

Our interpolation scheme requires only suitable parameter choices and modest extensions of previous analyses, preserving the simplicity of the original constructions. To this end, we recast the algorithm of Ref.~\cite{kmor} in terms of quantum fast-forwarding~\cite{ApersSarlette2019}, showing that it can arise from the same primitive as that of Ref.~\cite{ambainisqff}.

Crucially, by separating setup, update, and checking costs, we reveal how the cost of initial-state preparation affects the utility of shallow quantum circuits, exposing a structural bottleneck absent from standard query-complexity models. In these algorithms, reducing the coherent depth increases the number of independent runs, each requiring a fresh invocation of the initial-state preparation routine. The resulting trade-off therefore depends not only on the available depth but also on the relative costs of preparing the initial state and advancing the walk by one step.

As a preliminary step, we begin with a continuous interpolation for fixed-point amplitude amplification \cite{PhysRevLett.113.210501} that, to the best of our knowledge, has not previously been formalised explicitly in the literature.

\newpage
\section{Interpolated fixed-point amplitude amplification}
\label{sec:interp-aa}

Before analysing spatial search, we first consider the simpler problem of trading coherent depth for sampling overhead in fixed-point amplitude amplification. Here, preparing the initial state requires only a single query. Comparing this setting with the quantum walk cost model, in which initial-state preparation is accounted for separately, clarifies how locality constraints affect the limited-depth regime.

\begin{figure*}[t]
  \centering
  \includegraphics[width=\textwidth]{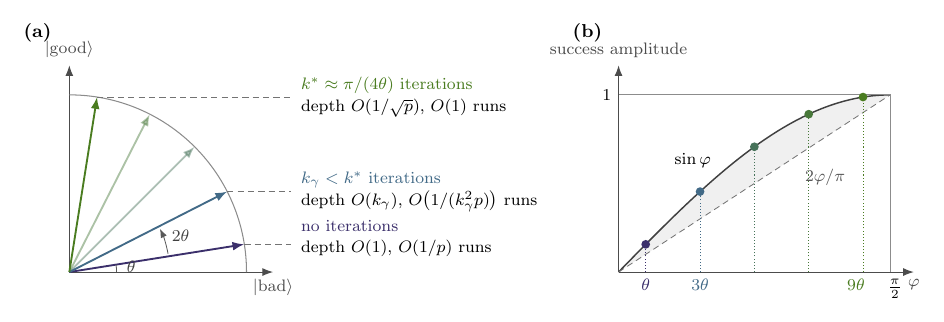}
  \caption{Interpolated amplitude amplification in its original formulation, with $\sin\theta=\sqrt p$.
  (a)~Each Grover iteration rotates the state by $2\theta$ towards $\ket{\mathrm{good}}$. Stopping after $k_\gamma<k^*$ iterations reduces the depth, and the resulting reduction in success probability is compensated for by repeating the procedure.
  (b)~Success amplitude $\sin\varphi$ after $k$ iterations, where $\varphi=(2k+1)\theta$. For $\varphi\le\pi/2$, concavity ensures the amplitude grows at least linearly with $k$ and the number of runs decreases quadratically with the depth.}
  \label{fig:aa-rotation}
\end{figure*}

Consider a unitary $A$ acting on $m$ qubits such that
\begin{equation}\label{eq:aa-split}
A\ket{0^m}=\sqrt p\,\ket{\mathrm{good}}+\sqrt{1-p}\,\ket{\mathrm{bad}},
\end{equation}
where $\ket{\mathrm{good}}$ and $\ket{\mathrm{bad}}$ are normalised states in orthogonal ``good'' and ``bad'' subspaces. Write $\sqrt p=\sin\theta$. Let $\Pi := \ket{0^m}\bra{0^m}$ and let $\Pi'$ denote the projector onto the good subspace. Assume access to $A$, $A^\dagger$, and the reflections $R_\Pi := I - 2\Pi$ and $R_{\Pi'} := I - 2\Pi'$.
The task is to prepare a state in the good subspace with failure probability at most $\zeta\in(0,1/2)$. The query depth $D$ is the maximum number of calls to $A$, $A^\dagger$, and $R_{\Pi'}$ in a single coherent run, whereas the total query complexity counts these calls across all runs.

In the original formulation of amplitude amplification~\cite{brassard2002amplification}, each application of the Grover operator $G:=-AR_{\Pi}A^\dagger R_{\Pi'}$ rotates the state by $2\theta$ towards $\ket{\mathrm{good}}$, so that $G^kA\ket{0^m}$ has amplitude $\sin((2k+1)\theta)$ on the state $\ket{\mathrm{good}}$. If $p$ is known, the interpolation is immediate, as illustrated in Figure~\ref{fig:aa-rotation}. Instead of the $k^*\approx\pi/(4\theta)$ iterations of the full-depth algorithm, we perform only $k_\gamma=\Theta(\theta^{-\gamma})$ iterations, while keeping $(2k_\gamma+1)\theta\le\pi/2$. Then by Jordan's inequality each run succeeds with probability at least $\tfrac{4}{\pi^2}(2k_\gamma+1)^2p=\Omega(p^{1-\gamma})$ at depth $O(p^{-\gamma/2})$. Repeating the procedure $O(p^{\gamma-1})$ times gives constant success probability with a total of $O(p^{\gamma/2-1})$ queries.

Without knowledge of $p$, a fixed number of iterations may overshoot the good subspace, a phenomenon commonly known as the \emph{soufflé} problem.
This is addressed by fixed-point amplitude amplification~\cite{PhysRevLett.113.210501}.
Suppose now that the initial success probability $p\in(0,1)$ is unknown but that a lower bound $0<\delta\le\sqrt p$ is known.
For any real odd polynomial \(P\) of degree \(d\) bounded by one on \([-1,1]\), we can implement the quantum singular value transformation (QSVT) \(\sqrt p \rightarrow P(\sqrt p)\) \cite{Gily_n_2019}. The corresponding circuit uses \(O(d)\) calls to $A$, $A^{\dagger}$, controlled-$R_\Pi$, and controlled-$R_{\Pi'}$, as well as single-qubit rotation gates.
Since the compression
\begin{equation}\label{eq:aa-compression}
\Pi' A\Pi=\sqrt p\,\ket{\mathrm{good}}\bra{0^m}\,,
\end{equation}
encodes the success amplitude $\sqrt p$ in a block of $A$, the challenge is to find an odd polynomial for which the amplitude $P(\sqrt p)$ is close to $1$ for $\sqrt p\in[\delta,1]$. The following standard sign-approximation result provides such a polynomial.
For a constant approximation error, we recover the familiar $O(1/\delta)$ query depth for constant success probability.
\newpage
\begin{lemma}[{\cite[Lemma~3]{lin2020near}}]
\label{l:sign}
Let $\delta,\epsilon\in(0,1)$. There is an odd real polynomial $P_{\Re}$ of degree $O\!\left(1+\delta^{-1}\log(1/\epsilon)\right)$ satisfying
\[
\begin{aligned}
|P_{\Re}(x)|&\le1
&&\text{for all }x\in[-1,1],\\
|P_{\Re}(x)-\operatorname{sign}(x)|&\le\epsilon
&&\text{for all }x\in[-1,-\delta]\cup[\delta,1].
\end{aligned}
\]
\end{lemma}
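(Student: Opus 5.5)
The plan is the standard route: smooth the sign function by an error function, approximate that entire function by a truncated Chebyshev expansion, and rescale to enforce the uniform bound. Concretely, I would first approximate $\operatorname{sign}(x)$ on $[-1,-\delta]\cup[\delta,1]$ by $\operatorname{erf}(kx)$ with $k=\Theta(\delta^{-1}\sqrt{\log(1/\epsilon)})$. This function is odd, is bounded by $1$ on $\mathbb{R}$, and the Gaussian tail bound $1-\operatorname{erf}(t)\le e^{-t^2}$ gives $|\operatorname{erf}(kx)-\operatorname{sign}(x)|\le\epsilon/4$ whenever $|x|\ge\delta$.

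Next I would approximate $\operatorname{erf}(kx)$ on $[-1,1]$ by a polynomial. Write $\operatorname{erf}(kx)=\frac{2}{\sqrt\pi}\int_0^{kx}e^{-t^2}\,dt$. The Jacobi--Anger-type expansion of $e^{-k^2x^2}$ in Chebyshev polynomials has coefficients controlled by modified Bessel functions $I_j(k^2/2)e^{-k^2/2}$. These decay like $e^{-j^2/k^2}$ once $j\gtrsim k^2$, and more usefully a degree of $O(\sqrt{(k^2+\log(1/\epsilon))\log(1/\epsilon)})$ suffices. Integrating term by term gives an odd polynomial $Q$ of degree $n=O(k\sqrt{\log(1/\epsilon)}+\log(1/\epsilon))=O(1+\delta^{-1}\log(1/\epsilon))$ with $\sup_{[-1,1]}|Q(x)-\operatorname{erf}(kx)|\le\epsilon/4$. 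Oddness is automatic, because the Gaussian is even, so only even Chebyshev terms appear and their integrals from $0$ are odd. Alternatively, one can truncate the expansion of $e^{-t^2}$ directly and use the standard bound on the tail of the exponential's Chebyshev series.

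Finally I would set $P_{\Re}:=Q/(1+\epsilon/4)$. Then $|P_{\Re}|\le1$ on $[-1,1]$, since $|Q|\le1+\epsilon/4$ there. On $|x|\ge\delta$, the triangle inequality gives $|P_{\Re}-\operatorname{sign}|\le\epsilon/4+\epsilon/4+\epsilon/4<\epsilon$, where the last term accounts for the rescaling. For $\delta$ or $\epsilon$ near constant, the additive $1$ in the degree bound absorbs edge cases.

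The main obstacle is the second step: the quantitative Chebyshev-tail estimate giving degree linear in $k\sqrt{\log(1/\epsilon)}$ rather than $k^2$. I would first try the known bound $\sum_{j>n}|I_j(\beta)|e^{-\beta}\lesssim e^{-n^2/(2\beta)}$ for $n$ up to order $\beta$, with $\beta=k^2/2$. Failing that, I would use the Bernstein-ellipse argument: $e^{-k^2z^2}$ is bounded by $e^{O(k^2\rho^2)}$ on the ellipse $E_{1+\rho}$, so the degree-$n$ error is at most $e^{O(k^2\rho^2)-n\rho}$, and optimising over $\rho$ yields the desired $n=O(k\sqrt{\log(1/\epsilon)})$.
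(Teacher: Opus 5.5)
There is a genuine gap, in the regime $\epsilon\to1$. Your route is the standard erf-based one; the paper itself gives no proof beyond the citation. It is correct whenever $\epsilon\le\epsilon_0$ for a fixed $\epsilon_0<1$, because then $\log(4/\epsilon)=O(\log(1/\epsilon))$.

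It does not prove the statement as written when $\epsilon\to1$. There $\log(1/\epsilon)\approx1-\epsilon\to0$, so the claimed degree $O(1+\delta^{-1}\log(1/\epsilon))$ can be far below $1/\delta$. This is exactly the regime the paper uses in Theorem~\ref{thm:interp-aa}, where $\epsilon=1-\tfrac12\delta^{1-\gamma}$ is supposed to yield degree $O(\delta^{-\gamma})$.

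Your Step~1 already fails there. With $k=C\delta^{-1}\sqrt{\ln(1/\epsilon)}$ you get $e^{-k^2\delta^2}=\epsilon^{C^2}$, which is not $\le\epsilon/4$ as $\epsilon\to1$. You actually need $k\ge\delta^{-1}\sqrt{\ln(4/\epsilon)}\ge\delta^{-1}\sqrt{\ln 4}$, and the final degree is $\Theta(1/\delta)$. The additive $1$ does not absorb this: for $\epsilon=1-\delta^2$ the statement demands degree $O(1)$, and $P(x)=x$ achieves it.

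Nor can constants be tuned. Any budget $|P_{\Re}-\operatorname{sign}|\le c\,\epsilon$ on $[\delta,1]$ with $c<1$ forces $P_{\Re}(\delta)\ge1-c$. Bernstein's inequality $|P'(x)|\le d/\sqrt{1-x^2}$ gives $|P(\delta)|\le d\arcsin\delta$ for any odd degree-$d$ polynomial bounded by $1$, so $d=\Omega(1/\delta)$. Plugged into Theorem~\ref{thm:interp-aa}, your lemma would give depth $\Theta(1/\delta)$ for every $\gamma$.

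The missing idea is that near $\epsilon=1$ you should not approximate the sign function at all. Set $\eta:=1-\epsilon$; then $\log(1/\epsilon)\ge\eta$. Since $|P|\le1$, the requirement on $[\delta,1]$ is only $P\ge\eta$, and oddness takes care of $[-1,-\delta]$. So you need an odd bounded polynomial of degree $O(1+\eta/\delta)$ that grows linearly near $0$.

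If $\eta\le\delta$, $P(x)=x$ works. Otherwise, let $F_n$ be the Fej\'er kernel and set $K(\cos\theta):=F_n(\theta-\tfrac{\pi}{2})+F_n(\theta+\tfrac{\pi}{2})$. This is a nonnegative even polynomial of degree at most $n-1$, with $\int_0^1K\le2\pi$ and $K\ge4n/\pi^2$ on $[0,2/n]$. Then $P(x):=\int_0^xK(y)\,dy\big/\int_0^1K(y)\,dy$ is odd, monotone, bounded by $1$, and satisfies $P(x)\ge\tfrac{2}{\pi^3}\min(1,nx)$. Hence $n=\lceil\pi^3\eta/(2\delta)\rceil$ works for all $\eta\le2/\pi^3$, and larger $\eta$ is covered by your argument. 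Alternatively, your erf polynomial with $k\asymp\eta/\delta$ and constant accuracy also works, provided you use a relative tail bound near $0$ via $|T_{2j+1}(x)|\le(2j+1)|x|$.

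There is also a smaller point in Step~2. Since $\operatorname{erf}(kx)=\tfrac{2k}{\sqrt\pi}\int_0^xe^{-k^2y^2}\,dy$, a uniform error on the Gaussian is multiplied by $k$. You must use $\bigl|\int_0^xT_{2j}\bigr|\le1/(2j-1)$ on the truncated tail to cancel that factor. Otherwise you pay $\log(k/\epsilon)$, i.e.\ an extra $\log(1/\delta)$ under the square root.

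Your Bernstein-ellipse fallback has a similar defect. The omitted prefactor $2/(r-1)\approx4k^2/n$ means it only yields $n=O(k\sqrt{\log(k/\epsilon)})$. For constant $\epsilon$ this is $\delta^{-1}\sqrt{\log(1/\delta)}$ rather than $O(1/\delta)$.
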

\noindent

To derive an interpolating regime, we follow the approach of Refs.~\cite{magano2022simplifying,vu2025lowdepth}, which analyse amplitude estimation through the lens of QSVT.
In this view, query depth corresponds to the degree of the polynomial transformation.
A deeper circuit yields a more accurate polynomial approximation to the target function, thereby increasing the statistical information extractable from each measurement and reducing the number of samples required.
For fixed-point amplitude amplification, shorter circuits implement less accurate approximations to the sign function, corresponding to larger values of the approximation error $\epsilon$.
The following theorem makes this idea explicit.

\begin{figure}[t]
\centering
\includegraphics[width=0.72\textwidth]{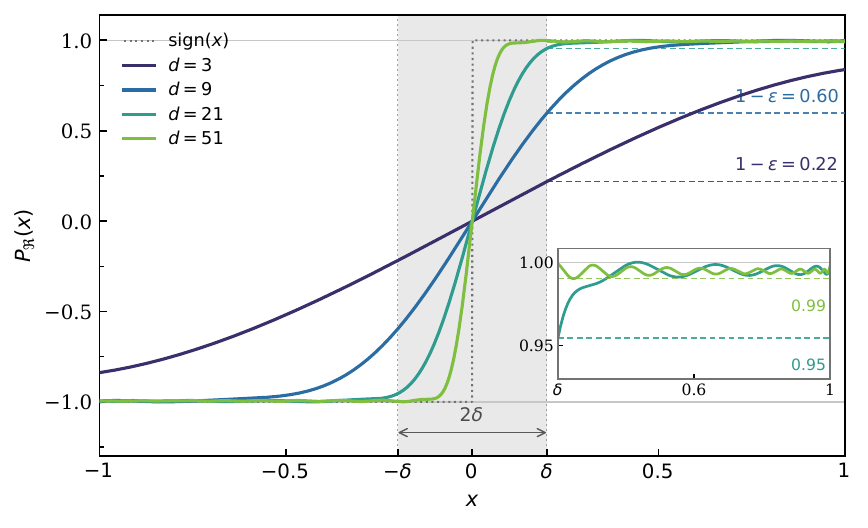}
\caption{Illustration of the degree--accuracy trade-off for odd polynomials $P_{\Re}$ approximating $\operatorname{sign}(x)$ at fixed resolution $\delta=0.2$. The displayed errors on $[\delta,1]$ range from $\epsilon\approx0.78$ at $d=3$ to $\epsilon\approx0.01$ at $d=51$. The horizontal dashed lines indicate the corresponding amplitude levels $1-\epsilon$. A lower bound of $1-\epsilon$ throughout $[\delta,1]$ gives success probability at least $(1-\epsilon)^2$ per attempt, relating the polynomial approximation directly to the required repetitions.}
\label{fig:sign-approx}
\end{figure}

\begin{theorem}
\label{thm:interp-aa}
Given a known lower bound $0<\delta\le\sqrt p$ and access to $A$, $A^\dagger$, and $R_{\Pi'}$, for any target failure probability $\zeta\in(0,1/2)$ and any $\gamma\in[0,1]$, there is a hybrid quantum-classical algorithm with query depth
\begin{equation}\label{eq:aa-depth}
D(\gamma)=O\!\left(\frac{1}{\delta^\gamma}\right)
\end{equation}
that outputs a state in the good subspace with probability at least $1-\zeta$ and has total query complexity
\begin{equation}\label{eq:aa-total}
T(\gamma)=O\!\left(\frac{\log(1/\zeta)}{\delta^{2-\gamma}}\right).
\end{equation}
\end{theorem}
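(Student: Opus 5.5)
The plan is to use QSVT with the sign-approximating polynomial of Lemma~\ref{l:sign}, but at a coarse resolution tuned to the allowed depth. The accuracy parameter $\epsilon$ is then chosen so that each run succeeds with probability $\Omega(\delta^{2-2\gamma})$. The natural choice is to apply Lemma~\ref{l:sign} with resolution $\delta' := \delta^{\gamma}$, which satisfies $\delta'\ge\delta$ for $\gamma\le 1$, and constant error, say $\epsilon=1/2$. This gives degree $O(\delta^{-\gamma})$, matching \eqref{eq:aa-depth}. However, $P_{\Re}$ is then only guaranteed to be near $1$ on $[\delta^\gamma,1]$, while $\sqrt p$ may lie in $[\delta,\delta^\gamma)$. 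There we need a lower bound of order $\sqrt p/\delta^\gamma$, i.e.\ linear growth of the polynomial near the origin, as in the Jordan-inequality argument for the known-$p$ case.

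\textbf{Step 1 (polynomial).} I will construct an odd polynomial $Q$ of degree $d=O(\delta^{-\gamma})$, bounded by $1$ on $[-1,1]$, such that $|Q(x)|\ge c\min(1,|x|\,d)$ for $x\in[\delta,1]$. The cleanest route is probably not Lemma~\ref{l:sign} as a black box. Instead I would take a fixed-point style polynomial, for instance the QSVT polynomial $T_d$-based construction of \cite{PhysRevLett.113.210501}, or $P_{\Re}$ composed with a rescaling. The alternative is to use the integral-of-Gaussian construction underlying Lemma~\ref{l:sign}, namely a truncated expansion of $\operatorname{erf}(kx)$ with $k\sim\delta'^{-1}$. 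Its derivative at $0$ is $\Theta(k)$, and $\operatorname{erf}$ is concave on $[0,\infty)$, so $\operatorname{erf}(kx)\ge c\min(1,kx)$. Choosing the truncation error $\epsilon$ as a small constant keeps the degree at $O(k)$. This yields $|Q(\sqrt p)|\ge c'\min(1,\sqrt p\,\delta^{-\gamma})$ for $\sqrt p\ge\delta$, after adjusting constants so that the truncation error does not swamp the linear regime. Verifying this lower bound near $x=\delta$, where $\operatorname{erf}(kx)\approx k\delta=\delta^{1-\gamma}$ could be smaller than a constant $\epsilon$, is the main obstacle. It forces either $\epsilon\lesssim\delta^{1-\gamma}$, costing only a $\log(1/\delta)$ factor in depth, or a more careful argument.

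To avoid that log, my first attempt would instead use the exact Grover/Chebyshev structure. The polynomial $x\mapsto (-1)^kT_{2k+1}(x)$, which implements $G^k A$, has amplitude $\sin((2k+1)\theta)$. For unknown $p$, I would choose $k$ uniformly at random in $\{0,\dots,K\}$ with $K=\Theta(\delta^{-\gamma})$. The standard averaging lemma of Boyer--Brassard--H{\o}yer--Tapp then gives an average success probability of at least $\min(1/4,\ \Omega(K^2p))$, because the averaged $\sin^2$ is at least $\tfrac12-\tfrac{\sin(4(K+1)\theta)}{4(K+1)\sin 2\theta}$. This bound is $\Omega(1)$ when $K\theta\gtrsim1$ and $\Omega(K^2\theta^2)$ otherwise. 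Either way, the success probability per run is $q=\Omega(\min(1,\delta^{2-2\gamma}))=\Omega(\delta^{2-2\gamma})$.

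\textbf{Step 2 (repetition).} Since membership in the good subspace can be checked with one call to $R_{\Pi'}$, or by measuring the flag, I repeat $R=O(q^{-1}\log(1/\zeta))=O(\delta^{2\gamma-2}\log(1/\zeta))$ independent runs and output the first success. The failure probability is at most $(1-q)^R\le\zeta$. The total cost is $R\cdot O(\delta^{-\gamma})=O(\log(1/\zeta)\,\delta^{\gamma-2})$, which is \eqref{eq:aa-total}, while the depth stays at $O(\delta^{-\gamma})$, which is \eqref{eq:aa-depth}. The endpoints check out: $\gamma=0$ gives classical sampling with $O(\delta^{-2})$ queries, and $\gamma=1$ gives $O(1/\delta)$ depth with $O(\log(1/\zeta))$ repetitions.
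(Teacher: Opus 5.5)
Your proof is correct, but the route you actually use in Step~2 differs from the paper's.

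\textbf{The paper's route.} The paper stays with Lemma~\ref{l:sign} at the original resolution $\delta$ and pushes the error the other way. It takes $\epsilon = 1-\tfrac12\delta^{1-\gamma}$, so that $\log(1/\epsilon)\le\delta^{1-\gamma}$ and the degree bound $O(1+\delta^{-1}\log(1/\epsilon))$ becomes $O(\delta^{-\gamma})$. The guarantee $|P_{\Re}-1|\le\epsilon$ on $[\delta,1]$ then becomes a uniform amplitude floor $P_{\Re}(\sqrt p)\ge\tfrac12\delta^{1-\gamma}$. Repeating $\lceil 4\log(1/\zeta)\,\delta^{-2(1-\gamma)}\rceil$ times finishes the proof. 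This sidesteps the obstacle you found in Step~1. No linear growth near the origin is needed, only a floor matching the worst case $\sqrt p=\delta$. No $\log(1/\delta)$ appears, because the error is taken close to one rather than small.

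\textbf{Your route.} You randomise the number of Grover iterations, $k\in\{0,\dots,K\}$, and use the Boyer--Brassard--H{\o}yer--Tapp average. This is valid, but you should spell out all three regimes.
\begin{itemize}
\item For $\theta\ge\pi/4$, substitute $\phi=\pi/2-\theta$, so the average is $\tfrac12+\tfrac{\sin(4(K+1)\phi)}{4(K+1)\sin 2\phi}$. The correction term can be negative only when $2\phi>\pi/(2(K+1))$, and then $\sin 2\phi>1/(K+1)$ gives it magnitude below $1/4$. So the average stays above $1/4$, even though the condition $K+1\ge 1/\sin 2\theta$ may fail there.
\item For $\theta<\pi/4$ with $K+1\ge 1/\sin 2\theta$, the averaging lemma gives at least $1/4$ directly.
\item For $\theta<\pi/4$ with $K+1<1/\sin 2\theta$, all angles $(2k+1)\theta$ are below $\pi/2$. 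Jordan's inequality applied termwise then gives $\Omega(K^2\theta^2)$.
\end{itemize}

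\textbf{What each approach buys.} The paper uses a single deterministic circuit per run that works uniformly for every $\sqrt p\ge\delta$. It is genuinely an interpolation of fixed-point amplification and is phrased in the QSVT degree language used throughout the paper. However, it uses Lemma~\ref{l:sign} in the regime $\epsilon\to1$, where one needs the implicit constant in the degree bound to remain uniform. Your argument is elementary and self-contained with explicit constants. Its per-run success probability $\Omega(\min(1,K^2p))$ also adapts to the true $p$, so with early stopping the expected cost improves when $p\gg\delta^2$. The price is that the guarantee holds only on average over a randomly chosen circuit. It is therefore an interpolation of randomised BBHT-style search rather than of fixed-point amplification, though that is all the theorem as stated requires.
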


\begin{proof}
Fix some $\gamma\in[0,1]$. We invoke Lemma~\ref{l:sign} to block-encode a polynomial approximation of the sign function with
\begin{equation}\label{eq:aa-eps}
\epsilon=1-\tfrac12\delta^{\,1-\gamma}\in[1/2,1).
\end{equation}
See Figure~\ref{fig:sign-approx} for an illustration of the underlying transformation.
Since $x\le-\log(1-x)\le2x$ for $0<x\le1/2$, the polynomial $P_{\Re}$ has degree
\begin{equation}\label{eq:aa-depth-derivation}
D(\gamma)
=O\!\left(1+\frac1\delta\log\frac1{1-\tfrac12\delta^{\,1-\gamma}}\right)
=O\!\left(\delta^{-\gamma}\right),
\end{equation}
matching the query depth in \eqref{eq:aa-depth}.
For every $\sqrt p\ge\delta$, the polynomial guarantee gives a success amplitude
\begin{equation}\label{eq:aa-Pdelta}
P_{\Re}(\sqrt p)\ge1-\epsilon=\tfrac12\delta^{\,1-\gamma}.
\end{equation}
As all runs are independent, requiring the probability of failing $R(\gamma)$ consecutive times to be at most $\zeta$ gives
\begin{equation}\label{eq:aa-R-derivation}
R(\gamma)\ge\frac{\log(1/\zeta)}{-\log(1-\tfrac14\delta^{\,2(1-\gamma)})}.
\end{equation}
By the same bound on the logarithm, it is then sufficient to take
\begin{equation}\label{eq:aa-R}
R(\gamma)=\left\lceil\frac{4\log(1/\zeta)}{\delta^{\,2(1-\gamma)}}\right\rceil,
\end{equation}
and the total number of queries is therefore
\begin{equation}\label{eq:aa-total-derivation}
T(\gamma)=O\!\left(R(\gamma)D(\gamma)\right)
=O\!\left(\frac{\log(1/\zeta)}{\delta^{\,2-\gamma}}\right).\qedhere
\end{equation}
\end{proof}
At $\gamma=0$, the constant-depth circuit requires $O(\log(1/\zeta)/\delta^2)$ independent repetitions, matching the Bernoulli sampling bound for detecting an event of probability $\delta^2$. At $\gamma=1$, each attempt has depth $O(1/\delta)$ and succeeds with constant probability, requiring only $O(\log(1/\zeta))$ repetitions. Taking $\zeta$ to be a fixed constant recovers the usual sampling and coherent query scalings.
Moreover, the upper bound
\begin{equation}
T(\gamma)D(\gamma)=O(\log(1/\zeta)/\delta^{2})
\end{equation}
is independent of $\gamma$, so moving along the family only shifts effort from longer coherent runs to more repetitions. This trade-off is optimal in general. Indeed, unstructured search is recovered by taking $A=H^{\otimes m}$ and $\delta=1/\sqrt N$. In this case, finding a marked element with runs of depth $D$ provably requires $\Omega(\sqrt N+N/D)$ queries~\cite{sun2019hybrid,hamoudi2024nisq,ambainis2026tight}, matching Theorem~\ref{thm:interp-aa}.

The same picture does not necessarily carry over to spatial search. In amplitude amplification, the dynamics decomposes as a rotation in a two-dimensional subspace, so a truncated run immediately retains a success amplitude that grows linearly with its depth. A quantum walk, by contrast, evolves over its entire spectrum. Moreover, unlike a classical random walk, which may check the current vertex after every step and continue if it is unmarked, a quantum walk must remain coherent throughout a full run and therefore cannot be measured and then advanced. Whether a truncated walk of depth $D$ still succeeds with probability of order $D^2/\HT$ is therefore entirely algorithm-dependent. Finally, just as Grover's algorithm begins each run by preparing the uniform superposition with $H^{\otimes m}$ before applying the Grover iterate, every run of a walk-based search begins with initial-state preparation. In the quantum-walk cost model, however, this preparation cost is accounted for explicitly.

\section{Main results}\label{sec:results}
The following results concern interpolations of quantum walk search, which connect the classical random walk step scalings at $\gamma=0$ with the fully coherent quantum bounds at $\gamma=1$. They are built on two different quantum walk search algorithms, leading to two families with complementary assumptions and guarantees.

The first approach builds upon the algorithm of Krovi, Magniez, Ozols and Roland~\cite{kmor}, which we henceforth refer to as the KMOR algorithm. In its original formulation, it starts from an unmarked component of the graph's vertices and projects onto a state with substantial overlap with the marked elements, mirroring the two-dimensional rotation in amplitude amplification. By recasting this projection via quantum fast-forwarding~\cite{ApersSarlette2019}, which simulates $t$ steps of the walk in $O(\sqrt{t})$ coherent queries, we can precisely quantify how truncating the evolution reduces overlap with the target state, and how many classical repetitions are needed to compensate for the diminished success probability.

\begin{theorem}\label{thm:cq_interp}
	Let $P$ be an ergodic, reversible Markov chain on $G = (X,E)$ and let
	$M \subseteq X$ contain a single marked vertex with probability of measuring it from the stationary distribution $p_M \leq \tfrac{1}{2}$. If the values of $p_M$ and
	$\HT:=\HT(P,M)$ are known, for any $\gamma \in [0,1]$, there exists
	a quantum walk-based algorithm with quantum-walk depth
	\begin{equation}\label{summary:eq:Tgamma}
		D(\gamma) \;=\;O\left(\HT^{\gamma/2}\right)
	\end{equation}
	which finds a marked vertex with constant success probability
	and hybrid complexity of order at most
	\begin{equation}\label{summary:eq:complexity_gamma}
		\HT^{1-\gamma} \cdot \mathsf{S}
		\;+\; \HT^{1-\gamma/2} \cdot (\mathsf{U} + \mathsf{C}).
	\end{equation}
\end{theorem}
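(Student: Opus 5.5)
We plan to recast the KMOR algorithm through quantum fast-forwarding and then truncate it, in the same spirit as the proof of Theorem~\ref{thm:interp-aa}. Let $P'$ be the absorbing walk, obtained from $P$ by making the marked vertex $m$ absorbing. Write $\ket{\pi_U}$ for the normalised restriction of $\sqrt{\pi}$ to the unmarked vertices. It can be obtained from $\ket{\sqrt\pi}$, at cost $\mathsf S+\mathsf C$, by checking and postselecting on ``unmarked''; this succeeds with probability $1-p_M\ge 1/2$. Following KMOR, we use the interpolated walk $P(s)=(1-s)P+sP'$, with $s$ chosen close to $1-p_M/(1-p_M)$, which is possible because $p_M$ is known. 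The key structural fact we take from KMOR's spectral analysis is that $\ket{\pi_U}$ decomposes in the eigenbasis of the discriminant $D(s)$ with all its weight concentrated near eigenvalue $1$, apart from a component that, after $t$ steps, evolves towards the marked vertex. Equivalently, the quantity $\|\Pi_M D(s)^t\ket{\pi_U}\|^2$ grows with $t$ and becomes $\Omega(1)$ once $t=\Theta(\HT)$.

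\textbf{Step 1 (fast-forwarded truncated walk).} Quantum fast-forwarding~\cite{ApersSarlette2019} prepares a block-encoding of $D(s)^t\ket{\pi_U}$ using $O(\sqrt{t\log(1/\epsilon)})$ walk steps, together with a successful flag whose amplitude is $\|D(s)^t\ket{\pi_U}\|$. We set $t=\HT^{\gamma}$, so that the coherent depth is $D(\gamma)=O(\HT^{\gamma/2})$ walk steps, counting both update and check operations, since $D(s)$ involves checking.

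\textbf{Step 2 (short-time success bound).} This is the main obstacle. We must show that measuring after $t$ steps yields $m$ with probability $\Omega(t/\HT)$ for $t\le\HT$. Our first attempt is a direct classical-type estimate. For the absorbing chain started from $\pi_U$, the probability of being absorbed within $t$ steps is at least $t/(c\,\HT)$ for $t\le\HT$; this follows because the survival probability is $\sum_k |\braket{v_k|\pi_U}|^2\lambda_k^{t}$ and $\sum_k |\braket{v_k|\pi_U}|^2/(1-\lambda_k)=\HT$. The remaining task is to relate this to the amplitude on $m$ under the fast-forwarded $D(s)^t$. For this we use the KMOR identity expressing $\Pi_M D(s)^t\ket{\pi_U}$ through the spectral measure of $\ket{\pi_U}$ with respect to $D(P')$. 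We then apply the inequality $1-\lambda^{t}\ge \min(1,t(1-\lambda))/e$ over the spectrum, which gives $\Omega(t/\HT)$ after averaging the step count $t'$ uniformly over $[0,t]$ (one random choice per run). Averaging over $t'$ is the device we would try first to avoid oscillations, analogous to keeping $(2k+1)\theta\le\pi/2$ in the Grover case.

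\textbf{Step 3 (repetitions and cost accounting).} Each run costs $O(\mathsf S+\mathsf C)$ for preparation plus $O(\HT^{\gamma/2}(\mathsf U+\mathsf C))$ for the walk, and succeeds with probability $\Omega(\HT^{\gamma-1})$. This success probability is the product of three factors: the $\Omega(1)$ postselection, the $\Omega(t/\HT)$ from Step~2, and the success of the final check, which certifies the output. Hence $R(\gamma)=O(\HT^{1-\gamma})$ independent repetitions give constant success probability, exactly as in \eqref{eq:aa-R}. Multiplying gives the total $\HT^{1-\gamma}\mathsf S+\HT^{1-\gamma}\cdot\HT^{\gamma/2}(\mathsf U+\mathsf C)$, which is \eqref{summary:eq:complexity_gamma}. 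At $\gamma=1$ this recovers KMOR and at $\gamma=0$ it recovers classical sampling. We expect the constant-error fast-forwarding to affect only constants, since the success probability per run need only be accurate up to an additive error that is small relative to $t/\HT$. This requires $\epsilon=O(\HT^{\gamma-1})$, which costs only a $\sqrt{\log \HT}$ factor, so the bound holds up to the logarithmic factors absorbed in the $O$ of \eqref{summary:eq:Tgamma}.
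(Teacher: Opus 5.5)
Your Step~2 is where the argument breaks, and it cannot be repaired while $s$ stays near $s^*=1-p_M/(1-p_M)$. The quantity $\|\Pi_M D(s)^t\ket{\pi_U}\|$ is an \emph{amplitude}. At fixed $s^*$ it grows only like the classical absorption probability, so its square is $\Theta((t/\HT)^2)$, not $\Omega(t/\HT)$.

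The complete graph already shows this. Take $P=J/N$ with one marked vertex, so $p_M=1/N$ and $\HT=N$. Then $\ket U$ lies in a two-dimensional invariant subspace of $D(s)$ with eigenvalues $1$ and $s(1-p_M)$, and the second equals $1-2/N$ at $s^*$. A direct computation gives $\|\Pi_M D(s^*)^t\ket U\|=\tfrac12\bigl(1-(1-2/N)^t\bigr)\le t/N$. Each run therefore succeeds with probability at most $(t/\HT)^2$, and averaging $t'$ over $[0,t]$ changes only constants. With $t=\HT^\gamma$ you would need $\HT^{2-2\gamma}$ repetitions, which at $\gamma=0$ is worse than classical search.

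Your inequality $1-\lambda^t\ge\min(1,t(1-\lambda))/e$ yields a bound of order $t/\HT$ on the amplitude, which you then treat as a probability. The Grover analogy misleads here. A Grover step adds $\Theta(\theta)$ to the amplitude, whereas QFF depth $\sqrt t$ buys $t$ steps of diffusion, whose marked amplitude is linear in $t$, i.e.\ quadratic in depth. KMOR's speedup comes only from \emph{completing} the filter onto $\ket{v_n(s)}$, and at $s^*$ that needs $t=\Theta(\HT)$.

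The missing idea is to make $s$ depend on $\gamma$. Choose $s_\gamma$ with $\sin^2\theta(s_\gamma)=\HT^{\gamma-1}$, which moves the principal eigenvector towards $\ket U$. By Theorem~\ref{thm:HT-s-relation}, $\HT(s_\gamma)=\sin^4\theta(s_\gamma)\,\HT=\HT^{2\gamma-1}$. The filtering condition $t\ge\HT(s)/(2\varepsilon_2^2)$ of Theorem~\ref{thm:main-qff}, with $\varepsilon_2\propto\sin\theta(s_\gamma)$, then needs only $t=\Theta(\HT^\gamma)$. Each run succeeds with probability $\Omega(\sin^2\theta(s_\gamma))=\Omega(\HT^{\gamma-1})$. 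On the complete graph this makes the non-principal eigenvalue $1-N^{-\gamma}$, so $N^\gamma$ steps finish the filter. This choice requires $p_M\le\HT^{\gamma-1}<\tfrac12$; otherwise you sample from $\pi$ directly, or run the full-depth algorithm.

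Your precision accounting also contradicts the statement. Taking $\epsilon=\HT^{\gamma-1}$ gives depth $O(\HT^{\gamma/2}\sqrt{\log\HT})$ for $\gamma<1$, and that factor is not absorbed by $O(\HT^{\gamma/2})$. In the paper a constant QFF precision $\varepsilon_3$ suffices. The error on the non-principal components is weighted by their total weight $\sin^2\theta(s)$. Hence QFF perturbs the marked amplitude by at most $2\varepsilon_3\sin\theta(s)$, a constant fraction of the target $\cos\theta(s)\sin\theta(s)$.
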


At $\gamma=1$, Theorem~\ref{thm:cq_interp} achieves the full quadratic speedup using $O(1)$ runs of depth $O(\sqrt{\HT})$, whereas at $\gamma=0$, it reduces to constant-depth runs requiring $O(\HT)$ total walk steps. For intermediate $\gamma$, each depth-$D$ run succeeds with probability $\Omega(D^2/\HT)$. 
In terms of the checking-query cost, i.e., the dependence on $\mathsf C$, this trade-off is optimal in the worst case over graphs, because spatial search on the complete graph recovers unstructured search. Since each run requires a fresh initialisation, the sampling overhead directly transfers to the setup cost, making interpolation advantageous only when $\mathsf{S}$ is cheap compared to performing a walk step.

For several marked vertices, the KMOR algorithm is governed by a quantity which may be much larger than the hitting time, which motivates our second family.

\begin{theorem}\label{thm:interp-qff}
Let $P$ be an ergodic, reversible Markov chain on $G=(X,E)$,
let $M\subseteq X$ be non-empty, and suppose that
$\HT\ge\HT(P,M)$ is known.
For every $\gamma\in[0,1]$, there exists a quantum walk-based
algorithm that finds a marked vertex with constant success
probability, with maximum quantum-walk depth
\begin{equation}\label{eq:interp-qff-depth}
D(\gamma)=O\!\left(
\HT^{\gamma/2}
\sqrt{(1-\gamma)\log\HT+\log\log\HT}
\right)
\end{equation}
and hybrid complexity of order at most
\begin{equation}\label{eq:interp-qff-complexity}
(1+\gamma\log\HT)\HT^{1-\gamma}\mathsf S
+
(1+\gamma\log\HT)\HT^{1-\gamma/2}
\sqrt{(1-\gamma)\log\HT+\log\log\HT}
\,(\mathsf U+\mathsf C).
\end{equation}
\end{theorem}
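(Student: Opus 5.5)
We build the algorithm on the quantum fast-forwarding (QFF) search of Ambainis et al., interpolating in the walk length and then repeating. The relevant primitive is the statement that QFF prepares, with $O(\sqrt{t\log(1/\epsilon)})$ walk steps plus one setup, a state $\epsilon$-close to the normalised $P^t\ket{\pi}$ (in its block-encoded form). It succeeds with probability of order $\|P^t\ket{\pi}\|^2$ on the flag. We apply this to the absorbing walk $P_M$, where marked rows are replaced by self-loops; equivalently we use the discriminant $D(P_M)$ restricted to unmarked vertices. The key classical fact, from the analysis of \cite{ambainisqff}, is the following. For $t$ drawn uniformly at random from $\{0,\dots,T-1\}$, the expected squared norm $\mathbb{E}_t\|P_{U}^{t}\sqrt{\pi_U}\|^2$ of the non-absorbed part is small, while the mass leaked into $M$ after $t$ steps relates to $1-\|P_U^t\sqrt{\pi_U}\|^2$. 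The full algorithm takes $T\approx \HT$ so that a constant fraction is absorbed. Our plan is to reuse this with a truncated horizon $T_\gamma=\HT^{\gamma}$ and then quantify the loss.

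First, we establish the truncated overlap bound. For a reversible chain with hitting time $\HT(P,M)\le\HT$, we will show that after $t$ steps the probability that a stationary-started absorbing walk has hit $M$ is at least of order $t/\HT$, for $t\le \HT$. This follows from the spectral decomposition of $P_U$: $\Pr[\text{not absorbed by } t]=\sum_k |\langle v_k,\sqrt{\pi_U}\rangle|^2\lambda_k^{2t}$, together with $\HT=\sum_k |\langle v_k,\sqrt{\pi_U}\rangle|^2/(1-\lambda_k)$ and $1-\lambda^{2t}\ge \min(1,t(1-\lambda))/2$. Hence the success probability for time $t$ is $\Omega(\min(1,t/\HT))$ times the norm factors. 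The QFF prepares the square-root-time analogue, so a run of depth $\sqrt{t}$ yields success probability $\Omega(t/\HT)$, i.e.\ $\Omega(D^2/\HT)$ as desired. Since the exact profile is not known, we follow \cite{ambainisqff} and pick $t$ at random, or run over a dyadic schedule $t=2^j$ for $j\le\gamma\log\HT$. The latter is what produces the factor $(1+\gamma\log\HT)$ in the complexity.

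Second, we fix the QFF precision. Each run needs error $\epsilon$ small compared with its success probability $\HT^{\gamma-1}$, and the depth is $\sqrt{t\log(1/\epsilon)}$. We therefore choose $\epsilon=\Theta(\HT^{\gamma-1}/\log\HT)$, which gives $\log(1/\epsilon)=O((1-\gamma)\log\HT+\log\log\HT)$. Here the $\log\log$ term accounts for the union over the $O(\gamma\log\HT+1)$ schedule points. This yields the depth bound \eqref{eq:interp-qff-depth} with $t=\HT^\gamma$. The checking cost enters once per walk step via the reflection defining $P_M$, so each run costs $\mathsf S+D(\gamma)(\mathsf U+\mathsf C)$.

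Third, we count repetitions. We need $R=O(\HT^{1-\gamma})$ independent runs per schedule point to obtain constant success, by the same argument as in Theorem~\ref{thm:interp-aa}. Multiplying by the schedule length $1+\gamma\log\HT$ gives \eqref{eq:interp-qff-complexity}. The main obstacle is the first step. We must show rigorously that the truncated absorbed mass, restricted to the component that QFF actually amplifies (the norm $\|P_U^{t}\sqrt{\pi_U}\|$ versus its square-root-time counterpart), stays $\Omega(t/\HT)$ for several marked vertices without the KMOR-type extended hitting time. We would first try the spectral inequality above, averaging over the dyadic $t$, to avoid needing knowledge of the spectrum.
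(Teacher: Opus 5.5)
Your plan breaks at the quantum step, not the classical one. You run QFF on the fully absorbing walk $P'$ and read the classically absorbed mass as the probability of measuring a marked vertex. In doing so you identify $\|D_{UU}^t\ket{\pi_U}\|^2$, a quantum norm, with a classical survival probability. But for $x\in U$ and $y\in M$ one has $P'_{yx}=0$, hence $D(P')_{xy}=\sqrt{P'_{xy}P'_{yx}}=0$ and $D(P')=D_{UU}\oplus I_M$. Every polynomial that QFF implements therefore sends $\ket{\pi_U}$ to $g(D_{UU})\ket{\pi_U}$, which is supported entirely on $U$. The postselected output never contains a marked vertex; the absorbed mass is just norm lost to the failure flag. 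This is exactly why a unitary walk cannot funnel amplitude irreversibly into $M$. It is also why no spectral estimate on $D_{UU}$ can resolve the obstacle you flag at the end: a different walk is needed. Fixing a single $s<1$ instead, with one or dyadically many values of $t$, is essentially the KMOR route of Section~\ref{sec:proof_kmor}. Its guarantee is in terms of $\HTp(P,M)$, which can greatly exceed $\HT(P,M)$ when $|M|>1$.

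The missing ingredient is the AGJK mechanism. Run QFF on the semi-absorbing walk $P(s)$ with $s=1-1/r$, where $r$ is uniform over the $O(\log T_\gamma)$ dyadic scales in $R$ and $t$ is uniform in $\{1,\dots,24T_\gamma\}$. Lemma~\ref{lem:ff-bridge} lower-bounds $\|\Pi_M D(s)^t\ket{\pi_U}\|$ by the probability that the classical chain $P(s)$ is in $U$, $M$, $U$ at times $0$, $t$, $t'$. Lemma~\ref{cor:ff-classical} makes this $\Omega(1/\log T_\gamma)$ conditioned on the event $\mathcal E$, and Lemma~\ref{lem:ff-transfer} converts it into the averaged squared marked amplitude. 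A truncated hitting estimate enters only to show $\Pr(\mathcal E)=\Omega(\min(1,T_\gamma/\HT(P,M)))$. That step also needs a Markov bound on marked visits in $(T_\gamma,3T_\gamma]$, hence the case split $p_M\le\frac{1}{112}\min(1,T_\gamma/\HT(P,M))$, with plain sampling otherwise; you do not treat the large-$p_M$ case.

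This also corrects your accounting. The factor $(1+\gamma\log\HT)$ is the $1/\log T_\gamma$ loss from averaging over $s\in S$, not the length of a dyadic schedule in $t$. The $\log\log\HT$ in the depth arises because the operator-norm error must lie below the marked \emph{amplitude} $\HT^{(\gamma-1)/2}/\sqrt{1+\gamma\log\HT}$, not from a union bound; your $\epsilon$ happens to give the same order of $\log(1/\epsilon)$.

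Your classical lemma, by contrast, is salvageable and arguably simpler than the paper's Cauchy--Schwarz doubling argument. Let $c_k=|\braket{v_k|U}|^2$ and let $\mu_k$ be the eigenvalues of $D_{UU}$. Apply Jensen to the convex map $y\mapsto t/(t+y)$ at $y_k=1/(1-\mu_k)$, together with $\min(1,x)\ge x/(1+x)$. This turns $\HT(P,M)=\sum_k c_k/(1-\mu_k)$ into $\Pr(Z\le 2t)\ge\frac14\min(1,t/\HT(P,M))$. Two caveats apply. First, $\sum_k c_k\mu_k^{2t}$ is the survival probability at time $2t$, not $t$. Second, your inequality $1-\lambda^{2t}\ge\frac12\min(1,t(1-\lambda))$ fails for eigenvalues near $-1$, so you must pass to the lazy walk, which the paper's argument does not require.
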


The second approach builds on the work of Ambainis, Gilyén, Jeffery and Kokainis~\cite{ambainisqff} (hereafter AGJK), which also relies on quantum fast-forwarding but, instead of projecting onto a fixed target state, runs the walk for a randomly chosen number of steps and with randomly chosen parameters. With sufficient probability, some of these choices catch the walk while it visits a marked vertex, regardless of how many marked vertices there are or where they lie. In this case, truncating the length of the walk relates to showing that a classical walk run for fewer steps still reaches the marked set with a proportionally lower probability. This generality comes at the cost of logarithmic factors, so Theorem~\ref{thm:cq_interp} remains preferable for a single marked vertex with known $p_M$.

Both results extend to settings with weaker prior knowledge assumptions at the cost of at most polylogarithmic factors, as shown in Appendix~\ref{variations}. In Theorem~\ref{thm:interp-qff}, the logarithmic factor comes from repeating the routine over the randomly drawn parameters, which we perform classically to keep the depth minimal. This outer loop could instead be replaced by amplitude amplification, reducing the overhead at the cost of extra depth, or alternatively by its interpolated version from Theorem~\ref{thm:interp-aa}, giving algorithms with two independent quantum-classical interpolation parameters.

\section{Preliminaries}\label{sec:preliminaries}

\subsection{Random walks and Markov chains}\label{sec:randomwalks}\label{ssec:markov}

A random walk is one of the simplest ways of exploring a graph when no global
information is available \cite{levin2017markov,norris1998markov,lovasz1993random}. Their behaviour is governed entirely by the transition probabilities between
vertices, with each transition depending only on the current vertex. As such, random walks constitute a special case of the framework of Markov chains, which is well studied and understood.

For a finite state space \(X\), with \(n:=|X|\), a time-homogeneous
Markov chain can be represented by an \(n\times n\) row-stochastic matrix
\(P\), where \(P_{xy}\in[0,1]\) is the probability of moving from \(x\) to
\(y\) in one step, and
\[
\sum_{y\in X}P_{xy}=1
\qquad\text{for every }x\in X.
\]
We represent probability distributions on \(X\) as row vectors, so that
one step of the chain maps \(p\) to \(pP\). The matrix \(P\) also defines a
directed weighted graph on \(X\), with an edge from \(x\) to \(y\) whenever
\(P_{xy}>0\), weighted by the corresponding transition probability, which we denote by $G=(X,E)$.

A Markov chain is called \emph{irreducible} if every state can be reached from every other in finitely many steps (i.e., for every $x,y\in X$, there exists $t\ge0$
such that $(P^t)_{xy}>0$), and \emph{aperiodic} if no integer $k>1$ divides the length of every directed cycle of the underlying graph. A chain that is both is called \emph{ergodic}. By the Perron--Frobenius Theorem \cite{seneta2006nonnegative}, every eigenvalue of $P$ has absolute value at most one, and $1$ is an eigenvalue. If $P$ is irreducible, the normalised left $1$-eigenvector is unique and strictly positive, so it may be written as a probability distribution $\pi$ satisfying $\pi P=\pi$. If $P$ is in addition aperiodic, every other eigenvalue is strictly smaller than one in absolute value. The distribution $\pi$ is called the \emph{stationary distribution}. For an ergodic chain it is unique, and $pP^{t}\to\pi$ for every starting distribution $p$ \cite{levin2017markov}.

We now turn to the search problem. Fix a non-empty set $M\subseteq X$ of $m:=|M|$ marked vertices and write $U:=X\setminus M$ for the unmarked ones. The probability that a vertex drawn from the stationary distribution is marked is
\begin{equation}\label{eq:pM}
p_M:=\sum_{x\in M}\pi_x\,.
\end{equation}

Arguably, the simplest way to find a marked vertex is to run the random walk until it lands on one, motivating the following algorithm.

\begin{algorithm}[H]
\caption{Classical random walk search}\label{alg:random-walk}
\begin{enumerate}\itemsep2pt
\item Draw $x\in X$ from the stationary distribution $\pi$.
\item If $x$ is marked, output $x$ and stop.
\item Otherwise take one step of $P$ from $x$ and return to step 2.\label{alg:rw-step}
\end{enumerate}
\end{algorithm}

\begin{definition}
\label{def:HT}
The \emph{hitting time} $\HT(P,M)$ is the expected number of executions of step 3 of Algorithm~\ref{alg:random-walk}, conditioned on the initial vertex being unmarked.
\end{definition}
If the initial stationary distribution is sufficiently dense in marked vertices, the initial draw already yields a marked vertex with constant probability. As such, we henceforth assume that $p_M\le1/2$.

\subsection{Interpolated walks}\label{sec:interpolated}\label{sec:chains}\label{ssec:semiabsorbing}

There is more than one way to look at this same process. Rather than following a single walker who stops when it lands on a marked vertex, we can follow the whole distribution under a modified chain in which the marked vertices do not allow the walker to leave.

\begin{definition}
\label{def:absorbing}
The \emph{absorbing walk} $P'$ is obtained from $P$ by replacing every outgoing transition from a marked vertex by a self-loop,
\[
P'_{xy}=
\begin{cases}
P_{xy}  & \text{if } x\notin M,\\
\delta_{xy} & \text{if } x\in M.
\end{cases}
\]
Ordering the states so that vertices in $U$ come first, we get the block structure
\begin{equation}\label{eq:block-P-Pprime}
P=\begin{pmatrix} P_{UU} & P_{UM}\\ P_{MU} & P_{MM}\end{pmatrix},
\qquad
P'=\begin{pmatrix} P_{UU} & P_{UM}\\ 0 & I\end{pmatrix}.
\end{equation}
\end{definition}

While a classical search can simply run the absorbing chain $P'$ until absorption, a quantum walk is unitary and cannot irreversibly funnel amplitude into $M$. Instead, many modern quantum walk search algorithms consider a one-parameter family of chains that interpolates between $P$ and $P'$ \cite{kmor}.

\begin{definition}
\label{def:interpolated}
For $s\in[0,1]$ the \emph{interpolated Markov chain} is
\begin{equation}\label{eq:Ps}
P(s):=(1-s)\,P+s\,P',
\end{equation}
which in the block structure of \eqref{eq:block-P-Pprime} reads as
\begin{equation}\label{eq:Ps-block}
P(s)=
\begin{pmatrix}
P_{UU} & P_{UM}\\
(1-s)\,P_{MU} & (1-s)\,P_{MM}+s\,I
\end{pmatrix}.
\end{equation}
\end{definition}

In particular, the unmarked block remains unchanged. What the interpolation parameter $s$ achieves is to weaken the outgoing transitions from marked vertices and redirect the missing weight into self-loops, so that $s$ measures how hard it is to leave $M$ once we are there. At $s=0$ the marked vertices are just like any other and the walk is $P$. At $s=1$ they perfectly trap any walker and the walk is $P'$. In between, they possess a certain degree of ``stickiness'', motivating the alternative designation of $P(s)$ as a semi-absorbing walk.

We use both terms interchangeably, although \emph{interpolation} may unrelatedly refer either to the semi-absorbing parameter $s$ or to the quantum-classical parameter $\gamma$, whichever is evident from the context and otherwise stated.

Whilst classical random walks can traverse any directed graph, quantising a Markov chain typically requires an underlying symmetry to ensure that the transition operator admits a purely real spectrum. This is guaranteed by reversibility.

An ergodic Markov chain $P$ with stationary distribution $\pi$ is \emph{reversible} if it satisfies the detailed balance equation,
\begin{equation}\label{eq:detailed-balance}
\pi_x P_{xy}=\pi_y P_{yx}\qquad\text{for all }x,y\in X.
\end{equation}

Reversibility allows one to study the chain through a symmetric operator which shares its eigenvalues and singular values, up to sign. The \emph{discriminant matrix} of a Markov chain $P$ is defined as
\begin{equation}\label{eq:disc}
D(P):=\sqrt{P\circ P^{\mathsf T}}\,,
\end{equation}
where the Hadamard product $\circ$ and the square root are taken entrywise, so that $D_{xy}=\sqrt{P_{xy}P_{yx}}$.
By detailed balance it follows that
\begin{equation}
D(P) = \operatorname{diag}(\sqrt{\pi}) P \operatorname{diag}(\sqrt{\pi})^{-1},
\end{equation}
where $\sqrt{\pi}$ denotes the entrywise square root of $\pi$. Therefore, $D(P)$ is similar to $P$ and has entirely real eigenvalues \cite{levin2017markov}.

Crucially, if $P$ is ergodic and reversible, then so is the semi-absorbing walk $P(s)$ for every $s \in [0,1)$ \cite[Props.~7 and 12]{kmor}.  Consequently, all properties established for reversible, ergodic Markov chains carry over to $P(s)$ for any $s \in [0, 1)$ and, for notational convenience, we will often suppress the explicit dependence on $s$.

Because $D(s):=D(P(s))$ is real and symmetric, its spectrum is real and admits an orthonormal eigendecomposition
\begin{equation}\label{eq:Ds-spectral}
D(s)=\sum_{k=1}^{n}\lambda_k(s)\,\ket{v_k(s)}\!\bra{v_k(s)},
\qquad
\lambda_1(s)\le\cdots\le\lambda_n(s).
\end{equation}
It is convenient to introduce the unit vectors
\begin{equation}\label{eq:U-M-vectors}
\ket{U}:=\frac{1}{\sqrt{1-p_M}}\sum_{x\in U}\sqrt{\pi_x}\,\ket{x},
\qquad
\ket{M}:=\frac{1}{\sqrt{p_M}}\sum_{x\in M}\sqrt{\pi_x}\,\ket{x}.
\end{equation}
These are orthogonal and supported on the unmarked and marked parts of $\mathcal H:=\operatorname{span}\{\ket x:x\in X\}$, respectively.

\begin{proposition}[Propositions~4, 15 and~16 of \cite{kmor}]
\label{prop:principal}\label{prop:D-spectrum}
The largest eigenvalue of $D(s)$ is $1$. It is simple for every $s\in[0,1)$ and has multiplicity $m=|M|$ at $s=1$.
For $s\in[0,1]$ the top (or principal) eigenvector of $D(s)$ may be chosen as $\ket{v_n(s)}=\sqrt{\pi(s)^{\mathsf T}}$, and it decomposes as
\begin{equation}\label{eq:vn-decomp}
\ket{v_n(s)}=\cos\theta(s)\,\ket{U}+\sin\theta(s)\,\ket{M},
\end{equation}
where
\begin{equation}\label{eq:cos-sin-theta}
\cos\theta(s)=\sqrt{\frac{(1-s)(1-p_M)}{1-s(1-p_M)}},
\qquad
\sin\theta(s)=\sqrt{\frac{p_M}{1-s(1-p_M)}}.
\end{equation}
\end{proposition}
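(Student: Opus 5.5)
The proposition is a restatement of results from \cite{kmor}. I will give a self-contained argument from reversibility and the block structure \eqref{eq:Ps-block}. First I find the stationary distribution of $P(s)$.

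\textbf{Step 1: stationary distribution.} Try $\pi(s)\propto((1-s)\pi_U,\ \pi_M)$, where $\pi_U,\pi_M$ are the restrictions of $\pi$ to $U$ and $M$. A direct check against \eqref{eq:Ps-block} works. On the unmarked coordinates, $\pi(s)P(s)$ gives $(1-s)\pi_U P_{UU}+(1-s)\pi_M P_{MU}=(1-s)\pi_U$, using $\pi P=\pi$. On the marked coordinates, $(1-s)\pi_U P_{UM}+\pi_M((1-s)P_{MM}+sI)=(1-s)\pi_M+s\pi_M=\pi_M$. The normalisation is $(1-s)(1-p_M)+p_M=1-s(1-p_M)$.

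Detailed balance for $P(s)$ with respect to $\pi(s)$ also holds entrywise. For $x,y\in U$ it is inherited from $P$. For $x\in U$, $y\in M$ both sides equal $(1-s)\pi_xP_{xy}$. For $x,y\in M$ the off-diagonal entries are scaled uniformly by $(1-s)$. Hence $P(s)$ is reversible with respect to $\pi(s)$.

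\textbf{Step 2: principal eigenvector of $D(s)$.} By the similarity $D(s)=\operatorname{diag}(\sqrt{\pi(s)})P(s)\operatorname{diag}(\sqrt{\pi(s)})^{-1}$, valid for $s<1$, $D(s)$ has the same spectrum as $P(s)$. Its eigenvalue $1$ has eigenvector $\sqrt{\pi(s)}^{\mathsf T}$, because $P(s)\mathbf 1=\mathbf 1$. For $s\in[0,1)$, $P(s)$ is ergodic, since the unmarked block is unchanged and the marked-to-unmarked transitions stay positive. Perron--Frobenius then makes $1$ the largest eigenvalue, with multiplicity one.

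Splitting $\sqrt{\pi(s)}$ into unmarked and marked parts and comparing with \eqref{eq:U-M-vectors} gives the coefficients
\[
\cos\theta(s)=\sqrt{\frac{(1-s)(1-p_M)}{1-s(1-p_M)}},\qquad \sin\theta(s)=\sqrt{\frac{p_M}{1-s(1-p_M)}},
\]
which is \eqref{eq:cos-sin-theta}.

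\textbf{Step 3: the endpoint $s=1$.} Here $D(1)=\sqrt{P'\circ P'^{\mathsf T}}$ is block diagonal, because $P'_{MU}=0$ kills the cross entries. It equals $D(P_{UU})\oplus I_M$. The eigenvalue $1$ therefore appears with multiplicity at least $m$ from the $I_M$ block.

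It remains to show that the unmarked block $D(P_{UU})$ has spectral radius strictly below $1$. This holds because $P_{UU}$ is substochastic and irreducibly leaks into $M$: $G$ is connected and $M\neq\emptyset$. So the top eigenspace is exactly $\operatorname{span}\{\ket x:x\in M\}$. The vector $\ket M$ lies in it, matching the formula at $s=1$, where $\cos\theta=0$ and $\sin\theta=1$.

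\textbf{Main obstacle.} The delicate point is Step 3. The similarity transform degenerates at $s=1$, where $\pi(1)$ vanishes on $U$, so $s=1$ has to be argued directly rather than by continuity. Strict sub-unit spectral radius of $D(P_{UU})$ follows from Perron--Frobenius for irreducible substochastic matrices with at least one row sum below $1$. If $P_{UU}$ is reducible, I would apply this argument to each irreducible class separately; every class has a row leaking to $M$, by connectivity of $G$.
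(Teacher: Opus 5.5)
The paper gives no proof of this proposition; it imports it from Propositions~4, 15 and~16 of \cite{kmor}. Your argument is a correct, self-contained verification along the same lines as those results: the explicit stationary distribution and detailed balance for $P(s)$, the similarity of $D(s)$ to $P(s)$ together with Perron--Frobenius for $s<1$, and the splitting $D(1)=D_{UU}\oplus I_M$ at $s=1$.

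The one step stated too loosely is the reducible case of Step~3. Your claim that every irreducible class of $P_{UU}$ leaks directly into $M$ holds here because of reversibility, and you should say so. Detailed balance $\pi_xP_{xy}=\pi_yP_{yx}$ with $\pi>0$ makes the support of $P$ symmetric. Hence the classes of $P_{UU}$ are exactly the connected components of the subgraph induced on $U$, and $P_{UU}$ is block diagonal over them. Each component is adjacent to $M$ because $G$ is connected.

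You can also avoid the case split, since only the largest eigenvalue of $D_{UU}$ matters for the multiplicity claim. For a unit vector $v$ supported on $U$, $\braket{v|D_{UU}|v}=\braket{v|D(0)|v}\le 1$. Equality would put $v$ in the top eigenspace of $D(0)$, which your Step~2 shows is spanned by $\sqrt{\pi^{\mathsf T}}$. That is impossible, because $\sqrt{\pi}$ has nonzero entries on $M$. Hence $\lambda_{\max}(D_{UU})<1$, and the eigenvalue $1$ of $D(1)$ has multiplicity exactly $m$.
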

Thus, $\ket{v_n(s)}$ lives in the plane spanned by $\ket U$ and $\ket M$, and as $s$ runs from $0$ to $1$ it rotates within that plane from
\begin{equation}\label{eq:vn-rotation}
\ket{v_n(0)}=\sqrt{1-p_M}\,\ket U+\sqrt{p_M}\,\ket M
\qquad\text{to}\qquad
\ket{v_n(1)}=\ket M ,
\end{equation}
\begin{figure}[htbp]
  \centering
  \includegraphics[width=0.40\linewidth]{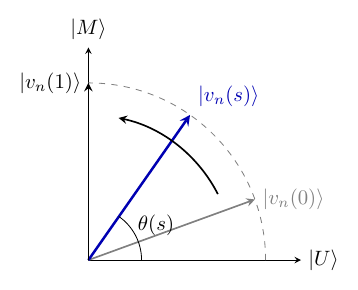}
  \caption{Schematic representation of the principal eigenvector
  $\ket{v_n(s)}=\cos\theta(s)\ket{U}+\sin\theta(s)\ket{M}$.
  As $s$ increases from $0$ to $1$, the angle increases from
  $\theta(0)=\arcsin\sqrt{p_M}$ to $\pi/2$, and the eigenvector
  approaches $\ket{M}$.}
  \label{fig:interpolated-eigenvector}
\end{figure}
as represented in Figure~\ref{fig:interpolated-eigenvector}.\\
When $p_M$ is small, the starting point is almost $\ket U$ and the endpoint is exactly $\ket{M}$. The interpolation parameter then acts as a dial that rotates one into the other, and for intermediate $s$, the vector has substantial overlap with both.

For an ergodic, reversible Markov chain $P$, the classical hitting time to $M$ admits a spectral formulation in terms of the absorbing walk $P'$.

\begin{proposition}[Proposition~2 of \cite{kmor}]
\label{prop:HT-spectral}
Let $D':=D(P')$ denote the discriminant matrix of the absorbing walk, with sorted eigenvalues $\lambda'_1 \le \cdots \le \lambda'_n = 1$ and orthonormal eigenvectors $\ket{v'_k}$. The classical hitting time satisfies
\begin{equation}\label{eq:HT-spectral}
\HT(P,M)=\sum_{k=1}^{n-m}\frac{\bigl|\braket{v'_k|U}\bigr|^{2}}{1-\lambda'_k}\,.
\end{equation}
\end{proposition}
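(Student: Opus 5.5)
Proposition~\ref{prop:HT-spectral} is a spectral identity, and I would prove it by writing the conditioned hitting time as a Green's-function expression for the sub-stochastic block $P_{UU}$ and then symmetrising. Let $\pi_U$ be the restriction of $\pi$ to $U$, normalised by $1-p_M$. Conditioned on an unmarked start, the number of executions of step~3 is $T:=\min\{t\ge0: X_t\in M\}$, with $X_0\sim\pi_U/(1-p_M)$. Since $\Pr[T>t]=\frac{1}{1-p_M}\pi_U P_{UU}^{t}\mathbf 1_U$, we get
\begin{equation*}
\HT(P,M)=\sum_{t\ge0}\Pr[T>t]=\frac{1}{1-p_M}\,\pi_U\,(I-P_{UU})^{-1}\mathbf 1_U .
\end{equation*}
This requires the Neumann series to converge, i.e.\ spectral radius of $P_{UU}$ below $1$. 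That follows from irreducibility of $P$ and $M\neq\emptyset$: every unmarked state leaks mass to $M$ in finitely many steps.

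Next I would symmetrise. Set $\Delta:=\operatorname{diag}(\sqrt{\pi_U})$. By detailed balance, $D_{UU}:=\Delta P_{UU}\Delta^{-1}$ is symmetric, and it is exactly the $UU$ block of $D(P)$. Substituting $\pi_U=\sqrt{\pi_U}^{\mathsf T}\Delta$ and $\mathbf 1_U=\Delta^{-1}\sqrt{\pi_U}$ gives
\begin{equation*}
\HT=\frac{1}{1-p_M}\sqrt{\pi_U}^{\mathsf T}(I-D_{UU})^{-1}\sqrt{\pi_U}=\bra U (I-D_{UU})^{-1}\ket U ,
\end{equation*}
because $\ket U=\sqrt{\pi_U}/\sqrt{1-p_M}$ by \eqref{eq:U-M-vectors}.

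Finally I would relate $D_{UU}$ to $D'=D(P')$. From \eqref{eq:block-P-Pprime}, for $x\in U$ and $y\in M$ we have $P'_{yx}=0$, so $D'_{xy}=\sqrt{P'_{xy}P'_{yx}}=0$. The marked block of $D'$ is $I$. Hence $D'=D_{UU}\oplus I_M$. Its eigenvectors are therefore:
\begin{itemize}
\item $n-m$ eigenvectors of $D_{UU}$, padded with zeros on $M$;
\item $m$ vectors supported on $M$ with eigenvalue $1$.
\end{itemize}
The eigenvalues of $D_{UU}$ are strictly below $1$ by the spectral-radius fact, since $D_{UU}$ is similar to $P_{UU}$. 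After sorting, these are $\lambda'_1,\dots,\lambda'_{n-m}$, and the top $m$ eigenvalues equal $1$. Because $\ket U$ is supported on $U$, it is orthogonal to the $M$-supported eigenvectors. Expanding $(I-D_{UU})^{-1}$ in the first $n-m$ eigenvectors then yields \eqref{eq:HT-spectral}.

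Two points need care; neither is deep:
\begin{itemize}
\item the strict inequality $\rho(P_{UU})<1$, which justifies both the inverse and the sum-over-$t$ formula for the expectation;
\item ordering ties: if some eigenvectors share eigenvalue $1$, they must be chosen as the $M$-supported ones. This holds here because no $U$-supported eigenvector has eigenvalue $1$.
\end{itemize}
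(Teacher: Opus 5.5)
Your proof is correct and takes the standard route; the paper does not reprove this proposition but cites it from KMOR. Your first two steps are the same computation the paper carries out in the proof of Lemma~\ref{lem:truncated-hitting}: the tail-sum formula for $\HT(P,M)$ through the substochastic block $P_{UU}$, then symmetrisation to $D_{UU}$ by detailed balance, giving $g(t)=\braket{U|D_{UU}^{\,t}|U}$ with $\sum_{u\ge0}g(u)=\HT(P,M)$; the splitting $D'=D_{UU}\oplus I_M$ with $\rho(D_{UU})<1$ then correctly identifies $\bra U(I-D_{UU})^{-1}\ket U$ with the spectral sum over the first $n-m$ modes. One cosmetic slip: you describe $\pi_U$ as ``normalised by $1-p_M$'' but then use it as the unnormalised restriction of $\pi$ to $U$, which is what $X_0\sim\pi_U/(1-p_M)$ and $\ket U=\sqrt{\pi_U}/\sqrt{1-p_M}$ require.
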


By analogy, this notion can be extended to the non-principal modes of $D(s)$.

\begin{definition}
\label{def:ext-HT}
For $s\in[0,1)$, the \emph{interpolated hitting time} is
\begin{equation}\label{eq:HTs}
\HT(s):=\sum_{k=1}^{n-1}\frac{\bigl|\braket{v_k(s)|U}\bigr|^{2}}{1-\lambda_k(s)}\,,
\end{equation}
and the \emph{extended hitting time} of $P$ with respect to $M$ is
\begin{equation}\label{eq:HT+}
\HTp(P,M):=\lim_{s\to1}\HT(s)\,.
\end{equation}
\end{definition}

While expression \eqref{eq:HT-spectral} stops at $k=n-m$, the sum in \eqref{eq:HTs} runs up to $k=n-1$. For the intermediate modes $k \in \{n-m+1, \dots, n-1\}$, both the numerator and denominator vanish as $s \to 1^-$ due to the emerging $m$-fold degeneracy of the eigenvalue $1$, whose limits need not vanish. Consequently, $\HTp(P,M)$ can strictly exceed $\HT(P,M)$ whenever $|M| > 1$, but is equal to $\HT(P,M)$ for a single marked vertex \cite{kmor}. Furthermore, both hitting times are related in closed form as follows.

\begin{theorem}[Theorem~4 of \cite{kmor}]
\label{thm:HT-s-relation}
For any $s \in [0,1)$,
\begin{equation}\label{eq:HT-s-relation}
\HT(s) = \frac{p_M^2}{(1 - s(1 - p_M))^2}\,\HTp(P,M) = \sin^4\theta(s)\,\HTp(P,M).
\end{equation}
\end{theorem}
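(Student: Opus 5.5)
The plan is to rewrite $\HT(s)$ as a quadratic form of the pseudoinverse of $L(s):=I-D(s)$. Then a rescaling of the marked coordinates should make the linear system behind it independent of $s$, up to an explicit scalar factor. The limit $s\to1$ then identifies the remaining constant with $\HTp(P,M)$.

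\emph{Step 1 (quadratic form).} By the spectral decomposition \eqref{eq:Ds-spectral} and the simplicity of the eigenvalue $1$ for $s<1$ (Proposition~\ref{prop:principal}),
\[
\HT(s)=\bra{U}L(s)^{+}\ket{U},
\]
where $L(s)^+$ is the Moore--Penrose pseudoinverse, whose kernel is spanned by $\ket{v_n(s)}$. Define the unit vector $\ket{w(s)}:=-\sin\theta(s)\ket U+\cos\theta(s)\ket M$, which lies in the plane of \eqref{eq:vn-decomp} and is orthogonal to $\ket{v_n(s)}$. A direct check gives
\[
\ket U=\cos\theta\,\ket{v_n}-\sin\theta\,\ket w .
\]
Since $L^+\ket{v_n}=0$, it follows that $\HT(s)=\sin^2\theta(s)\,\bra{w}L(s)^+\ket{w}$. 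It therefore suffices to show that $\bra{w}L(s)^+\ket{w}=\sin^2\theta(s)\,C$ for some constant $C$ independent of $s$.

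\emph{Step 2 (rescaling).} We have $D_{xy}(s)=\sqrt{P_{xy}(s)P_{yx}(s)}$, and by \eqref{eq:Ps-block} the unmarked block is unchanged. The mixed blocks pick up a factor $\sqrt{1-s}$, and the marked block becomes $(1-s)D_{MM}+sI$. Hence
\[
L_{UU}(s)=I-D_{UU},\qquad L_{UM}(s)=-\sqrt{1-s}\,D_{UM},\qquad L_{MM}(s)=(1-s)(I-D_{MM}),
\]
where $D_{UU},D_{UM},D_{MM}$ are the blocks of $D(0)$. Let $x=L(s)^+\ket w$; this solves $L(s)x=\ket w$ because $\ket w\perp\ker L(s)$. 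Write $x_U=y_U$ and $x_M=y_M/\sqrt{1-s}$. The $U$-rows become $L_{UU}(0)y_U+L_{UM}(0)y_M=-\sin\theta\ket U$. The $M$-rows, after dividing by $\sqrt{1-s}$, become $L_{MU}(0)y_U+L_{MM}(0)y_M=\tfrac{\cos\theta}{\sqrt{1-s}}\ket M$. From \eqref{eq:cos-sin-theta},
\[
\frac{\cos\theta}{\sqrt{1-s}\,\sin\theta}=\sqrt{\frac{1-p_M}{p_M}},
\]
which does not depend on $s$. So $y$ solves
\[
L(0)\,y=\sin\theta(s)\,r,\qquad r:=-\ket U+\sqrt{\tfrac{1-p_M}{p_M}}\,\ket M .
\]
Here $r$ is independent of $s$ and orthogonal to $\sqrt\pi=\ket{v_n(0)}$, since $\sqrt{\pi}^{\mathsf T}$ pairs with $r$ to give $-\sqrt{1-p_M}+\sqrt{1-p_M}=0$. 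Thus $y=\sin\theta\,(L(0)^+r+c\,\ket{v_n(0)})$ for some scalar $c$.

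\emph{Step 3 (evaluate and take the limit).} Compute
\[
\braket{w|x}=-\sin\theta\,\braket{U|y_U}+\frac{\cos\theta}{\sqrt{1-s}}\braket{M|y_M}=\sin\theta\,\braket{r|y}=\sin^2\theta\,\bra r L(0)^+\ket r ,
\]
because the $c$-term drops out ($r\perp\ket{v_n(0)}$). This gives $\HT(s)=\sin^4\theta(s)\,C$ with $C=\bra rL(0)^+\ket r$. By \eqref{eq:cos-sin-theta}, $\sin\theta(s)\to1$ as $s\to1^-$, so $C=\lim_{s\to1}\HT(s)=\HTp(P,M)$ by \eqref{eq:HT+}. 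Substituting the closed form of $\sin\theta(s)$ gives the first equality in \eqref{eq:HT-s-relation}.

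The main obstacle is Step 2. There one must check carefully that the block entries of $D(s)$ scale exactly as claimed; this includes the cancellation in $I-((1-s)D_{MM}+sI)=(1-s)(I-D_{MM})$. One must also handle the pseudoinverse correctly: the rescaled solution need not be the minimum-norm one. This is harmless, because any kernel component of $x$ along $\ket{v_n(s)}$ is orthogonal to $\ket w$, and any kernel component of $y$ along $\ket{v_n(0)}$ is orthogonal to $r$.
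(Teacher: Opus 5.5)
Your proof is correct. The paper does not prove this statement itself; it quotes it from \cite{kmor}. There is therefore no in-paper argument to compare against, and your derivation works as a complete, self-contained proof.

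The engine of your argument is that the block computation in Step~2 is the congruence
\[
I-D(s)=\Lambda_s\bigl(I-D(0)\bigr)\Lambda_s,\qquad \Lambda_s:=\operatorname{diag}\bigl(I_U,\sqrt{1-s}\,I_M\bigr),
\]
combined with $\Lambda_s^{-1}\ket{w(s)}=\sin\theta(s)\,r$. Your kernel bookkeeping is exactly the general fact that $\bra{w}(\Lambda A\Lambda)^{+}\ket{w}=\bra{\Lambda^{-1}w}A^{+}\ket{\Lambda^{-1}w}$. This holds for symmetric $A$, invertible diagonal $\Lambda$, and $\ket w$ in the range of $\Lambda A\Lambda$.

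Seen this way, the argument gives more than the theorem:
\begin{itemize}
\item It proves that the limit in Definition~\ref{def:ext-HT} exists, which that definition presupposes.
\item It identifies $\HTp(P,M)=\bra{r}(I-D(0))^{+}\ket{r}=\HT(0)/p_M^2$, a quantity of the unperturbed chain alone.
\item It makes your appeal to Proposition~\ref{prop:principal} unnecessary. The relation $\ker(I-D(s))=\Lambda_s^{-1}\ker(I-D(0))$ gives simplicity of the eigenvalue $1$. Normalising $\Lambda_s^{-1}\ket{\pi}$ reproduces exactly the angles in \eqref{eq:cos-sin-theta}. Sylvester's law of inertia gives $I-D(s)\succeq 0$.
\end{itemize}
Finally, nothing in your argument uses laziness or $p_M\le\tfrac12$, so the identity holds for every $0<p_M<1$.
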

The prefactor increases monotonically in $s$ and equals one at $s=1$, so $\HT(s)\le\HTp(P,M)$ throughout. Therefore, raising $s$ improves the overlap of $\ket{v_n(s)}$ with the marked subspace by \eqref{eq:cos-sin-theta}, but it also raises the hitting time, by \eqref{eq:HT-s-relation}.

For the quantum constructions below, we will assume a non-negative spectrum. This can be ensured by replacing $P$
with the \emph{lazy} walk $(P+I)/2$, which stays at the current vertex
with probability $1/2$ and otherwise takes a step of $P$.
The chain $(P+I)/2$ has all its eigenvalues in $[0,1]$, and its hitting time with respect to any $M$ is exactly $2\HT(P,M)$ \cite{levin2017markov}. Since we are mostly interested in asymptotic behaviour, we may, whenever needed, make the walk lazy without altering complexity.

\subsection{The quantum walk operator}\label{ssec:walkop}

Szegedy's framework \cite{Sze04a} associates a discrete-time unitary walk with any reversible Markov chain by embedding its transition dynamics into the bipartite Hilbert space $\mathcal{H} \otimes \mathcal{H}$. Each row distribution of $P$ is coherently encoded onto an auxiliary register, yielding a unitary operator whose eigenphases are governed directly by the spectrum of the classical discriminant matrix.

Formally, let $\ket{\bar 0}\in\mathcal H$ denote a fixed reference state, and define the reference subspace and its associated reflection by
\begin{equation}\label{eq:ref-subspace}
\mathcal{X} := \mathcal{H} \otimes \operatorname{span}\{\ket{\bar 0}\}, \qquad
\Pi_0 := I \otimes \ket{\bar 0}\bra{\bar 0}, \qquad
\operatorname{ref}_{\mathcal X} := 2\Pi_0 - I \otimes I.
\end{equation}
so that $\refX$ reflects about the subspace in which the second register carries the reference state.

The chain enters through a single unitary, $V(P)$, acting on $\mathcal H\otimes\mathcal H$ and satisfying
\begin{equation}\label{eq:V}
V(P)\ket{x}\ket{\bar 0}=\ket{x}\ket{p_x},
\qquad
\ket{p_x}:=\sum_{y\in X}\sqrt{P_{xy}}\,\ket{y},
\qquad\forall x\in X.
\end{equation}
Locality is enforced by the $\Shift$ operation, which exchanges the two registers along edges of the graph and acts as the identity otherwise
\begin{equation}\label{eq:shift}
\Shift\ket{x,y}=
\begin{cases}
\ket{y,x}, & \text{if } x=y \text{ or } \{x,y\}\in E,\\
\ket{x,y}, & \text{otherwise}.
\end{cases}
\end{equation}
Since $P_{xy}=0$ whenever $xy$ is not an edge, $\Shift$ always succeeds on states of the form $\ket{x}\ket{p_x}$.
\begin{definition}
\label{def:W}
The \emph{quantum walk operator} associated with $P$ is
\begin{equation}\label{eq:W}
W(P) := U_D \refX = V(P)^\dagger \Shift V(P) \refX,
\end{equation}
where $U_D := V(P)^\dagger \Shift V(P)$.
\end{definition}

\begin{lemma}[\cite{Sze04a}, Proposition~23 of \cite{kmor}]
\label{lem:blockencoding}
For a reversible Markov chain $P$, the operator $U_D$ is Hermitian and self-inverse. Moreover,
\begin{equation}\label{eq:blockencoding}
\Pi_0 U_D \Pi_0 = D(P) \otimes \ket{\bar 0}\!\bra{\bar 0},
\end{equation}
rendering $U_D$ a $(1, \log |X|)$-block-encoding of $D(P)$.
\end{lemma}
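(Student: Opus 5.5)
The plan is the standard Szegedy computation, done directly on basis states. I will first show that $U_D$ is Hermitian and self-inverse, and then read off the block $\Pi_0 U_D \Pi_0$ entrywise.

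For the first part, note that $\Shift$ is a permutation of the computational basis. It swaps the pair $\ket{x,y}\leftrightarrow\ket{y,x}$ whenever $x=y$ or $\{x,y\}\in E$, and fixes every other basis state. Because the condition ``$x=y$ or $\{x,y\}\in E$'' is symmetric in $x$ and $y$, this permutation is an involution. A real permutation matrix that is an involution is symmetric, so $\Shift$ is Hermitian and satisfies $\Shift^2=I$. It follows that
\[
U_D^\dagger=V(P)^\dagger\Shift^\dagger V(P)=U_D
\]
and
\[
U_D^2=V(P)^\dagger\Shift\,V(P)V(P)^\dagger\,\Shift\,V(P)=I .
\]
This part needs neither reversibility nor any property of $V(P)$ beyond unitarity.

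For the block, I compute the matrix elements between reference states. For $x,y\in X$, using \eqref{eq:V},
\[
\bra{x}\bra{\bar0}U_D\ket{y}\ket{\bar0}=\bra{x}\bra{p_x}\Shift\ket{y}\ket{p_y}.
\]
The vector $\ket{y}\ket{p_y}$ is supported on pairs $(y,z)$ with $P_{yz}>0$. Each such pair is an edge or a self-loop, so $\Shift$ maps it to $(z,y)$, and therefore
\[
\Shift\ket{y}\ket{p_y}=\sum_z\sqrt{P_{yz}}\ket{z}\ket{y}.
\]
Taking the inner product with $\ket{x}\ket{p_x}=\sum_w\sqrt{P_{xw}}\ket{x}\ket{w}$ forces $z=x$ and $w=y$. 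This gives $\sqrt{P_{yx}}\sqrt{P_{xy}}=D(P)_{xy}$ by \eqref{eq:disc}.

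Since $\Pi_0$ projects onto $\operatorname{span}\{\ket{x}\ket{\bar0}\}$, these entries are exactly the entries of $D(P)\otimes\ket{\bar0}\bra{\bar0}$, which proves \eqref{eq:blockencoding}. Here reversibility enters only through the requirement that $P_{xy}>0$ imply $P_{yx}>0$, so that the edge structure is symmetric and $\Shift$ acts correctly. Detailed balance then makes $D(P)$ the symmetric matrix similar to $P$.

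Finally, $\mathcal H\otimes\operatorname{span}\{\ket{\bar0}\}$ is obtained by fixing the $\log|X|$ qubits of the second register. Together with $\|D(P)\|\le1$, this makes $U_D$ a $(1,\log|X|)$-block-encoding of $D(P)$.

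The only point needing care is the claim that $\Shift$ swaps rather than fixes every basis state in the support of $\ket{y}\ket{p_y}$. This holds because the graph $G$ is defined to contain an edge wherever $P_{xy}>0$, and self-loops are handled by the case $x=y$.
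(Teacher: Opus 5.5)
Your proof is correct. The paper gives no proof of its own and cites Szegedy and Proposition~23 of KMOR, whose argument is the same as yours: $\Shift$ is a symmetric involutive permutation, so $U_D$ is a Hermitian involution, and the block identity follows from evaluating $\bra{x}\bra{p_x}\Shift\ket{y}\ket{p_y}=\sqrt{P_{xy}P_{yx}}$ on reference states.
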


Because $P(s)$ remains reversible and ergodic for every $s \in [0,1)$, the construction applies identically to it. In particular, $W(s):=W(P(s))$ has the $(+1)$-eigenvector
\begin{equation}\label{eq:Psi-n-s}
\ket{\Psi_n(s)}=\ket{v_n(s),\bar 0}
=\bigl(\cos\theta(s)\,\ket{U}+\sin\theta(s)\,\ket{M}\bigr)\ket{\bar 0},
\end{equation}
which is the quantum counterpart of the stationary distribution.
\subsection{A spatial search cost model}\label{sec:costmodel}

Inspired by Algorithm~\ref{alg:random-walk}, we may define the following three black-box primitives:
\begin{itemize}
    \item $\Setup(P)$ with cost $\mathtt{S}$: draw a sample from the stationary distribution $\pi$ of $P$;
    \item $\Update(P)$ with cost $\mathtt{U}$: perform one step of $P$;
    \item $\Checkk(M)$ with cost $\mathtt{C}$: check if a given vertex is marked.
\end{itemize}
By this convention, and by definition of the hitting time, Algorithm~\ref{alg:random-walk} finds a marked vertex with constant probability in complexity of order
\begin{equation}
\mathtt{S}+\HT(P,M)(\mathtt{U}+\mathtt{C}).
\end{equation}
When considering the Szegedy walks previously introduced, which are built in analogy with reversible Markov chains, a natural cost model arises from extending the classical primitives to their quantum counterparts:
\begin{itemize}\itemsep2pt
\item $\Setup(P)$, with cost $\mathsf{S}$: prepares the state $\ket{\pi}=\sum_{x\in X}\sqrt{\pi_x}\,\ket{x}$;
\item $\Update(P)$, with cost $\mathsf{U}$: applies any of $V(P)$, $V(P)^{\dagger}$ and $\Shift$, in relation to a fixed reference state $\ket{\bar 0}\in \mathcal{H}$;
\item $\Checkk(M)$, with cost $\mathsf{C}$: applies the marked-vertex oracle
\begin{equation}\label{eq:check}
\Checkk(M)\,\ket{x}\ket{b}=
\begin{cases}
\ket{x}\ket{b\oplus1}, & x\in M,\\[2pt]
\ket{x}\ket{b}, & x\notin M.
\end{cases}
\end{equation}
\end{itemize}
The projection $\Pi_0$ is assumed to be free in this cost model.
Still, the three operations behave quite differently in practice. The quantum $\Setup$, in particular, strays the most from the classical analogue. In fact, preparing the coherent superposition $\ket{\pi}$ is a strictly stronger requirement than merely drawing a sample from $\pi$. Some distributions can be easy to sample while hard to prepare coherently, making $\mathsf{S}$ possibly the limiting quantity.

In order to build $W(s)$, we require access to the interpolated transition operator $V(s)$. Provided the diagonal entries $P_{xx}$ are known on $M$, $V(s)$ decomposes efficiently into the native primitives.

\begin{proposition}[cf. Lemma 3 of \cite{kmor}]
\label{prop:walkstep-cost}
One application of $U_D(s)$ or $W(s)$ costs $O(\mathsf{U}+\mathsf{C})$, uniformly in $s\in[0,1)$.
\end{proposition}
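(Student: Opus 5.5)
The plan is to construct $V(s)$ explicitly from $V(P)$, one call to $\Checkk(M)$, and a controlled single-qubit rotation. Lemma~\ref{lem:blockencoding} and Definition~\ref{def:W} then give the cost of $U_D(s)$ and $W(s)$ directly. By \eqref{eq:Ps-block}, rows of $P(s)$ indexed by $x\notin M$ coincide with those of $P$, so there $V(s)$ must act as $V(P)$. For $x\in M$ the target row state is
\[
\ket{p_x(s)}=\sqrt{1-s}\,\sum_{y\neq x}\sqrt{P_{xy}}\ket y+\sqrt{(1-s)P_{xx}+s}\,\ket x .
\]
This differs from $\ket{p_x}$ only by reweighting between the component along $\ket x$ and its orthogonal complement. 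So it is obtained from $\ket{p_x}$ by a rotation in the plane spanned by $\ket x$ and the normalised remainder. The rotation angle depends only on $s$ and the known value $P_{xx}$.

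The circuit proceeds as follows. First, apply $\Checkk(M)$ to write the bit $[x\in M]$ into an ancilla, and apply $V(P)$ to produce $\ket x\ket{p_x}$. Second, compute the flag $[y=x]$ by comparing the two registers, which is free up to gate overhead in the cost model. Controlled on $x\in M$, rotate a further ancilla from $\ket 0$ to
\[
\sqrt{\tfrac{(1-s)P_{xx}+s}{P_{xx}}}\ket0+\dots
\]
Doing this coherently against $\ket{p_x}$ is the delicate part. I would instead use the standard KMOR trick: prepare a qubit in $\sqrt{1-s}\ket0+\sqrt s\ket1$ controlled on $x\in M$. On $\ket0$ apply $V(P)$. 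On $\ket1$ prepare $\ket x$ in the second register by copying. The resulting state
\[
\sqrt{1-s}\ket{x}\ket{p_x}\ket0+\sqrt s\ket x\ket x\ket1
\]
has the correct marginal amplitudes on $\ket x\ket y$. Third, uncompute the flag qubit. The flag equals $1$ only on the $\ket x\ket x$ branch, where both branches overlap, so the uncomputation needs a rotation depending on $P_{xx}$. This is where known diagonal entries are used. Finally, uncompute the marked bit with a second $\Checkk(M)$.

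Each $V(s)$ or $V(s)^\dagger$ therefore costs $O(\mathsf U+\mathsf C)$, with constants independent of $s$. The angles are merely classical parameters of fixed-size rotations. $\Shift$ is unchanged, since $P(s)$ has the same support as $P$ plus self-loops, which $\Shift$ handles by definition. $\refX$ is free. Hence $U_D(s)=V(s)^\dagger\Shift V(s)$ and $W(s)=U_D(s)\refX$ each cost $O(\mathsf U+\mathsf C)$.

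The main obstacle is the disentangling step in the third stage. One must verify that the combined rotation on the $\ket x\ket x$ component maps the superposition of both branches back to a clean ancilla. That step is an explicit $2\times2$ unitary computation using $P_{xx}$, and it is where the assumption that $P_{xx}$ is known on $M$ is essential.
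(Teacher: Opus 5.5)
Your final construction is correct, and it is essentially the standard implementation behind Lemma~3 of \cite{kmor}, which the paper cites rather than reproves. The circuit is: a $\Checkk(M)$ call; a branch qubit in $\sqrt{1-s}\,\ket{0}+\sqrt{s}\,\ket{1}$ controlled on $x\in M$; a controlled $V(P)$ on one branch versus copying $x$ on the other; a $2\times2$ rotation on the $\ket{x}\ket{x}$ component; and a second $\Checkk(M)$. The rotation sends $a\ket{0}+b\ket{1}$, with $a=\sqrt{(1-s)P_{xx}}$ and $b=\sqrt{s}$, to $\sqrt{a^2+b^2}\,\ket{0}=\sqrt{(1-s)P_{xx}+s}\,\ket{0}$, and this $P_{xx}$-dependent step is exactly where the paper's proviso that the diagonal entries are known on $M$ enters.

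You should drop the abandoned first attempt, whose amplitude exceeds one. You should also state that the circuit uses a controlled $V(P)$, and that when $P_{xx}$ varies over $M$ the rotation angle must be computed coherently from $x$ rather than being a fixed classical parameter.
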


We consider algorithms where a coherent run prepares an initial state ($\ket{\pi}$ via $\Setup$), applies a sequence of unitaries, and ends by measuring all registers, so that only its classical outcome decides whether and how to start the next run. The \emph{quantum-walk depth} of such an algorithm is the largest number of (controlled) walk operator applications ($W(s)$, $W^\dagger(s)$, $U_D(s)$ or $U_D^\dagger(s)$) within any of its coherent runs. Its \emph{hybrid complexity} counts the total cost of all calls to $\Setup$, $\Update$ and $\Checkk$ across all runs, which by Proposition~\ref{prop:walkstep-cost} is at most the number of runs times $\mathsf S$ plus the total number of walk steps times $O(\mathsf U+\mathsf C)$.

This is, in a sense, a generalisation of the hybrid query models used in~\cite{sun2019hybrid,hamoudi2024nisq,ambainis2026tight}, where coherent runs of at most $q$ oracle calls are separated by full measurements and the cost accounts for the total number of calls. Because the $\mathsf{Check}$ operation acts as the marked-set oracle and a depth $D$ run uses $O(D)$ checks, query lower bounds with $q = O(D)$ transfer directly to our checking cost. Unlike the query model, however, our hybrid complexity also charges for $\Setup$ and $\Update$, and hence depends on the number of runs and not only on the number of checks.

\subsection{Quantum fast-forwarding}
\label{sec:qff}

A central primitive for modern quantum walk search algorithms is Quantum Fast-Forwarding (QFF) \cite{ApersSarlette2019}. It allows for a quadratic ``acceleration'' of the simulation of reversible Markov dynamics by coherently realising $t$ steps of the chain with only $\widetilde{O}(\sqrt{t})$ quantum walk steps. Given that the walk operator provides a block-encoding of the discriminant matrix $D$, this is achieved by using QSVT to block-encode the monomial $D^t$ with a Chebyshev polynomial concentrated on a band of size $O(\sqrt{t})$.

We state the result with an operator-norm error guarantee $\varepsilon'$, rather than the original description in terms of normalised state error $\varepsilon$ for the target state $\ket{D^t v}:=D^t\ket{v}/\|D^t\ket{v}\|$, which can be obtained by setting $\varepsilon \ge 2\varepsilon'/\lVert D^t\ket{v}\rVert$.

\begin{theorem}[cf. Theorem 2 of \cite{ApersSarlette2019}]
\label{thm:qff}
Let $P$ be a reversible Markov chain on a finite state space $X$ with discriminant $D$. For any integer target time $t\in\mathbb{N}$ and precision $\varepsilon'\in(0,1)$, there is a quantum subroutine $\hyperlink{alg:qff}{\QFF}(\ket{v}, P, t; \varepsilon')$ that makes $O(\tau)$ applications of $W(P)$ and its inverse, with
\begin{equation}\label{eq:qff-operator}
\tau=\Bigl\lceil\sqrt{2t\ln(2/\varepsilon')}\Bigr\rceil,
\end{equation}
and block-encodes an operator $g_{t,\tau}(D)$ satisfying
\begin{equation}\label{eq:qff-bound}
\bigl\lVert g_{t,\tau}(D)-D^t\bigr\rVert\le\varepsilon'.
\end{equation}
On a normalised input $\ket v\in\operatorname{span}\{\ket{x} : x \in X\}$, \QFF{} succeeds with probability $\lVert g_{t,\tau}(D)\ket v\rVert^2$ and postselecting the successful ancilla block outputs a normalised state proportional to $g_{t,\tau}(D)\ket v$.
\end{theorem}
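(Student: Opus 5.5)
This is essentially a restatement of Apers--Sarlette, so I would prove it by explicit construction via QSVT on the block-encoding of Lemma~\ref{lem:blockencoding}.

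\emph{Step 1: the polynomial.} Write the monomial in the Chebyshev basis,
\[
x^t=2^{1-t}\sum_{j}\binom{t}{(t-j)/2}T_j(x),
\]
summing over $j\equiv t\pmod 2$ with $0\le j\le t$ (halving the $j=0$ term). Equivalently, $x^t=\mathbb E[T_{|S_t|}(x)]$, where $S_t$ is a simple $\pm1$ random walk after $t$ steps. Define $g_{t,\tau}$ by truncating this sum to $j\le\tau$.

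\emph{Step 2: the error bound.} Since $|T_j(x)|\le1$ on $[-1,1]$,
\[
\sup_{x\in[-1,1]}\bigl|g_{t,\tau}(x)-x^t\bigr|\le\Pr\bigl[|S_t|>\tau\bigr]\le 2e^{-\tau^2/(2t)}
\]
by Hoeffding's inequality. With $\tau=\lceil\sqrt{2t\ln(2/\varepsilon')}\rceil$ the right-hand side is at most $\varepsilon'$. Because $D$ is real symmetric with spectrum in $[-1,1]$, this scalar bound lifts to the operator-norm bound \eqref{eq:qff-bound}.

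\emph{Step 3: implementing the polynomial.} By Lemma~\ref{lem:blockencoding}, $U_D$ is a self-inverse block-encoding of $D$. The standard Szegedy/qubitisation identity then gives
\[
\Pi_0\,(U_D\refX)^j\,\Pi_0=T_j(D)\otimes\ket{\bar0}\!\bra{\bar0},
\]
up to sign conventions for $W$. I would verify this by induction using the Chebyshev recurrence on each two-dimensional invariant subspace. Then $g_{t,\tau}(D)$ is a linear combination of unitaries, namely the powers $W^j$ with $j\le\tau$, and its coefficients are nonnegative and sum to at most~$1$. Implementing it needs a state-preparation ancilla encoding the truncated binomial weights, followed by a controlled $W^j$ applied via binary control. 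This uses $O(\tau)$ applications of $W$ and $W^\dagger$ and yields a block-encoding of $g_{t,\tau}(D)$ with normalisation at most one. Alternatively, one could run QSVT directly on the bounded polynomial $g_{t,\tau}$, whose absolute value is at most $1+\varepsilon'$; after rescaling by a constant this also costs degree $\tau$.

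\emph{Step 4: postselection.} Apply the block-encoding to $\ket v\ket{\bar0}\ket{0}_{\mathrm{anc}}$ and measure the ancilla together with $\Pi_0$. By the definition of a block-encoding, the success probability is $\|g_{t,\tau}(D)\ket v\|^2$ and the post-measurement state is proportional to $g_{t,\tau}(D)\ket v$.

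I expect the main obstacle to be bookkeeping rather than ideas. The LCU normalisation must be exactly $\le 1$ so that the success probability is stated without a rescaling factor. This holds because the truncated coefficients sum to $\Pr[|S_t|\le\tau]\le1$, with the leftover weight assigned to a ``fail'' branch. I would also need to check the sign and parity conventions relating $W=U_D\refX$ to $T_j(D)$.
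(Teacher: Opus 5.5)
Your proposal is correct and matches the argument the paper relies on. The paper gives no separate proof; it cites Apers--Sarlette and summarises the construction as block-encoding $D^t$ through a Chebyshev expansion concentrated on a band of width $O(\sqrt t)$, which is exactly your binomial/Chebyshev truncation with the Hoeffding tail bound, implemented through $\Pi_0 W^j\Pi_0=T_j(D)\otimes\ket{\bar0}\!\bra{\bar0}$. One small tightening: the truncated coefficients are nonnegative with sum $\Pr[|S_t|\le\tau]\le1$, so $|g_{t,\tau}|\le1$ on $[-1,1]$ already, the QSVT variant needs no rescaling, and in the LCU variant the leftover ``fail'' weight must be flagged rather than merely uncomputed so it does not leak into the postselected block.
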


\section{Proof of Theorem~\ref{thm:cq_interp}: single marked vertex}
\label{sec:proof_kmor}
\subsection{KMOR search in quantum fast-forwarding form}\label{sec:algorithm}

We now detail the approach of KMOR, although from a slightly different
perspective. From Proposition~\ref{prop:principal}, the principal
eigenvector of $D(s)$ decomposes as
$\ket{v_n(s)}=\cos\theta(s)\ket U+\sin\theta(s)\ket M$.
Choosing $s$ so that this vector has substantial overlap with both
$\ket U$ and $\ket M$ suggests a simple search procedure: we start from
$\ket U$, approximately project onto $\ket{v_n(s)}$, and measure for a
marked vertex. An exact projection would then yield a marked vertex with
probability $\cos^2\theta(s)\sin^2\theta(s)$, which is maximised when
$\cos\theta(s)=\sin\theta(s)=1/\sqrt2$.

The cost therefore depends on how this approximate projection is implemented.
Krovi, Magniez, Ozols and Roland~\cite{kmor} use coherent eigenvalue estimation
on $W(s)$ to isolate its principal eigenstate. Alternatively, one might think to construct a minimax polynomial approximating the indicator of an eigenvalue
as done in Ref.\cite{lintong2020}. This approach, however, requires a lower bound on its spectral gap from the remaining spectrum, which is not assumed here. Moreover, suppressing all
non-principal modes uniformly does not account for their overlap
with $\ket U$, which is necessary to recover the hitting-time dependence.
Here, we instead argue that
QFF can also be used as an eigenfilter. Advancing the interpolated chain for $t$ steps
applies the spectral transformation $\lambda\mapsto\lambda^t$, which
preserves the principal eigenvector and suppresses the remaining modes
\begin{equation}\label{eq:kmor-filter}
D(s)^t\ket U
=
\cos\theta(s)\ket{v_n(s)}
+
\sum_{k=1}^{n-1}
\lambda_k(s)^t\braket{v_k(s)|U}\ket{v_k(s)}.
\end{equation}
The residual contribution depends both on how close the eigenvalues
are to one and on their overlap with the initial state. These are
precisely the quantities which dictate $\HT(s)$, allowing the filtering
time to be controlled by the hitting time rather than by the spectral
gap alone. The speedup then follows directly from the decaying being fast
forwarded quadratically.

This approach also completely circumvents the use of additional phase registers and of the
Quantum Fourier Transform subroutine, on which textbook formulations of eigenvalue estimation are dependent. This is particularly relevant in the context of quantum-classical
interpolation, where reducing resource requirements is a central concern. Furthermore, it also unifies the most prominent search algorithms as built from the same primitive, differing only in whether the semi-absorbing interpolation parameter is held fixed (KMOR, $\HTp$) or drawn at random (AGJK, $\HT$).

The resulting procedure is obtained by, in step~4(a),  replacing eigenvalue estimation on $W(s)$ by fast-forwarding of the interpolated chain $P(s)$. Note that, in the original KMOR algorithm, $t=\lceil\log T\rceil$ denotes the precision of eigenvalue estimation, so the depth $T$ is restricted to powers of two. Here, $t$ is instead the number of steps of the classical chain to be fast-forwarded, so that the depth $T$ in~\eqref{eq:T-qff} follows directly from QFF. Expressing the algorithm through the classical time $t$ is what allows its depth to be tuned in its interpolation, at the price of an additional (constant) error $\varepsilon_3$ from the polynomial approximation.
\begin{algorithm}[H]
\caption{\textsc{KMORQFF}$(P, M, s, t, \varepsilon_3)$}\label{alg:kmorqff}

\noindent
\begin{enumerate}
	\item Use $\Setup(P)$ to prepare the state
	$\ket{\pi}\ket{\bar{0}}$ in registers $\mathsf{R}_1\mathsf{R}_2$.
	\item Attach an ancilla $\mathsf{R}_3$ initialised to $\ket{0}$, apply
	$\Checkk(M)$ to $\mathsf{R}_1\mathsf{R}_3$, and measure
	$\mathsf{R}_3$.
	\item If $\mathsf{R}_3 = 1$: measure $\mathsf{R}_1$ in the vertex basis
	and output the result.
	\item Otherwise, discard $\mathsf{R}_3$ and:
	\begin{enumerate}
		\item[(a)] Apply \QFF$\bigl(\ket{U},P(s),t,\varepsilon_3\bigr)$ on
$\mathsf{R}_1\mathsf{R}_2$. If it reports failure, output
``fail''.
		\item[(b)] Attach $\mathsf{R}_3$, apply $\Checkk(M)$ to
		$\mathsf{R}_1\mathsf{R}_3$, and measure $\mathsf{R}_3$.
		\item[(c)] If $\mathsf{R}_3 = 1$: measure $\mathsf{R}_1$ and output
		the result. \\
		Otherwise: output ``no marked vertex.''
	\end{enumerate}
\end{enumerate}
\end{algorithm}

\begin{theorem}[QFF analogue of Theorem 6 of \cite{kmor}]\label{thm:main-qff}
	Assume that the values of $p_M$ and $\HTp(P, M)$ are known, and let
	$s \in [0,1)$, an integer $t \ge 1$, and
	$\tfrac{1}{2} \ge \varepsilon_1 \ge \varepsilon_2 +2\varepsilon_3\sin\theta(s) > 0$,
	$\varepsilon_3 \in (0,\tfrac12]$ be parameters satisfying
	\begin{equation}\label{eq:conditions-qff}
		\cos\theta(s)\,\sin\theta(s) \ge \varepsilon_1
		\qquad\text{and}\qquad
		t \ge \frac{\HT(s)}{2\,\varepsilon_2^{2}},
	\end{equation}
	where $\cos\theta(s)$, $\sin\theta(s)$ are given by
	\eqref{eq:cos-sin-theta} and $\HT(s)$ is the interpolated hitting time
	(Definition~\ref{def:ext-HT}). Then
	\KMOR$(P, M, s, t, \varepsilon_3)$ finds a marked vertex with
	success probability at least
	\begin{equation}\label{eq:success-prob-qff}
		p_M + (1 - p_M)\,
		\bigl(\varepsilon_1 - \varepsilon_2
		- 2\varepsilon_3\sin\theta(s)\bigr)^2
	\end{equation}
	and complexity of order $\mathsf{S} + T \cdot (\mathsf{U} + \mathsf{C})$,
	where
	\begin{equation}\label{eq:T-qff}
		T = \Bigl\lceil \sqrt{2t\ln(2/\varepsilon_3)} \Bigr\rceil .
	\end{equation}
\end{theorem}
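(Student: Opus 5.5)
The proof will follow the structure of Algorithm~\ref{alg:kmorqff} and split the success probability according to the first check. Step~1 prepares $\ket{\pi}\ket{\bar 0}=(\sqrt{1-p_M}\ket U+\sqrt{p_M}\ket M)\ket{\bar 0}$. Measuring the check register in step~2 then gives a marked vertex with probability $p_M$. Otherwise the state collapses to $\ket U\ket{\bar 0}$ with probability $1-p_M$. It therefore suffices to show that, conditioned on this collapse, steps~4(a)--(c) output a marked vertex with probability at least $(\varepsilon_1-\varepsilon_2-2\varepsilon_3\sin\theta(s))^2$.

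By Theorem~\ref{thm:qff} applied to $P(s)$, the successful branch of \QFF{} yields the unnormalised state $g_{t,\tau}(D(s))\ket U$, and the probability of success is already absorbed into its norm. The probability of then measuring a marked vertex is $\|\Pi_M\, g_{t,\tau}(D(s))\ket U\|^2$, where $\Pi_M$ projects onto $\operatorname{span}\{\ket x:x\in M\}$. Write $g=D^t+E$ with $\|E\|\le\varepsilon_3$. Expanding via \eqref{eq:kmor-filter}, the ideal part is
\[
\Pi_M D(s)^t\ket U=\cos\theta\sin\theta\ket M+\Pi_M\sum_{k<n}\lambda_k^t\braket{v_k|U}\ket{v_k}.
\]
The reverse triangle inequality then gives
\[
\|\Pi_M g\ket U\|\ge\cos\theta\sin\theta-\Bigl\|\sum_{k<n}\lambda_k^t\braket{v_k|U}\ket{v_k}\Bigr\|-\|\Pi_M E\ket U\|.
\]
The first term is at least $\varepsilon_1$ by \eqref{eq:conditions-qff}.

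For the residual, I make the walk lazy so that $\lambda_k\in[0,1)$. Its squared norm is $\sum_{k<n}\lambda_k^{2t}|\braket{v_k|U}|^2$. Using $\lambda^{2t}\le e^{-2t(1-\lambda)}\le \frac{1}{2t(1-\lambda)}$ (from $xe^{-x}\le 1$), this is at most $\HT(s)/(2t)\le\varepsilon_2^2$ by the hypothesis on $t$. The residual is therefore at most $\varepsilon_2$. This is where the hitting time, rather than the spectral gap, controls the filtering.

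The main obstacle is the error term. I need $\|\Pi_M E\ket U\|\le 2\varepsilon_3\sin\theta$ rather than the naive bound $\varepsilon_3$. I would exploit the structure of the QFF polynomial: $g_{t,\tau}$ is a truncated Chebyshev expansion of $x^t$ that is exact at $x=1$ up to error $\varepsilon_3$. I would decompose $\ket U=\cos\theta\ket{v_n}+\ket{U^\perp}$ and bound the contributions separately. The marked component of $E\ket{v_n}$ is at most $\varepsilon_3\sin\theta$, since $\Pi_M\ket{v_n}=\sin\theta\ket M$ and $E\ket{v_n}$ is a multiple of $\ket{v_n}$. For the nonprincipal part, I would use that $\ket U$ lies in the unmarked subspace. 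I would try to show $\|\Pi_M E\,\Pi_{v_n}^{\perp}\ket U\|\lesssim\varepsilon_3\sin\theta$ using $\ket{U^\perp}=\sin\theta(\sin\theta\ket U-\cos\theta\ket M)$, which has norm $\sin\theta$. This gives the factor $2\varepsilon_3\sin\theta$ after combining. If that fails, I would fall back on a rescaled choice of $\varepsilon_3$ in QFF, which affects only constants in $T$.

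Squaring the lower bound, which is nonnegative by the parameter assumptions, and combining with the first branch gives \eqref{eq:success-prob-qff}. For the cost, the run makes one call to $\Setup$ and two calls to $\Checkk$. Preparing $\ket U$ from the collapsed state is free, and the QFF call uses $O(T)$ applications of $W(s)$, each costing $O(\mathsf U+\mathsf C)$ by Proposition~\ref{prop:walkstep-cost}. The total complexity is therefore $O(\mathsf S+T(\mathsf U+\mathsf C))$, with $T$ given by \eqref{eq:T-qff}.
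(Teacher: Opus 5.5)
Your argument is correct and is essentially the paper's proof. Like the paper, you condition on the first check, use that the QFF polynomial acts diagonally in the eigenbasis of $D(s)$, and bound the non-principal residual by $\sqrt{\HT(s)/(2t)}\le\varepsilon_2$ via laziness and $\lambda^{2t}\le 1/(2t(1-\lambda))$. The paper differs only in bookkeeping: it keeps $g(1)\ge 1-\varepsilon_3$ on the principal term and applies Minkowski's inequality to $|g(\lambda_k)|\le\lambda_k^t+\varepsilon_3$ over the non-principal modes, whose total weight is $\sin^2\theta$.

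The step you flag as the main obstacle is in fact immediate. It needs neither Chebyshev structure nor a rescaled $\varepsilon_3$, and the latter would alter the statement anyway. With $E=g(D(s))-D(s)^t$ you have $\|\Pi_M E\cos\theta\ket{v_n}\|\le\varepsilon_3\cos\theta\sin\theta$ and $\|\Pi_M E\ket{U^\perp}\|\le\|E\|\,\|\ket{U^\perp}\|\le\varepsilon_3\sin\theta$ by the operator-norm bound alone, so $\|\Pi_M E\ket U\|\le 2\varepsilon_3\sin\theta$.
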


\begin{proof}
	Step~1 costs $\mathsf{S}$.
	By Theorem~\ref{thm:qff}, step~4(a) uses $O(T)$ applications of
	$W(s)$ and its inverse, each of which costs $O(\mathsf{U} + \mathsf{C})$
	by Proposition~\ref{prop:walkstep-cost}. The total complexity is therefore
	of order $\mathsf{S} + T \cdot (\mathsf{U} + \mathsf{C})$.

	The initial check in steps~2--3 finds a marked vertex with probability
	$p_M$. Conditioned on this failing, the state entering step~4 is
	$\ket{U}\ket{\bar{0}}$. Let $q$ denote the probability that step~4 outputs
	a marked vertex. Then the overall success probability is
	\begin{equation}\label{eq:overall-qff}
		p_M + (1 - p_M)\,q,
	\end{equation}
	so it suffices to show
	$q \ge (\varepsilon_1 - \varepsilon_2 - 2\varepsilon_3\sin\theta)^2$.\\
	Define
	\begin{equation}\label{eq:alpha-k-qff}
		\alpha_k := \braket{v_k(s) | U}.
	\end{equation}
	Expanding $\ket{U}$ in the eigenbasis of $D(s)$ and suppressing the
	dependence on $s$ for readability,
	\begin{equation}\label{eq:U-expansion-qff}
		\ket{U}\ket{\bar{0}}
		= \alpha_n \ket{v_n}\ket{\bar 0}
		+ \sum_{k=1}^{n-1} \alpha_k \ket{v_k}\ket{\bar 0}.
	\end{equation}

	Denote by $g := g_{t,\tau}$ the underlying polynomial applied by the \QFF${}$ circuit in step~4(a), with
$\tau = \lceil\sqrt{2t\ln(2/\varepsilon_3)}\rceil$, which satisfies
\begin{equation}\label{dor}
	\sup_{x\in[-1,1]}\bigl|g(x) - x^{t}\bigr| \le \varepsilon_3\,,
\end{equation}
so that \QFF${}$ maps each eigenvector as
	\begin{align}
		\ket{v_n}\ket{\bar 0} &\;\longmapsto\;
		g(1)\,\ket{v_n}\ket{\bar 0} + \ket{\perp_n},
		\label{eq:qsvt-n} \\
		\ket{v_k}\ket{\bar 0} &\;\longmapsto\;
		g(\lambda_k)\,\ket{v_k}\ket{\bar 0} + \ket{\perp_k},
		\label{eq:qsvt-k}
	\end{align}
	since $\lambda_n = 1$ by Proposition~\ref{prop:D-spectrum}.
	After applying \QFF${}$, the system is in the state
	\begin{equation}\label{eq:Phi-qff}
		\ket{\Phi}
		= \alpha_n\, g(1) \ket{v_n}\ket{\bar 0}
		+ \sum_{k=1}^{n-1} \alpha_k\, g(\lambda_k) \ket{v_k}\ket{\bar 0}
		+ \ket{\perp},
		\qquad \ket{\perp} := \sum_{k} \alpha_k \ket{\perp_k}.
	\end{equation}
	Let $\Pi_M := \sum_{x \in M} \ket{x}\!\bra{x}$ be the projector onto
	marked vertices in the first register. Then
	\begin{align}
		\sqrt{q}
		&= \bigl\|(\Pi_M \otimes \ket{\bar 0}\!\bra{\bar 0})\ket{\Phi}\bigr\|
		\label{eq:q-def-qff} \\
		&= \Bigl\| \Pi_M \Bigl(
		\alpha_n g(1)\ket{v_n} + \sum_{k=1}^{n-1}\alpha_k g(\lambda_k)\ket{v_k}
		\Bigr)\Bigr\|,
		\label{eq:q-proj-qff}
	\end{align}
	since step~4(a) discards the branch in which $\mathsf{R}_2$ is not found
	in $\ket{\bar 0}$, which removes $\ket{\perp}$ exactly.
	By the reverse triangle inequality,
	\begin{equation}\label{eq:triangle-qff}
		\sqrt{q}
		\ge \bigl\|\alpha_n\, g(1)\, \Pi_M \ket{v_n}\bigr\|
		- \left\|
		\sum_{k=1}^{n-1} \alpha_k\, g(\lambda_k) \ket{v_k}
		\right\|,
	\end{equation}
	where we used $\|\Pi_M\| \le 1$ to drop the projector in the second term
	(which only increases it, keeping the lower bound valid).

	By Proposition~\ref{prop:principal},
	$\ket{v_n} = \cos\theta\,\ket{U} + \sin\theta\,\ket{M}$, hence
	$\alpha_n = \braket{v_n | U} = \cos\theta$ and
	$\|\Pi_M \ket{v_n}\| = \sin\theta$. Moreover, at $x=1$ we have $g(1) \ge 1 - \varepsilon_3$, so the first term of
	\eqref{eq:triangle-qff} is at least
	$(1-\varepsilon_3)\cos\theta\,\sin\theta$.

	Since $\ket{v_1},\ldots,\ket{v_{n-1}}$ are mutually orthogonal, the second
	term of \eqref{eq:triangle-qff} satisfies
	\begin{equation}\label{eq:second-term-qff}
		\left\|
		\sum_{k=1}^{n-1} \alpha_k\, g(\lambda_k) \ket{v_k}
		\right\|
		= \sqrt{\sum_{k=1}^{n-1} |\alpha_k|^2\, g(\lambda_k)^2}
		\;\le\;
		\sqrt{\sum_{k=1}^{n-1} |\alpha_k|^2\, \lambda_k^{2t}}
		\;+\; \varepsilon_3\,\sin\theta ,
	\end{equation}
	where we used the triangle inequality together with $|g(\lambda_k)| \le \lambda_k^{t} +
	\varepsilon_3$ from \eqref{dor} and that the residual weight is given by
	\begin{equation}\label{eq:residual-weight}
		\sum_{k=1}^{n-1}|\alpha_k|^2
		= \|\ket{U}\|^2 - |\alpha_n|^2
		= 1 - \cos^2\theta
		= \sin^2\theta\,,
	\end{equation}
	from the decomposition of the $\ket{U}$ state.
	Combining \eqref{eq:triangle-qff} and \eqref{eq:second-term-qff} then yields
	\begin{equation}\label{eq:sqrt-q-bound-qff}
		\sqrt{q}
		\ge (1-\varepsilon_3)\cos\theta\,\sin\theta
		- \sqrt{\sum_{k=1}^{n-1} |\alpha_k|^2\, \lambda_k^{2t}}
		- \varepsilon_3 \sin\theta .
	\end{equation}
	By making the walk lazy, we may assume every $\lambda_k(s)$ lies in $[0,1]$, so that by the elementary inequalities $\lambda \le e^{-(1-\lambda)}$
	and $e^{-y} \le 1/y$ for $y > 0$, we get
	\begin{equation}\label{eq:tail-qff}
		\lambda_k^{2t}
		\le e^{-2t(1-\lambda_k)}
		\le \frac{1}{2t\,(1 - \lambda_k)} .
	\end{equation}
Now recall from Definition~\ref{def:ext-HT} that
	\begin{equation}\label{eq:HT-recall-qff}
		\HT(s) = \sum_{k=1}^{n-1}
		\frac{|\alpha_k|^2}{1 - \lambda_k(s)}.
	\end{equation}
	Then combining \eqref{eq:tail-qff} and \eqref{eq:HT-recall-qff}
	\begin{equation}\label{eq:sum-bound-qff}
		\sum_{k=1}^{n-1} |\alpha_k|^2\, \lambda_k^{2t}
		\le \frac{1}{2t} \sum_{k=1}^{n-1}
		\frac{|\alpha_k|^2}{1 - \lambda_k(s)}
		= \frac{\HT(s)}{2t}.
	\end{equation}
	Finally, substituting \eqref{eq:sum-bound-qff} into \eqref{eq:sqrt-q-bound-qff},
	and using $(1-\varepsilon_3)\cos\theta\sin\theta \ge
	\cos\theta\sin\theta - \varepsilon_3\sin\theta$
	\begin{equation}\label{eq:final-sqrt-q-qff}
		\sqrt{q}
		\ge \cos\theta(s)\,\sin\theta(s)
		- \sqrt{\frac{\HT(s)}{2t}}
		- 2\varepsilon_3\sin\theta(s)
		\ge \varepsilon_1 - \varepsilon_2 - 2\varepsilon_3\sin\theta(s),
	\end{equation}
	where the last step uses the two hypotheses in
	\eqref{eq:conditions-qff}. Squaring gives
	$q \ge (\varepsilon_1 - \varepsilon_2 - 2\varepsilon_3\sin\theta)^2$, and by
	\eqref{eq:overall-qff} the overall success probability is at least
	\[
	p_M + (1 - p_M)(\varepsilon_1 - \varepsilon_2
	- 2\varepsilon_3\sin\theta(s))^2. \qedhere
	\]
\end{proof}

The maximal value of $\cos\theta(s)\sin\theta(s)$ is $1/2$, attained when
$\theta(s) = \pi/4$, i.e., $\cos\theta(s) = \sin\theta(s) = 1/\sqrt{2}$.
By \eqref{eq:cos-sin-theta} this corresponds to the optimal interpolation
parameter
\begin{equation}\label{eq:s-optimal}
	s^{*}(p_M) = 1 - \frac{p_M}{1 - p_M},
\end{equation}
possible for any $0<p_M\le1/2$, which we previously assumed.\\
At this value, $\cos\theta(s^*)\sin\theta(s^*) = 1/2$, and by
Theorem~\ref{thm:HT-s-relation}, $\HT(s^*) = \tfrac{1}{4}\HTp(P, M)$. Setting
$\varepsilon_1 = 1/3$, $\varepsilon_2 = 1/12$ and $\varepsilon_3 = \sqrt{2}/24$, the
conditions in \eqref{eq:conditions-qff} are satisfied whenever
$t \ge 18\,\HTp(P, M)$, and since $\sin\theta(s^*) = 1/\sqrt2$, the success
probability is at least
\[
p_M + (1 - p_M) \cdot
\Bigl(\frac{1}{3} - \frac{1}{12} - \frac{1}{12}\Bigr)^{\!2}
\;=\; p_M + \frac{1-p_M}{36}
\;\ge\; \frac{1}{36}.
\]

\begin{corollary}\label{cor:main-qff}
	Assume that the values of $p_M\le1/2$ and $\HTp(P, M)$ are known, and let
	$s^* = 1 - p_M/(1-p_M)$.
	If $t=\lceil18\,\HTp(P,M)\rceil$, then
	\KMOR$(P, M, s^*, t, \sqrt{2}/24)$ finds a marked vertex with
	probability at least $1/36$, at depth
	$T = \lceil\sqrt{2t\ln(24\sqrt 2)}\,\rceil = O(\sqrt{\HTp(P,M)})$ and complexity of
	order $\mathsf{S} + T \cdot (\mathsf{U} + \mathsf{C})$.
	In particular, when $|M| = 1$, $\HTp(P, M) = \HT(P, M)$
	and the algorithm achieves a full
	quadratic speedup over the classical hitting time.
\end{corollary}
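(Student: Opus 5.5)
The corollary follows by specialising Theorem~\ref{thm:main-qff} to the parameter choice described just before it. There are four things to check: the hypotheses of that theorem hold, the success-probability bound gives $1/36$, the depth formula matches, and the $|M|=1$ claim holds.

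First, since $0<p_M\le 1/2$, the value $s^*=1-p_M/(1-p_M)$ lies in $[0,1)$. Substituting it into \eqref{eq:cos-sin-theta} gives
\[
1-s^*(1-p_M)=1-(1-p_M)+p_M=2p_M,
\]
so $\sin^2\theta(s^*)=1/2$ and $\cos^2\theta(s^*)=1/2$. Hence $\cos\theta(s^*)\sin\theta(s^*)=1/2$. By Theorem~\ref{thm:HT-s-relation}, $\HT(s^*)=\sin^4\theta(s^*)\,\HTp(P,M)=\tfrac14\HTp(P,M)$.

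Next, take $\varepsilon_1=1/3$, $\varepsilon_2=1/12$ and $\varepsilon_3=\sqrt2/24$. The parameter constraints of Theorem~\ref{thm:main-qff} hold:
\begin{itemize}
\item $\varepsilon_3\in(0,1/2]$.
\item $2\varepsilon_3\sin\theta(s^*)=2\cdot\tfrac{\sqrt2}{24}\cdot\tfrac{1}{\sqrt2}=1/12$, so $\varepsilon_2+2\varepsilon_3\sin\theta(s^*)=1/6\in(0,1/3]$ and $1/3\le1/2$.
\item $\cos\theta\sin\theta=1/2\ge\varepsilon_1=1/3$.
\end{itemize}
The time condition \eqref{eq:conditions-qff} requires
\[
t\ge \frac{\HT(s^*)}{2\varepsilon_2^2}=\frac{\HTp/4}{2/144}=18\,\HTp(P,M),
\]
which holds for $t=\lceil 18\,\HTp(P,M)\rceil$, and this $t$ is an integer at least $1$.

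Theorem~\ref{thm:main-qff} then gives success probability at least $p_M+(1-p_M)(1/3-1/12-1/12)^2=p_M+(1-p_M)/36\ge 1/36$. It also gives complexity of order $\mathsf S+T(\mathsf U+\mathsf C)$ with $T=\lceil\sqrt{2t\ln(2/\varepsilon_3)}\rceil$. Since $2/\varepsilon_3=48/\sqrt2=24\sqrt2$, this is $T=\lceil\sqrt{2t\ln(24\sqrt2)}\rceil$, and $t=O(\HTp)$ gives $T=O(\sqrt{\HTp})$.

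Finally, for $|M|=1$, $\HTp(P,M)=\HT(P,M)$ by the remark after Definition~\ref{def:ext-HT}, citing \cite{kmor}. With $m=1$ the sums \eqref{eq:HT-spectral} and \eqref{eq:HTs} range over the same number of modes, so no degenerate intermediate modes appear. This yields $T=O(\sqrt{\HT})$ against the classical $\HT$. Nothing here is a genuine obstacle; the only point needing care is the arithmetic verifying the constraint chain $\tfrac12\ge\varepsilon_1\ge\varepsilon_2+2\varepsilon_3\sin\theta>0$.
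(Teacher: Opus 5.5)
Your proposal is correct and follows the paper's argument essentially line for line. The paper likewise computes $\cos\theta(s^*)\sin\theta(s^*)=1/2$ and $\HT(s^*)=\tfrac14\HTp(P,M)$, substitutes $\varepsilon_1=1/3$, $\varepsilon_2=1/12$, $\varepsilon_3=\sqrt2/24$ into Theorem~\ref{thm:main-qff} to obtain the condition $t\ge18\,\HTp(P,M)$ and the success probability $p_M+(1-p_M)/36\ge 1/36$, and cites \cite{kmor} for $\HTp=\HT$ when $|M|=1$. Your explicit checks of the constraint chain and of $2/\varepsilon_3=24\sqrt2$ spell out arithmetic that the paper leaves implicit.
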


\subsection{The interpolation theorem}
We now show that the KMOR search algorithm admits a smooth quantum-classical
interpolation by suitably choosing the semi-absorbing walk interpolation parameter $s$ and the
fast-forwarding time $t$ jointly as functions of a quantum-classical interpolation parameter $\gamma \in [0,1]$. The same arguments can be applied with minimal changes to the original KMOR eigenvalue estimation formulation. A lower $\gamma$ restricts the maximum allowed depth $D(\gamma)$, thereby reducing the number of quantum walk steps and consequently giving a smaller per-instance success probability, which is then compensated
by classical repetition.

Both quantities are controlled by the angle $\theta(s)$, which dictates the
decomposition of the principal eigenvector of $D(s)$ in
$\operatorname{span}\{\ket{U},\ket{M}\}$ and therefore the success amplitude,
which is of order $\sin\theta(s)$. However, the hitting time is also controlled by our angle choice, since $\HT(s) = \sin^4\!\theta(s)\cdot\HTp$. Consequently, decreasing $s$ may reduce the
success amplitude, but it reduces the hitting time much more rapidly. The idea is then to choose $s$ so that the principal eigenvector remains
closer to $\ket{U}$ than in the full-depth algorithm. This reduces
the probability of obtaining a marked vertex, but also makes the
required filtering cheaper.
\\
All of these trade-offs are illustrated by (quantum-classical) interpolating the (semi-absorbing) interpolation parameter $s_\gamma$ as shown in Figure~\ref{fig:interp-arc}.

\begin{figure}[t]
  \centering
  \includegraphics[width=0.67\linewidth]{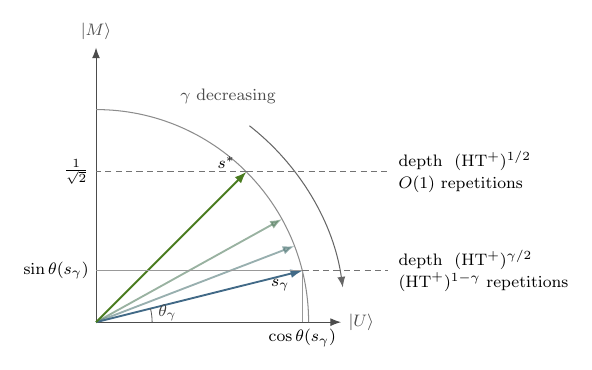}
  \caption{Geometric illustration of the quantum-classical interpolation of the semi-absorbing parameter $s_\gamma$. The principal eigenvector
  $\ket{v_n(s)}=\cos\theta(s)\ket{U}+\sin\theta(s)\ket{M}$ moves along the arc as $s$ decreases and the angle $\theta$ controls both the marked amplitude and the interpolated hitting time
  $\mathrm{HT}(s)=\sin^4\!\theta\,\mathrm{HT}^+$. Moving toward $\ket{U}$ therefore reduces the coherent depth
  (linearly in $\sin\theta$) at the cost of quadratically increasing the number of repetitions.}
  \label{fig:interp-arc}
\end{figure}

\begin{namedresult}{Theorem~\ref{thm:cq_interp} (general form)}
	Let $P$ be an ergodic, reversible Markov chain on $G = (X,E)$, let
	$M \subseteq X$ be a non-empty set of marked vertices with
	$p_M \leq \tfrac{1}{2}$, and suppose the values of $p_M$ and
	$\HTp:=\HTp(P,M)$ are known.  For any $\gamma \in [0,1]$, there exists
	a quantum walk-based algorithm with quantum-walk depth
	\begin{equation}\label{eq:Tgamma}
		D(\gamma) \;=\; O\left((\HTp)^{\gamma/2}\right)
	\end{equation}
	which finds a marked vertex with constant success probability
	and hybrid complexity of order at most
	\begin{equation}\label{eq:complexity_gamma}
		(\HTp)^{1-\gamma} \cdot \mathsf{S}
		\;+\; (\HTp)^{1-\gamma/2} \cdot (\mathsf{U} + \mathsf{C}).
	\end{equation}
\end{namedresult}

\begin{proof}
	Throughout the proof we abbreviate $H := \HTp(P,M)$.
	Since $\gamma \in [0,1]$, note that the quantity $H^{\gamma-1}$ ranges over
	$[1/H,\, 1]$. The proof is split into three regimes according to its relationship
	with~$p_M$.

	Suppose first that $p_M > H^{\gamma-1}$.  Then the marked vertices are
	dense enough to be found by independent sampling. Each draw from $\pi$ via
	$\Setup(P)$ lands on a marked vertex with probability
	$p_M = \Omega(H^{\gamma-1})$, so $O(1/p_M) = O(H^{1-\gamma})$ independent
	trials, each costing $\mathsf{S} + \mathsf{C}$, succeed with constant
	probability.  The total complexity of finding a marked vertex in this case is then
	$O(H^{1-\gamma}) \cdot (\mathsf{S} + \mathsf{C})$.  Since
	$1 - \gamma \leq 1 - \gamma/2$ for all $\gamma \in [0,1]$, we have
	$H^{1-\gamma} \leq H^{1-\gamma/2}$, so
	$H^{1-\gamma}\cdot\mathsf{C}
	\leq H^{1-\gamma/2}\cdot(\mathsf{U}+\mathsf{C})$, which fits in the claimed
	bound~\eqref{eq:complexity_gamma}.

	Suppose instead that $p_M \leq H^{\gamma-1}$, and consider first the
	sub-regime $H^{\gamma-1} \geq \tfrac{1}{2}$.  Equivalently,
	$H^{1-\gamma} \leq 2$, which forces
	$\gamma \geq 1 - \log 2 / \log H$, meaning we are exponentially close to the
	full-quantum regime and full-depth search fits
within the desired depth bound up to a constant factor. Invoking Corollary~\ref{cor:main-qff} with
	$s = s^*$ from \eqref{eq:s-optimal} and $t = 18\,H$ finds a
	marked vertex with probability at least $1/36$ and complexity
	$O\bigl(\mathsf{S} + \sqrt{H}\,(\mathsf{U}+\mathsf{C})\bigr)$ at depth
	$O(\sqrt{H})$.  Both the depth and the per-call cost are absorbed
	into the claimed bounds.

	It remains to treat the interesting regime where
	$p_M \leq H^{\gamma-1} < \tfrac{1}{2}$.  By
	Proposition~\ref{prop:principal}, $\sin^2\!\theta(s)$ is continuous and
	strictly increasing on $[0,1)$, with $\sin^2\!\theta(0) = p_M$ and
	$\sin^2\!\theta(s) \to 1$ as $s \to 1^-$.  Since
	$p_M \leq H^{\gamma-1} < \tfrac{1}{2} = \sin^2\!\theta(s^*)$
	(where $s^*$ is the optimal parameter from \eqref{eq:s-optimal}),
	the intermediate value theorem yields a unique
	$s_\gamma \in [0, s^*)$ satisfying
	\begin{equation}\label{eq:sgamma_def}
		\sin^2\!\theta(s_\gamma) \;=\; H^{\gamma-1}.
	\end{equation}
	Solving explicitly from \eqref{eq:cos-sin-theta}, this amounts to picking
	\begin{equation}\label{eq:sgamma_closed_form}
		s_\gamma \;=\; \frac{H^{\gamma-1} - p_M}{H^{\gamma-1}(1-p_M)}.
	\end{equation}
	At this value, $\sin\theta(s_\gamma) = H^{(\gamma-1)/2}$ and
	$\cos^2\!\theta(s_\gamma) = 1 - H^{\gamma-1} > \tfrac{1}{2}$.
	We now verify that Theorem~\ref{thm:main-qff} applies to
	\KMOR$(P, M, s_\gamma, t_\gamma, \varepsilon_3)$ with parameters
	\begin{equation}\label{eq:choice_eps}
		\varepsilon_1
		\;:=\; \cos\theta(s_\gamma)\,\sin\theta(s_\gamma),
		\qquad
		\varepsilon_2
		\;:=\; \frac{\sqrt{2}}{12}\, H^{(\gamma-1)/2},
		\qquad
		\varepsilon_3 \;:=\; \frac{\sqrt{2}}{24},
		\qquad
		t_\gamma \;:=\; 36\, \ceil{H^{\gamma}}.
	\end{equation}
	The overlap condition
	$\cos\theta(s_\gamma)\sin\theta(s_\gamma) \geq \varepsilon_1$ in
	\eqref{eq:conditions-qff} is immediate from the definition of
	$\varepsilon_1$.  For the fast-forwarding time condition, Theorem~\ref{thm:HT-s-relation}
	and \eqref{eq:sgamma_def} give
	\begin{equation}\label{eq:HTs_gamma}
		\HT(s_\gamma)
		\;=\; \sin^4\!\theta(s_\gamma)\cdot H
		\;=\; H^{2(\gamma-1)}\cdot H
		\;=\; H^{2\gamma-1}.
	\end{equation}
	Substituting into the
	right-hand side of the second condition in \eqref{eq:conditions-qff}
	\begin{equation}\label{eq:depth_check}
		\frac{\HT(s_\gamma)}{2\,\varepsilon_2^{2}}
		\;=\; \frac{H^{2\gamma-1}}{2}\cdot
		\frac{72}{H^{\gamma-1}}
		\;=\; 36\,H^{\gamma}
		\;\le\; t_\gamma,
	\end{equation}
	confirming the second condition.  By \eqref{eq:T-qff} the resulting
	quantum depth is
	\begin{equation}\label{eq:Tgamma_derived}
		T_\gamma
		\;=\; \Bigl\lceil \sqrt{2\,t_\gamma\ln(2/\varepsilon_3)} \Bigr\rceil
		\;=\; \Bigl\lceil
		\sqrt{72\,\ln(24\sqrt2)}\;H^{\gamma/2}
		\Bigr\rceil
		\;=\; O\!\bigl(H^{\gamma/2}\bigr),
	\end{equation}
	matching the depth budget~\eqref{eq:Tgamma}.  Note that $\varepsilon_3$ is
	constant, so no logarithmic factor in $H$ or in $\gamma$ appears from the QFF precision.
	It remains to verify the ordering
	$\tfrac{1}{2} \geq \varepsilon_1
\geq \varepsilon_2 + 2\varepsilon_3\sin\theta(s_\gamma) > 0$ required by
	Theorem~\ref{thm:main-qff}.  The bound $\varepsilon_1 \leq 1/2$ holds since
	$\cos\theta\,\sin\theta = \tfrac{1}{2}\sin(2\theta) \leq 1/2$ for all
	$\theta$.  For $\varepsilon_1 > \varepsilon_2$, recall that
	$\cos\theta(s_\gamma) > 1/\sqrt{2}$, so
	\begin{equation}\label{eq:eps1_lb}
		\varepsilon_1
		\;=\; \cos\theta(s_\gamma)\cdot H^{(\gamma-1)/2}
		\;>\; \frac{H^{(\gamma-1)/2}}{\sqrt{2}}
		\;=\; 6\,\varepsilon_2
		\;>\; \varepsilon_2.
	\end{equation}
	Moreover, since $2\varepsilon_3\sin\theta(s_\gamma)=\varepsilon_2$,
\[
    0 < \varepsilon_2 + 2\varepsilon_3\sin\theta(s_\gamma)
    = 2\varepsilon_2 < \varepsilon_1,
\]
where the last inequality follows from \eqref{eq:eps1_lb}.
	All hypotheses of
	Theorem~\ref{thm:main-qff} are therefore satisfied, and a single call to \KMOR$(P, M, s_\gamma, t_\gamma, \varepsilon_3)$
	finds a marked vertex
	with probability at least
	\begin{align}
		q_\gamma
		&\;\geq\; p_M + (1-p_M)\bigl(\varepsilon_1 - \varepsilon_2
		- 2\varepsilon_3\sin\theta(s_\gamma)\bigr)^2 \notag\\
		&\;\geq\; \bigl(\varepsilon_1 - \varepsilon_2
		- 2\varepsilon_3\sin\theta(s_\gamma)\bigr)^2 \notag\\
		&\;\geq\; \left(
		\frac{6\sqrt2}{12}\,H^{(\gamma-1)/2}
		- \frac{\sqrt2}{12}\,H^{(\gamma-1)/2}
		- \frac{\sqrt2}{12}\,H^{(\gamma-1)/2}
		\right)^{\!2} \notag\\
		&\;=\; \tfrac{2}{9}\,H^{\gamma-1},
		\label{eq:qgamma_bound}
	\end{align}
	where the third line uses $\sin\theta(s_\gamma) = H^{(\gamma-1)/2}$ in
	the last term as well as in $\varepsilon_1$.
	A marked vertex is thus found with constant probability after
	$O(1/q_\gamma) = O(H^{1-\gamma})$ independent repetitions.  Since each
	call has complexity of order
	$\mathsf{S} + T_\gamma\cdot(\mathsf{U}+\mathsf{C})$ by
	Theorem~\ref{thm:main-qff}, the total complexity is
	\begin{equation}\label{eq:case3_total}
		O(H^{1-\gamma})\cdot
		\bigl(\mathsf{S} + H^{\gamma/2}\cdot(\mathsf{U}+\mathsf{C})\bigr)
		\;=\; H^{1-\gamma}\cdot\mathsf{S}
		\;+\; H^{1-\gamma/2}\cdot(\mathsf{U} + \mathsf{C}).
	\end{equation}
\end{proof}
The proof of Theorem~\ref{thm:cq_interp} as stated in Section~\ref{sec:results} then follows by restricting this result to the single marked vertex case, where $\HTp(P,M)=\HT(P,M)$.

\section{Proof of Theorem~\ref{thm:interp-qff}: multiple marked vertices}\label{ssec:interpolated-qff}\label{sec:agjk}
We now generalise the algorithm of Ambainis, Gilyén, Jeffery and Kokainis \cite{ambainisqff} to derive an interpolated version that continuously trades coherent quantum-walk depth against total hybrid complexity. Crucially, it is governed by the ordinary hitting time $\HT(P,M)$ rather than the extended hitting time $\HTp(P,M)$ of the previous section, although it comes with additional logarithmic factors.

This distinction stems from the fact that the KMOR approach solves a harder problem \cite{ambainisqff}. Rather than merely sampling an arbitrary marked vertex, it prepares the coherent stationary state restricted to the marked set $M$. Furthermore, while (as presented here) both algorithms rely on the QFF subroutine, they diverge fundamentally in their choice of the semi-absorbing parameter. KMOR fixes $s$ to target this asymptotic stationary distribution, whereas AGJK draws $s$ (alongside the evolution time) at random to detect any first-hitting event, without necessarily reaching stationarity.

Most of the original analysis carries over, with the main new ingredient required being a lower bound on the probability that a reversible Markov chain reaches the marked set within a truncated time. Once this is established, the interpolation follows by parameterising the number of walk steps in terms of $\gamma\in[0,1]$.

\subsection{Classical trajectory bounds}

The AGJK analysis studies the marked component of the
fast-forwarded state directly, without requiring convergence to
the principal eigenvector. Its central observation is that this
component can be related to classical trajectories which start
unmarked, visit the marked set, and subsequently return to the
unmarked set, transforming the problem from tracking quantum amplitudes into a combinatorial analysis concerning the classical interpolated walk $P(s)$. The semi-absorbing
parameter controls how long these
trajectories remain marked, allowing a random observation time to
capture their marked portions.

Let
\begin{equation}\label{eq:ff-piU}
    \ket{\pi_U}:=\sum_{x\in U}\sqrt{\pi_x}\ket x
    =\sqrt{1-p_M}\ket U,
\end{equation}
denote the subnormalised restriction of the stationary state $\ket\pi$ to the unmarked subspace.
Starting from this state, the probability of subsequently
passing the QFF ancilla check and measuring a marked vertex is given by
\begin{equation}\label{eq:ff-success-norm}
    \bigl\lVert\Pi_M D(s)^t\ket{\pi_U}\bigr\rVert^2.
\end{equation}
Reversibility allows this squared norm to be interpreted as a
classical trajectory.
\begin{lemma}[Lemma~2 of \cite{ambainisqff}]
\label{lem:ff-bridge}
    Let $P$ be a reversible ergodic Markov chain with stationary
    distribution $\pi$, let $s\in[0,1)$, and let
    $Y(s)=(Y_i(s))_{i\ge0}$ be the Markov chain evolving according to $P(s)$ with
    $Y_0(s)\sim\pi$. Then, for all integers $0\le t\le t'$,
    \begin{equation}\label{eq:ff-bridge}
        \bigl\lVert\Pi_M D(s)^t\ket{\pi_U}\bigr\rVert
        \ge
        \Pr\bigl(Y_0(s)\in U,\;
                 Y_t(s)\in M,\;
                 Y_{t'}(s)\in U\bigr).
    \end{equation}
\end{lemma}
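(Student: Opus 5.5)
The plan is to compare the quantum norm with a classical path sum, working in the similarity frame $D(s)=\operatorname{diag}(\sqrt{\pi(s)})P(s)\operatorname{diag}(\sqrt{\pi(s)})^{-1}$. I will write $\pi(s)$ for the stationary distribution of $P(s)$ and $\sqrt{\pi(s)}$ for its entrywise square root. By Proposition~\ref{prop:principal}, $\sqrt{\pi(s)}$ agrees with $\sqrt{\pi}$ on $U$ up to the factor $\cos\theta(s)/\sqrt{1-p_M}$. This suggests using the chain started from $\pi(s)$ as an intermediate object, although the statement uses $Y_0\sim\pi$. Since $P(s)$ coincides with $P$ on rows indexed by $U$, and the event requires $Y_0\in U$, I will first check which initial law is intended. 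If necessary I will note that $\pi$ and $\pi(s)$ are proportional on $U$, with ratio at least $1$ in the direction that helps, because $\pi(s)_x/\pi_x=\cos^2\theta(s)/(1-p_M)\le1$ for $x\in U$. I will then track the constants.

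The first step is to expand the norm. Let $w:=\Pi_M D(s)^t\ket{\pi_U}$, so $w_y=\sum_{x\in U}\sqrt{\pi_x}\,(D^t)_{yx}$. Since $D$ is symmetric, $(D^t)_{yx}=(D^t)_{xy}=\sqrt{\pi(s)_x/\pi(s)_y}\,P(s)^t_{xy}$. Hence $w_y=\sqrt{c}\,\pi(s)_y^{-1/2}\sum_{x\in U}\pi_x P(s)^t_{xy}$, where $c$ is the proportionality constant on $U$. I will also write $a_y:=\Pr(Y_0\in U,\,Y_t=y)$, which equals $\sum_{x\in U}\pi_xP(s)^t_{xy}$ when $Y_0\sim\pi$.

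The second step is Cauchy--Schwarz. For any unit-bounded weight vector $b$ on $M$ I have $\|w\|\ge\langle w,b\rangle$. I will choose $b_y=\sqrt{\pi(s)_y}\,r_y$, where $r_y:=\Pr(Y_{t'}\in U\mid Y_t=y)=\sum_{z\in U}P(s)^{t'-t}_{yz}$. Then $\sum_{y\in M}b_y^2\le\sum_y\pi(s)_y=1$, since $r_y\le1$. With this choice, $\langle w,b\rangle=\sqrt c\sum_{y\in M}a_y r_y$, and by the Markov property this sum equals $\Pr(Y_0\in U,Y_t\in M,Y_{t'}\in U)$ up to the factor $\sqrt c$.

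The remaining and main obstacle is handling the normalisation factor $\sqrt{c}$. The comparison between $\pi$ and $\pi(s)$ costs a factor of order $\sqrt{(1-p_M)/\cos^2\theta}$ or its inverse. My first attempt will be to run the chain from $\pi(s)$ and use the proportionality on $U$ to rewrite $a_y$ exactly. If the factor goes the wrong way, I will refine the choice of $b$ by using reversibility of $P(s)$ for the segment from $t$ to $t'$: rewrite $\pi(s)_yP(s)^{t'-t}_{yz}=\pi(s)_zP(s)^{t'-t}_{zy}$ and absorb the $U$-mass of $\pi(s)$ into $\pi$ on $U$. This symmetrises the two ends, and the mismatch factors should then cancel because both endpoints lie in $U$, where $\pi$ and $\pi(s)$ are proportional.
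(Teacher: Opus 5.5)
You have the right ingredients: the expansion $w_y=\sqrt{c}\,\pi(s)_y^{-1/2}a_y$ and the test vector $b_y=\sqrt{\pi(s)_y}\,r_y$ are both correct, and $\langle w,b\rangle=\sqrt{c}\,p_{t,t'}$, where $p_{t,t'}$ is the right-hand side of \eqref{eq:ff-bridge}. But there is a gap at exactly the step that carries the content of the lemma.

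\textbf{Where it fails.} Your bound $\lVert b\rVert\le 1$ only gives $\lVert w\rVert\ge\sqrt{c}\,p_{t,t'}$. Here $c=(1-s)/(1-s(1-p_M))\le 1$, and $c\to0$ as $s\to1$, so this is strictly weaker than the claim. Even the exact marked mass $\sum_{y\in M}\pi(s)_y=\sin^2\theta(s)$ only yields the factor $\sqrt{(1-s)/p_M}$, which is below $1$ whenever $1-s<p_M$.

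Neither of your proposed fixes addresses this. Starting the chain from $\pi(s)$ rescales $a_y$ by $c$ and reproduces the same bound. The vector $b$ itself needs no change.

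\textbf{What is missing.} You need a sharp bound on $\lVert b\rVert$, obtained by applying detailed balance for $P(s)^{t'-t}$ to move the weight onto $U$:
\begin{align*}
\lVert b\rVert^2=\sum_{y\in M}\pi(s)_y\,r_y^2
&\le\sum_{y\in M}\sum_{z\in U}\pi(s)_y\,P(s)^{t'-t}_{yz}
=\sum_{z\in U}\pi(s)_z\sum_{y\in M}P(s)^{t'-t}_{zy}\\
&\le\sum_{z\in U}\pi(s)_z=c\,(1-p_M)=\cos^2\theta(s).
\end{align*}
Hence
\[
\lVert w\rVert\ge\frac{\langle w,b\rangle}{\lVert b\rVert}\ge\frac{\sqrt{c}\,p_{t,t'}}{\sqrt{c(1-p_M)}}=\frac{p_{t,t'}}{\sqrt{1-p_M}}\ge p_{t,t'},
\]
and if $b=0$ then $p_{t,t'}=0$ trivially. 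So the cancellation you anticipate does occur, but it has to be proved through this inequality rather than asserted.

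\textbf{Comparison with the standard argument.} The paper does not reprove this lemma; it imports it from AGJK. The standard argument is the coordinate-free form of the fix above. Your $b$ equals $\sqrt{c}\,\Pi_M D(s)^{t'-t}\ket{\pi_U}$, so your computation shows
\[
p_{t,t'}=\braket{\pi_U|D(s)^{t'-t}\Pi_M D(s)^t|\pi_U}.
\]
The factors $c^{\pm1/2}$ cancel because both endpoints lie in $U$, where $\pi$ and $\pi(s)$ are proportional. Cauchy--Schwarz, together with $\lVert D(s)\rVert\le 1$ and $\lVert\ket{\pi_U}\rVert=\sqrt{1-p_M}\le1$, then gives
\[
p_{t,t'}\le\lVert\Pi_M D(s)^{t'-t}\ket{\pi_U}\rVert\,\lVert w\rVert\le\lVert w\rVert.
\]
Recognising the second factor as a fast-forwarded state of the same form makes the normalisation automatic, so you never need to track $c$.
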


The remaining question is how to choose $s$ so that all three visits are observed with sufficient probability.
Recall that $P(s)$ acts exactly as $P$ on unmarked vertices,
while at a marked vertex it takes a genuine $P$-step only with
probability $1-s$. Equivalently, one may generate $Y(s)$ from $Y$ as follows: run $Y$, and every time it visits a marked vertex, prolong the stay there by an independent holding time that is geometrically distributed with success parameter $1-s$, hence of mean
\begin{equation}\label{eq:ff-r-def}
    r:=\frac{1}{1-s}.
\end{equation}
This provides a useful picture for our choice of $s$. When visits to $M$ are brief, sampling the walk at random times has a low probability of hitting a marked vertex. While increasing $r$ makes marked states easier to reach, it also delays the transitions between unmarked vertices. No single choice of $s$ can resolve this tension across all observation windows,  since the appropriate stretching depends on the trajectory, which is not known in advance.

The solution is to average over geometrically
spaced scales at the cost of logarithmic factors. For some time $T\in\mathbb N$, define
\begin{equation}\label{eq:ff-Rset}
    R:=\bigl\{1,2,4,\ldots,2^{\lceil\log_2(12T)\rceil}\bigr\},
    \qquad
    S:=\bigl\{1-\tfrac1r:r\in R\bigr\}.
\end{equation}
Choosing $s$ from $S$ successfully balances the exploration whenever the underlying chain satisfies the trajectory conditions defined by the following event $\mathcal E$.
\begin{lemma}[Corollary~7 of \cite{ambainisqff}]
\label{cor:ff-classical}
    Let $P$ be any Markov chain on $X$, not necessarily reversible,
    and let $\rho$ be any initial distribution. Let
    $\mathcal E$ be the event that the original chain
    $Y=(Y_i)_{i\ge0}$, with $Y_0\sim\rho$, satisfies:
    \begin{enumerate}
        \item[\textup{(i)}] $Y_0\in U$;
        \item[\textup{(ii)}] a marked vertex is visited within
        the first $T$ steps;
        \item[\textup{(iii)}] at most $T$ of
        $Y_{T+1},\ldots,Y_{3T}$ are marked.
    \end{enumerate}
    Choose $r\in R$, $t\in\{1,\ldots,3T\}$ and
    $t'\in\{3T+1,\ldots,24T\}$ independently and uniformly,
    and set $s=1-\tfrac1r$. Couple $Y(s)$ to $Y$ by the
    geometric holding times described above. Then, whenever
    $\Pr(\mathcal E)>0$,
    \begin{equation}\label{eq:ff-cor11}
        \mathbb E_{t,t',s}\Bigl[
        \Pr\bigl(Y_0(s)\in U,\;
                 Y_t(s)\in M,\;
                 Y_{t'}(s)\in U
                 \,\big|\,\mathcal E\bigr)\Bigr]
        =
        \Omega\!\left(\frac{1}{\log T}\right).
    \end{equation}
\end{lemma}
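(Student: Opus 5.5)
The plan is to analyse the event $\mathcal E$ pathwise and then average over the random choices of $(r,t,t')$. First I will use that the coupling of $Y(s)$ to $Y$ only inserts geometric holding times at marked vertices. Consequently $Y_0(s)=Y_0\in U$ is automatic on $\mathcal E$, and it suffices to lower-bound
\[
\Pr\bigl(Y_t(s)\in M,\;Y_{t'}(s)\in U\,\big|\,\mathcal E\bigr)
\]
averaged over $t,t',r$. Fix a realisation of $Y$ satisfying (i)–(iii), and let $\tau_1\le T$ be the first hitting time of $M$. Let $L$ be the number of consecutive marked steps of $Y$ starting at $\tau_1$; this is the length of the first marked excursion, which is at most $T+\tau_1\le 2T$ wherever it overlaps the window $[T+1,3T]$, by (iii). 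In the stretched chain this excursion occupies a random block of marked times starting at $\tau_1(s)=\tau_1$. Its length is a sum of $L$ independent geometric variables of mean $r$, so it is about $Lr$.

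The key step is a scale-selection argument. Let $r^\star\in R$ be the dyadic scale with $r^\star L\approx T$, which exists since $L\ge1$ and $R$ reaches $12T$. With constant probability the excursion block then has length $\Theta(T)$ and lies inside $[1,3T]$. A uniform $t\in\{1,\dots,3T\}$ therefore lands in a marked stretch with constant probability. Since $r^\star$ is chosen with probability $1/|R|=\Theta(1/\log T)$, this is where the $\Omega(1/\log T)$ comes from; it suffices to get a constant bound for this one scale.

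For $t'$, I will argue that by time $t'\in[3T+1,24T]$ the stretched chain is in $U$ with constant probability. Condition (iii) says at most $T$ marked steps of $Y$ occur in $[T+1,3T]$. Under stretching at scale $r^\star\approx T/L$ these occupy roughly $T\cdot r^\star$ time steps, which could be far too long. So I will restrict to the first excursion: the remaining marked steps among the $\le 2T$ original steps before $3T$ number at most $T+L$. Here I must be careful and would instead pick $r^\star$ relative to the \emph{total} marked count $K\le 2T$ in $Y_1,\dots,Y_{3T}$, so that all marked time is stretched to at most $cT$. Then the first $3T$ original steps, which include at least $T$ unmarked steps, map into $[0,\approx 4T+cT]$. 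A uniform $t'\in[3T+1,24T]$ then falls on an unmarked position with constant probability, and a Markov inequality on the geometric sums controls the fluctuations.

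The main obstacle is reconciling these two requirements at a single scale: $t$ must hit marked time (long stretch) while $t'$ must hit unmarked time (bounded total stretch). With $r^\star K\approx T$ the total marked time within the first $3T$ original steps is $\Theta(T)$, which a uniform $t\le 3T$ hits with constant probability. The unmarked time is also $\Theta(T)$ and is pushed past $3T$ only by $O(T)$, so the unmarked positions beyond $3T$ have density $\Omega(1)$ in $[3T+1,24T]$. I will check both via expectations of the geometric sums plus Markov's inequality, handling the case $K$ tiny, where the constraint $r\le 12T$ is what binds. I will then integrate over the conditional law of $Y$ given $\mathcal E$.
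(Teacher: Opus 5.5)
The paper does not prove this lemma (it is quoted from AGJK), so your argument has to stand on its own. As written it does not, and there is a genuine gap.

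The core problem is that you control the stretched chain only through counts ($L$ or $K$). The conditions, however, concern where original steps land relative to the fixed windows $\{1,\dots,3T\}$ and $\{3T+1,\dots,24T\}$. On $\mathcal E$ nothing is known about $Y_i$ for $i>3T$. So the only unmarked steps available for $t'$ are the at least $T$ unmarked ones among $Y_{T+1},\dots,Y_{3T}$, and these sit at stretched times $\le 3T$ unless the holding times delay them. You therefore need a \emph{lower} bound on the delay accumulated in front of them (to push them past $3T$), not only an upper bound (to keep them before $24T$). Your sentence ``pushed past $3T$ only by $O(T)$, so the unmarked positions beyond $3T$ have density $\Omega(1)$'' provides only the upper bound.

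Concretely, the trajectory $Y_1,\dots,Y_T\in M$, $Y_{T+1},\dots,Y_{2T}\in U$, $Y_i\in M$ for $i>2T$ satisfies (i)--(iii) with $K=2T$. Your rule then gives $r^\star=1$, so $Y(s)=Y$ and $Y_{t'}(s)\in M$ for every $t'>3T$. Changing the constant in $r^\star K\approx cT$ does not help. Take $Y_T\in M$, $Y_i\in M$ for $i>5T/2$, and all other $Y_i$ with $i\le 5T/2$ unmarked. Then $r^\star=O(1)$, and the unmarked run ends near stretched time $5T/2$ except with probability exponentially small in $T$; the scale that actually works is $r\approx 2T$. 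The $t$-half has the same defect: the $\Theta(T)$ units of marked time produced by the first $3T$ original steps can lie almost entirely after stretched time $3T$. For example, take one marked step at time $T$ and $\sqrt T$ marked steps just before $3T$, giving $r^\star\approx\sqrt T$.

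More fundamentally, the reduction ``it suffices to get a constant bound for this one scale'' is false. Suppose the unmarked steps in $\{T+1,\dots,3T\}$ come in $\Theta(\log T)$ groups of size $\Theta(T/\log T)$, separated by marked blocks of lengths $1,2,4,\dots$. The delay in front of the $k$-th group is about $2^{k}r$. So each scale moves only $O(1)$ groups (in expectation) into $\{3T+1,\dots,24T\}$, and the $t'$-probability is $O(1/\log T)$ at \emph{every} $r\in R$. The factor $1/\log T$ must therefore come from summing contributions over all scales; this is the missing idea.

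One way to do this is as follows. Let $j_1<\dots<j_T$ be unmarked indices in $\{T+1,\dots,3T\}$, so $j_i\le 2T+i$. Replace holding times by their mean. The stretched start of step $j$ is then $\sigma_r(j)=j+m_j(r-1)$, where $1\le m_j\le j$ counts the marked steps before $j$. This quantity is nondecreasing in $r$, equals $j\le 3T$ at $r=1$, exceeds $3T$ at $r=\max R\ge 12T$, and satisfies $\sigma_{2r}(j)\le 2\sigma_r(j)$.

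Hence the first $r_i\in R$ with $\sigma_{r_i}(j_i)>3T$ places $j_i$ in $(3T,6T]$. If $i\le T/2$, then at that scale $\{1,\dots,3T\}$ is covered by the images of the steps before $j_i\le 5T/2$, so at least $T/2$ of those times are marked. Each $i\le T/2$ has exactly one $r_i$, so this gives $\mathbb E_r\bigl[\Pr_t(\text{marked})\Pr_{t'}(\text{unmarked})\bigr]\ge \frac{1}{|R|}\cdot\frac16\cdot\frac1{42}$.

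You would still have to show that the geometric holding times keep each $j_i$ in the window with constant probability at (a constant-factor adjustment of) its scale. This can be done with two-sided tail bounds for sums of geometric variables, using the slack between $6T$ and $24T$.
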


The classical trajectory bound then transfers directly to the discriminant evolution of the chain, by removing the conditioning on $\mathcal E$ and averaging over $s$, $t$ and $t'$.

\begin{lemma}[cf. Corollary~8 of \cite{ambainisqff}]
\label{lem:ff-transfer}
    Let $P$ be a reversible ergodic Markov chain with stationary
    distribution $\pi$, and let $S$ be as above.
    Let $\beta>0$ and let $Y(s)$ evolve according to $P(s)$
    with $Y_0(s)\sim\pi$. If, for
    $t\in\{1,\ldots,3T\}$,
    $t'\in\{3T+1,\ldots,24T\}$ and $s\in S$
    chosen independently and uniformly,
    \begin{equation}\label{eq:ff-transfer-hyp}
        \mathbb E_{t,t',s}\Bigl[
        \Pr\bigl(Y_0(s)\in U,\;
                 Y_t(s)\in M,\;
                 Y_{t'}(s)\in U\bigr)\Bigr]
        =\Omega(\beta),
    \end{equation}
    then, for $s\in S$ and $t\in\{1,\ldots,24T\}$
    chosen independently and uniformly,
    \begin{equation}\label{eq:ff-transfer-concl}
        \mathbb E_{s,t}\Bigl[
        \bigl\lVert\Pi_M D(s)^t\ket{\pi_U}\bigr\rVert^2
        \Bigr]
        =\Omega(\beta).
    \end{equation}
\end{lemma}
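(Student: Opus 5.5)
The plan is to combine Lemma~\ref{lem:ff-bridge} with the hypothesis \eqref{eq:ff-transfer-hyp}, square the bound, and then handle the change of averaging measure on $t$. For each fixed $s\in S$ and each pair $t\le t'$, Lemma~\ref{lem:ff-bridge} gives
\[
\bigl\lVert\Pi_M D(s)^t\ket{\pi_U}\bigr\rVert^2
\ge
\Pr\bigl(Y_0(s)\in U,\;Y_t(s)\in M,\;Y_{t'}(s)\in U\bigr)^2 .
\]
Write $f(s,t):=\bigl\lVert\Pi_M D(s)^t\ket{\pi_U}\bigr\rVert^2$ and $g(s,t,t'):=\Pr(Y_0(s)\in U,Y_t(s)\in M,Y_{t'}(s)\in U)$. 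Since $t\le 3T<t'$, the lemma applies to every triple in the support of the hypothesis.

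Next I will average over $t'$ uniformly in $\{3T+1,\dots,24T\}$. Because $f$ does not depend on $t'$, Jensen's (Cauchy--Schwarz) inequality gives
\[
f(s,t)\ge \mathbb E_{t'}\bigl[g(s,t,t')^2\bigr]\ge \bigl(\mathbb E_{t'} g(s,t,t')\bigr)^2 .
\]
I then average over $s\in S$ and $t\in\{1,\dots,3T\}$ and apply Jensen again:
\[
\mathbb E_{s,\,t\le 3T}[f]\ge\bigl(\mathbb E_{s,t,t'} g\bigr)^2=\Omega(\beta)^2 .
\]

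The main obstacle is that squaring costs a factor: this route only yields $\Omega(\beta^2)$, whereas the claim is $\Omega(\beta)$. To recover linearity in $\beta$, I would first try to sharpen the bridge. The proof of Lemma~2 in \cite{ambainisqff} actually bounds the norm by a quantity whose \emph{square} is at least the trajectory probability, once one uses the full reversible expansion rather than the crude bound stated in \eqref{eq:ff-bridge}. Concretely, I would write
\[
\Pr(Y_0\in U,Y_t\in M,Y_{t'}\in U)
=\bra{\pi_U}D^t\Pi_M D^{t'-t}\ket{\pi_U}
\]
by detailed balance, using the symmetry $D=\operatorname{diag}(\sqrt\pi)P\operatorname{diag}(\sqrt\pi)^{-1}$. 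Cauchy--Schwarz then gives
\[
\Pr(\cdots)\le \lVert\Pi_M D^t\pi_U\rVert\,\lVert\Pi_M D^{t'-t}\pi_U\rVert .
\]

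This product bound is what allows linearity to be recovered. Averaging over $t$ and $t'$, with $t'-t$ ranging in $\{1,\dots,24T\}$ with at most constant-factor nonuniformity, and applying AM--GM $ab\le(a^2+b^2)/2$, I would get
\[
\beta\lesssim \mathbb E_s\Bigl[\mathbb E_{t}\,f(s,t)+\mathbb E_{u}\,f(s,u)\Bigr],
\]
where $u=t'-t$.

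The remaining step is bookkeeping. Each of the two averages is dominated, up to the constant ratio $24T/3T$ or a constant-factor density bound for $u$, by the uniform average over $\{1,\dots,24T\}$. This gives $\mathbb E_{s,t}[f]=\Omega(\beta)$ with $t$ uniform on $\{1,\dots,24T\}$, as required.
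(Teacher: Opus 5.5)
Your final argument is correct. The paper itself gives no proof of this lemma: it imports it from Corollary~8 of AGJK and says only that it follows from Lemma~\ref{lem:ff-bridge} by averaging over $s,t,t'$. Your first attempt shows that this description, read literally, is not enough. Lemma~\ref{lem:ff-bridge} as stated bounds only the unsquared norm, and squaring plus Jensen gives $\Omega(\beta^2)$. With $\beta=\tfrac{1}{\log T}\min(1,T/\HT(P,M))$ that would also square the $\min(1,T/\HT)$ factor and destroy the trade-off in Theorem~\ref{thm:interp-qff}, so the linear transfer is essential.

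What rescues linearity is your exact identity $\Pr(Y_0(s)\in U,\,Y_t(s)\in M,\,Y_{t'}(s)\in U)=\bra{\pi_U}D(s)^t\Pi_M D(s)^{t'-t}\ket{\pi_U}$, together with the product bound $g\le\sqrt{f(s,t)}\sqrt{f(s,t'-t)}$. This bound implies Lemma~\ref{lem:ff-bridge}, since the second factor is at most one, but it carries strictly more information. The range $\{1,\dots,24T\}$ in the conclusion is exactly what is needed to dominate both $t\le 3T$ and $t'-t\le 24T-1$. In a write-up you can drop the squaring route entirely. Your sentence claiming that the proof of Lemma~2 ``bounds the norm by a quantity whose square is at least the probability'' is muddled, but the displayed formulas that follow are right.

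Two details are worth making explicit.

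\emph{Stationary distribution.} The identity must be derived for $P(s)$, whose stationary distribution is $\pi(s)\neq\pi$, while $Y(s)$ is started from $\pi$. So you cannot use $\operatorname{diag}(\sqrt\pi)$ directly. Use $D(s)=\operatorname{diag}(\sqrt{\pi(s)})\,P(s)\,\operatorname{diag}(\sqrt{\pi(s)})^{-1}$ together with $\pi(s)_x=c\,\pi_x$ for $x\in U$. This proportionality follows from Proposition~\ref{prop:principal}, since $\sqrt{\pi(s)}$ restricted to $U$ equals $\cos\theta(s)\ket U$. Because both endpoints lie in $U$, the weight $\pi_x\,\pi(s)_x^{-1/2}\pi(s)_z^{1/2}$ equals $\sqrt{\pi_x\pi_z}$, and the constant $c$ cancels.

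\emph{Distribution of $t'-t$.} The variable $u=t'-t$ is far from uniform; for instance, $u=1$ occurs only for $t=3T$, $t'=3T+1$. You only need an upper bound on its mass function, which holds. For each $t$ at most one $t'$ gives a prescribed $u$, so $\Pr(u)\le 3T/(63T^2)=1/(21T)$. This is at most $8/7$ times the uniform density on $\{1,\dots,24T\}$. Combined with the factor $8$ for $t$ uniform on $\{1,\dots,3T\}$, you get $\mathbb E_{s,t}[f]\ge\tfrac{7}{32}\,\mathbb E_{s,t,t'}[g]$, with $t$ uniform on $\{1,\dots,24T\}$.
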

\subsection{Truncated trajectory analysis}
\label{ssec:trunc-hitting}
In the original analysis of Ref.\cite{ambainisqff} the final time is chosen so that it satisfies
$T=\Omega(\HT(P,M))$, where a constant lower bound on  $\Pr(\mathcal E)$ follows immediately from Markov's inequality. For the
interpolated algorithm we must instead take $T\ll\HT(P,M)$, so a different argument
is required.

The following lemma provides the required estimate. It shows that, for
reversible Markov chains started from stationarity, the probability of hitting
the marked set within time $T$ is always at least linear in
$T/\HT(P,M)$ until it reaches a constant.

\begin{lemma}\label{lem:truncated-hitting}
	Let $P$ be an ergodic reversible Markov chain on $X$ with stationary
	distribution $\pi$, and let $\emptyset \neq M \subsetneq X$ be the set of
	marked elements. Let $Z$ denote the number of update steps until the walk
	first hits $M$, conditioned on the initial vertex being unmarked, so that
	$\mathbb{E}[Z] = \HT(P,M)$ by Definition~\ref{def:HT}. Then for every integer
	$t \ge 1$,
	\begin{equation}\label{eq:truncated-hitting}
		\Pr(Z \le t) \;\ge\; \frac{1}{14}\,
		\min\!\left(1,\; \frac{t}{\HT(P,M)}\right).
	\end{equation}
\end{lemma}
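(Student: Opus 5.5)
The plan is to reduce to the lazy chain, where the absorbing block has nonnegative spectrum, and then write the tail $\Pr(Z>t)$ as a mixture of geometric tails whose mean is $\HT(P,M)$. Throughout, write $H:=\HT(P,M)$.

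\textbf{Step 1 (reduction to the lazy chain).} Let $P_L=(P+I)/2$. It has the same stationary distribution $\pi$, it is ergodic and reversible, and $\HT(P_L,M)=2H$. Couple the two walks by running $P$ and inserting independent geometric holding times. The first hitting time of $M$ under $P$ is then pathwise at most the lazy one, $Z\le Z_L$, since by time $Z_L$ the lazy walk has made at most $Z_L$ genuine $P$-steps. Both walks start from the same law, $\pi$ conditioned on $U$. Hence $\Pr(Z\le t)\ge\Pr(Z_L\le t)$, and it suffices to show
\[
\Pr(Z_L\le t)\ge c\,\min(1,t/(2H))\quad\text{with } c\ge 1/7,
\]
since $\min(1,t/(2H))\ge\tfrac12\min(1,t/H)$.

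\textbf{Step 2 (spectral form of the tail).} For a chain $Q$ that is reversible and lazy, the event $Z>t$ means the walk stays in $U$ for the first $t$ steps. Let $D_{UU}$ denote the symmetric matrix obtained by symmetrising $Q_{UU}$; it is the $U$-block of $D(Q')$. We have
\[
\Pr(Z>t)=\tfrac{1}{1-p_M}\,\langle\pi_U|D_{UU}^{\,t}|\pi_U\rangle=\sum_k w_k\lambda_k^{t},
\]
where $w_k=|\langle v'_k|U\rangle|^2$, $\sum_k w_k=1$, and the eigenvalues $\lambda_k\in[0,1)$ are those of the unmarked block. They are nonnegative by laziness and strictly below $1$ by ergodicity. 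By Proposition~\ref{prop:HT-spectral}, the same weights give $\sum_k w_k x_k=\HT$, where $x_k:=1/(1-\lambda_k)$.

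\textbf{Step 3 (elementary per-mode bound).} For $\lambda\in[0,1)$ and $t\ge1$,
\[
1-\lambda^t\ge 1-e^{-t(1-\lambda)}\ge(1-e^{-1})\min(1,t(1-\lambda)).
\]
The second inequality uses the concavity of $y\mapsto1-e^{-y}$ on $[0,1]$. Summing over modes gives
\[
\Pr(Z\le t)\ge(1-e^{-1})\sum_k w_k\min(1,t/x_k).
\]

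\textbf{Step 4 (Markov-type averaging).} This is the only genuinely non-routine point: $x\mapsto\min(1,t/x)$ is not convex, so Jensen's inequality does not apply directly. I would instead use Markov's inequality on the weights. Since $\sum_k w_kx_k=\HT$, the total weight of the modes with $x_k>2\HT$ is at most $1/2$. On the remaining modes, $\min(1,t/x_k)\ge\min(1,t/(2\HT))\ge\tfrac12\min(1,t/\HT)$. Hence
\[
\sum_k w_k\min(1,t/x_k)\ge\tfrac14\min(1,t/\HT).
\]

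\textbf{Conclusion.} Applying Steps 2--4 to $P_L$, whose hitting time is $\HT=2H$, gives
\[
\Pr(Z_L\le t)\ge\tfrac{1-e^{-1}}{4}\min(1,t/(2H)).
\]
Combining with Step 1,
\[
\Pr(Z\le t)\ge\tfrac{1-e^{-1}}{8}\min(1,t/H)\approx0.079\,\min(1,t/H)\ge\tfrac1{14}\min(1,t/H).
\]
The main care points are two: the coupling in Step 1, which is what lets us avoid eigenvalues near $-1$ for a non-lazy $P$, and checking that the conditioned spectral formula for $\Pr(Z>t)$ uses exactly the normalisation of $\ket{U}$ appearing in Proposition~\ref{prop:HT-spectral}.
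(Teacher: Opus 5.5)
Your proof is correct, and it takes a genuinely different route from the paper's.

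\textbf{The paper's route.} It never makes the chain lazy. It writes the survival function $g(u)=\Pr(Z>u)$ as $\langle U|D_{UU}^{\,u}|U\rangle$. Cauchy--Schwarz then gives the doubling inequality $g(u)^2\le g(2u)$, which involves only even powers, so negative eigenvalues are harmless. Iterating gives $g(2^jt)\ge g(t)^{2^j}$. Finally it lower-bounds the tail sum $\sum_{u\ge0}g(u)=\HT(P,M)$ over dyadic blocks, keeping only the block nearest the peak of $n\mapsto ne^{-Ln}$ with $L=\ln(1/g(t))$, after a case split on $L\ge 1/4$.

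\textbf{Your route.} You argue mode by mode, combining the per-mode inequality $1-\lambda^t\ge(1-e^{-1})\min(1,t(1-\lambda))$ with Markov's inequality on the spectral weights. The key structural fact is that at least half the weight of $\ket{U}$ sits on modes with $1/(1-\lambda_k)\le 2\HT$. This needs $\lambda_k\ge 0$: for $\lambda_k$ near $-1$ and even $t$, the mode has small $1/(1-\lambda_k)$ but $1-\lambda_k^t\approx 0$, so the per-mode bound would fail. You correctly pay for this with the pathwise coupling $Z\le Z_L$ to the lazy chain, losing a factor $2$ in the hitting time.

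\textbf{What each buys.} Your version is shorter and avoids the case analysis. It also gives a slightly better constant, $(1-e^{-1})/8\approx 0.079$ versus the paper's $1/(4e^{5/4})\approx 0.072$, and makes the spectral reason for the linear regime explicit. The paper's version works directly on $P$ without any coupling. It uses only that $g$ is non-increasing, sums to $\HT(P,M)$, and satisfies $g(2u)\ge g(u)^2$. It therefore applies verbatim to any survival function with these three properties, not only to mixtures of non-negative geometric tails.
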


\begin{proof}
	Let $g(u) := \Pr(Z > u)$ denote the survival probability, so that $g$ is
	non-increasing with $g(0) = 1$. Since $Z$ is a non-negative integer random
	variable, summing its tail probabilities recovers its mean,
	\begin{equation}\label{eq:tail-sum}
		\sum_{u \ge 0} g(u)
		= \sum_{u \ge 0} \Pr(Z > u)
		= \sum_{u \ge 0} \sum_{z > u} \Pr(Z = z)
		= \sum_{z \ge 0} z\,\Pr(Z = z)
		= \mathbb{E}[Z] = \HT(P,M),
	\end{equation}
	where the interchange of summations is valid as all terms are non-negative.

	Consider $Y_0, Y_1, \ldots$ to be the chain started from $\pi$ conditioned on
	$Y_0 \in U$. Then $Z > t$ if and only if $Y_0, \ldots, Y_t \in U$, and
	since each transition within $U$ is governed by the substochastic block
	$P_{UU}$,
	\begin{equation}
		\label{eq:survival-killed}
		g(t) = \frac{1}{1-p_M}\sum_{x,y \in U} \pi_x\,(P_{UU}^{\,t})_{xy}.
	\end{equation}
	Furthermore, let $D_{UU}$ denote the $U \times U$
	block of the discriminant matrix. By detailed balance,
	$D_{xy} = \sqrt{P_{xy}P_{yx}} = \sqrt{\pi_x/\pi_y}\,P_{xy}$, so
	entrywise
	\begin{equation*}
		D_{UU} = \mathrm{diag}(\pi_U)^{1/2}\, P_{UU}\,
		\mathrm{diag}(\pi_U)^{-1/2},
	\end{equation*}
	hence $D_{UU}^{\,t} = \mathrm{diag}(\pi_U)^{1/2}\, P_{UU}^{\,t}\,\mathrm{diag}(\pi_U)^{-1/2}$. Substituting into
	\eqref{eq:survival-killed}, and writing $\sqrt{\pi_U}$ for the vector
	with entries $\sqrt{\pi_x}$, $x \in U$, we obtain
	\begin{equation}\label{eq:survival-spectral}
		g(t)
		= \frac{\sqrt{\pi_U}^{\,\top} D_{UU}^{\,t}\, \sqrt{\pi_U}}{1 - p_M}
		= \braket{U | D_{UU}^{\,t} | U}
		= \sum_{k} c_k\, \mu_k^{\,t},
		\qquad
		c_k := |\!\braket{v_k | U}\!|^2 \ge 0,
		\qquad
		\sum_k c_k = 1,
	\end{equation}
	where $\{\mu_k, \ket{v_k}\}$ is an orthonormal eigendecomposition of
	$D_{UU}$, since $P$ is reversible and therefore $D_{UU}$ is symmetric.
	Cauchy-Schwarz then implies
	\begin{align}\label{eq:doubling}
		g(u)^2 &=\Bigl(\sum_k c_k\mu_k^{\,u}\Bigr)^{\!2} \nonumber \\ \nonumber
		&=\Bigl(\sum_k \sqrt{c_k}\cdot\sqrt{c_k}\,\mu_k^{\,u}\Bigr)^{\!2}
		\\ \nonumber&\le \Bigl(\sum_k c_k\Bigr)\Bigl(\sum_k c_k\, \mu_k^{\,2u}\Bigr)
		\\ &= g(2u),
	\end{align}
	using $\sum_k c_k = 1$, valid for every integer $u$.
	Note that squaring leaves only the even powers $\mu_k^{\,2u} \ge 0$, and therefore non-negativity of the eigenvalues is not needed and no assumption on laziness is required.
	Iterating \eqref{eq:doubling} bounds $g$ at every time $2^j t$, with $j\ge0$. By
	induction on $j$, the case $j = 0$ being trivial, $g(2^{j+1}t) \ge
	g(2^j t)^2 \ge \bigl(g(t)^{2^j}\bigr)^2 = g(t)^{2^{j+1}}$, so that
	\begin{equation}\label{eq:dyadic-pin}
		g(2^j t) \;\ge\; g(t)^{2^j}
		\qquad\text{for every } j \ge 0.
	\end{equation}
	To exploit the doubling inequality, we estimate the tail sum at
	geometrically increasing times, which is sufficient to bound $g(t)$ through \eqref{eq:tail-sum}.
	Discarding the terms with $u < t$ and partitioning the rest into
	dyadic blocks $\{2^j t, \dots, 2^{j+1}t - 1\}$, each containing $2^j t$
	indices on which $g$ is at least its right-endpoint value $g(2^{j+1}t)$ (since it is non-increasing),
	\begin{align}\label{eq:dyadic}
		\HT(P,M) &= \sum_{u \ge 0} g(u)
		\ge\;\sum_{u\geq t}g(u) \quad\quad,\forall t\in\mathbb N\nonumber
		\\&=\; \sum_{j\geq 0}\sum_{u=2^jt}^{2^{j+1}t-1}g(u) \nonumber
		\\&\ge\; \sum_{j \ge 0} 2^j t\, g(2^{j+1} t) \nonumber
		\\&\ge\; t \sum_{j \ge 0} 2^j\, g(t)^{\,2^{j+1}}.
	\end{align}
	Now, set $L := \ln(1/g(t)) > 0$. If $L \ge \tfrac14$, then $g(t) = e^{-L} \le e^{-1/4}$
	and the bound is trivial
	\begin{equation*}
		\Pr(Z \le t) = 1 - g(t) \;\ge\; 1 - e^{-1/4}
		\;>\; \tfrac15 \;\ge\; \tfrac{1}{14}\min\!\bigl(1, t/\HT(P,M)\bigr).
	\end{equation*}
	Assume henceforth $L < \tfrac14$. Each term of the sum in
	\eqref{eq:dyadic} can be written, with $n := 2^{j+1}$, as
	\begin{equation*}
		t\, 2^j\, g(t)^{\,2^{j+1}} = \tfrac{t}{2}\, n\, e^{-Ln}.
	\end{equation*}
	Since the map $n \mapsto n\,e^{-Ln}$ peaks at $n = 1/L$, it suffices to
	retain the single term whose index is a power of two nearest that peak. Because its endpoints differ by a factor of two and
	since $L < \tfrac14$, the interval
	$[\tfrac{1}{2L}, \tfrac1L]$ contains some $n = 2^{j+1}$ with $j \ge 0$. Furthermore, this
	$n$ satisfies $\tfrac{1}{2L}\leq n \leq \tfrac1L$, so retaining only this term of
	\eqref{eq:dyadic} gives the bound
	\begin{equation*}
		\HT(P,M) \;\ge\; \tfrac{t}{2}\, n\, e^{-Ln}
		\;\ge\;\frac{t}{4eL}.
	\end{equation*}
	Finally, the elementary inequality $\ln(1/x) \le (1-x)/x$ on $(0,1]$ gives
	$L \le (1 - g(t))/g(t) \le e^{1/4}(1 - g(t))$. Substituting into $\HT(P,M) \ge t/(4eL)$ and
	rearranging,
	\begin{equation}
		\Pr(Z \le t) = 1 - g(t)
		\;\ge\; \frac{t}{4 e^{5/4}\,\HT(P,M)}
		\;\ge\; \frac{1}{14}\cdot\frac{t}{\HT(P,M)},
	\end{equation}
	since $4 e^{5/4} \le 14$. Combining the two cases, and noting that the
	right-hand side of \eqref{eq:truncated-hitting} is capped at $\tfrac{1}{14}$
	once $t \ge \HT(P,M)$, gives the desired result.
	\end{proof}

This result allows the trajectory analysis of Ref.\cite{ambainisqff} to be extended to arbitrary times $T \ge 1$, with the expected loss of a
factor $\min(1,T/\HT(P,M))$ in the per-run success probability.

\begin{lemma}\label{cor:qff-trunc-perrun}
	Let $P$ be a reversible ergodic Markov chain with stationary distribution
	$\pi$ and let $T \ge 1$ be an integer. Suppose
	\begin{equation}\label{eq:qff-trunc-pm}
		p_M \;\le\; \frac{1}{112}\,\min\!\Biggl(1,\frac{T}{\HT(P,M)}\Biggr).
	\end{equation}
	Then, choosing $s \in S$ and $t \in \{1,\dots,24T\}$ independently and
	uniformly at random,
	\begin{equation}\label{eq:qff-trunc-perrun}
		\mathbb{E}_{s,t}\Bigl[\bigl\lVert \Pi_M D(s)^t
		\ket{\pi_U}\bigr\rVert^2\Bigr]
		\;=\; \Omega\!\left(\frac{1}{\log T}\,
		\min\!\Biggl(1,\frac{T}{\HT(P,M)}\Biggr)\right).
	\end{equation}
\end{lemma}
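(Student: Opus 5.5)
The plan is to chain three earlier results: Lemma~\ref{cor:ff-classical}, Lemma~\ref{lem:ff-transfer}, and Lemma~\ref{lem:truncated-hitting}. The only new input needed is a lower bound on $\Pr(\mathcal E)$ for the chain $Y$ started from $\rho=\pi$. Write $m_T:=\min(1,T/\HT(P,M))$. If we can show $\Pr(\mathcal E)=\Omega(m_T)$, then Lemma~\ref{cor:ff-classical} gives
\[
\mathbb E_{t,t',s}\bigl[\Pr(Y_0(s)\in U,\,Y_t(s)\in M,\,Y_{t'}(s)\in U)\bigr]\;\ge\;\Pr(\mathcal E)\cdot\Omega(1/\log T)\;=\;\Omega\bigl(m_T/\log T\bigr).
\]
This uses only that unconditional probabilities dominate conditional probabilities times $\Pr(\mathcal E)$. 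Applying Lemma~\ref{lem:ff-transfer} with $\beta=m_T/\log T$ then yields \eqref{eq:qff-trunc-perrun}.

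To bound $\Pr(\mathcal E)$ from below, take the three conditions in turn.
\begin{enumerate}
\item[(i)] $\Pr(Y_0\in U)=1-p_M\ge 1/2$.
\item[(ii)] Conditioned on $Y_0\in U$, the number of steps until $M$ is hit is exactly the variable $Z$ of Lemma~\ref{lem:truncated-hitting}. Hence $\Pr(\text{(ii)}\mid Y_0\in U)\ge \tfrac1{14}m_T$, and so $\Pr(\text{(i)}\wedge\text{(ii)})\ge \tfrac{1}{28}m_T$.
\item[(iii)] Bound the probability that it fails using Markov's inequality on the number $N$ of marked vertices among $Y_{T+1},\dots,Y_{3T}$. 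By stationarity, $\mathbb E[N]=2Tp_M$. So $\Pr(N>T)\le 2p_M\le \tfrac{2}{112}m_T=\tfrac1{56}m_T$ by hypothesis \eqref{eq:qff-trunc-pm}.
\end{enumerate}
Combining these by a union bound,
\[
\Pr(\mathcal E)\;\ge\;\Pr(\text{(i)}\wedge\text{(ii)})-\Pr(N>T)\;\ge\;\bigl(\tfrac1{28}-\tfrac1{56}\bigr)m_T\;=\;\tfrac1{56}m_T\;>\;0.
\]
This is exactly why the constant $1/112$ appears in the hypothesis. Positivity of $\Pr(\mathcal E)$ also ensures that Lemma~\ref{cor:ff-classical} is applicable.

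The main subtlety is step (ii), because $Z$ counts update steps starting from time $0$. "Visited within the first $T$ steps" must therefore be matched with $Z\le T$, including the convention for whether $Y_0$ itself is counted. Since $Y_0\in U$ by (i), the event $Z\le T$ means some $Y_i\in M$ with $1\le i\le T$, which is what (ii) requires.

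A second point to check is the edge case $M=X$. This is excluded because $p_M\le 1/112<1$, so Lemma~\ref{lem:truncated-hitting} applies. The rest of the argument is bookkeeping of constants inside the $\Omega(\cdot)$.
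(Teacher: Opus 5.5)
Your proposal is correct and matches the paper's proof step for step. You obtain the same bound $(1-p_M)\Pr(Z\le T)\ge\frac{1}{28}\min(1,T/\HT(P,M))$ from Lemma~\ref{lem:truncated-hitting}, the same Markov bound $2p_M$ on the failure of condition (iii), giving $\Pr(\mathcal E)\ge\frac{1}{56}\min(1,T/\HT(P,M))$, and then apply Lemmas~\ref{cor:ff-classical} and~\ref{lem:ff-transfer} with $\beta=\frac{1}{\log T}\min(1,T/\HT(P,M))$; your remarks on the indexing convention for $Z$ and on excluding $M=X$ are minor clarifications the paper leaves implicit.
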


\begin{proof}
	Consider the original chain $(Y_u)_{u\ge0}$ started from
$\pi$ and without conditioning on $Y_0\in U$.
	Let
	$\tau_M := \min\{u \ge 0 : Y_u \in M\}$ and recall the event $\mathcal E$ of
	Lemma~\ref{cor:ff-classical} defined by the following conditions: (i)~$Y_0 \in U$, (ii)~$M$ is hit within
	the first $T$ steps ($\tau_M\le T$) and (iii)~at most $T$ of the $2T$ steps in the window
	$[T,3T]$ are marked. Let $Z$ be the hitting time of $M$ from $\pi$
	conditioned on $Y_0 \in U$, so $\mathbb{E}[Z] = \HT(P,M)$ by
	Definition~\ref{def:HT}.

	Furthermore, note that by assumption $p_M\le 1/112<1/2$, so by Lemma~\ref{lem:truncated-hitting}
	\begin{equation}\label{eq:qff-trunc-12}
		\Pr(Y_0 \in U,\ \tau_M \le T)
		= (1-p_M)\,\Pr(Z \le T)
		\;\ge\; \tfrac12\cdot\tfrac{1}{14}\,
		\min\!\Biggl(1,\frac{T}{\HT(P,M)}\Biggr).
	\end{equation}
	It remains to include condition (iii), which requires at
most $T$ visits to the marked set during the interval $(T,3T]$.
	By stationarity we have
	$\mathbb{E}\Bigl[
  \bigl|\{i\in(T,3T]:Y_i\in M\}\bigr|
\Bigr]=2Tp_M$, so using Markov's
	inequality we get that $\Pr(\neg(\mathrm{iii})) \le 2p_M$. Using the elementary bound
	$\Pr(A\wedge B\wedge C) \ge \Pr(A\wedge B) - \Pr(\neg C)$ together with
	\eqref{eq:qff-trunc-12} and \eqref{eq:qff-trunc-pm} yields
	\begin{align}\label{eq:qff-trunc-E}
		\Pr(\mathcal E)
		&\ge\; \frac{1}{28}\min\!\Biggl(1,\frac{T}{\HT(P,M)}\Biggr)
		- 2p_M\ge\; \frac{1}{56}\min\!\Biggl(1,\frac{T}{\HT(P,M)}\Biggr).
	\end{align}
	Since $\mathcal E$ depends only on the base trajectory and not on
	$t,t',s$, removing the conditioning and applying
	Lemma~\ref{cor:ff-classical},
	\begin{align}
		\mathbb{E}_{t,t',s}\bigl[\Pr(Y_0(s)\in U,\,Y_t(s)\in M,\,
		Y_{t'}(s)\in U)\bigr]
		&\ge\; \Pr(\mathcal E)\cdot\;\Omega\!\left(\frac{1}{\log T}\right) \nonumber
		\\&=\; \Omega\!\left(\frac{1}{\log T}\min\!\biggl(1,\frac{T}{\HT(P,M)}\biggr)\right),
	\end{align}
	with $t \in \{1,\dots,3T\}$, $t' \in \{3T+1,\dots,24T\}$, $s\in S$ uniform.
	Applying Lemma~\ref{lem:ff-transfer} with
	$\beta = \tfrac{1}{\log T}\min(1,T/\HT(P,M))$ yields \eqref{eq:qff-trunc-perrun}.
\end{proof}

\subsection{The interpolation theorem}\label{ssec:qff-interp-thm}

We now state the algorithm and interpolation result.

\bigskip

\begin{algorithm}[H]
\caption{\textsc{InterpolatedAGJK}$(P, M, H, \gamma)$, for $\gamma\in[0,1]$}\label{alg:agjk}

\noindent Set $T_\gamma := \lceil H^{\gamma}\rceil$ and
$\varepsilon' := \Theta\!\bigl(H^{(\gamma-1)/2}/\sqrt{1+\gamma\log H}\bigr)$.

\begin{enumerate}
	\item Using $\Setup(P)$ and uniform preparations on the time and scale
	registers, prepare
	\[
	\frac{1}{\sqrt{24T_\gamma}}\sum_{t=1}^{24T_\gamma}\ket{t}_{\mathsf R_t}
	\otimes
	\frac{1}{\sqrt{|S|}}\sum_{s\in S}\ket{s}_{\mathsf R_s}
	\otimes
	\ket{\pi}_{\mathsf R_1}\ket{\bar0}_{\mathsf R_2}.
	\]
	\item Attach a check ancilla, apply $\Checkk(M)$ to the vertex register
	$\mathsf R_1$, and measure it. If the outcome is marked, measure
	$\mathsf R_1$ in the vertex basis and output the result. Otherwise
	$\mathsf R_1$ holds the subnormalised state $\ket{\pi_U}$.
	\item Controlled on $\mathsf R_t\mathsf R_s$, apply
\QFF$\bigl(\ket{\pi_U},P(s),t,\varepsilon'\bigr)$,
using polynomial precision $\varepsilon'$, whose successful branch before normalisation maps $\ket t\ket s\ket{\pi_U}\ket{\bar0}$ to
\[
\ket t\ket s\,
g_{t,\tau}(D(s))\ket{\pi_U}\ket{\bar0},
\]
where $g_{t,\tau}$ is the truncated
	polynomial applied by QFF.
If QFF succeeds, measure the vertex register and apply
$\Checkk(M)$; if the vertex is marked, output it and stop.
Otherwise, output \texttt{No marked vertex found}.
\end{enumerate}
\end{algorithm}
\bigskip

\begin{namedresult}{Theorem~\ref{thm:interp-qff}}

	Let $P$ be a reversible ergodic Markov chain on a finite state space $X$,
	let $M \subseteq X$ be non-empty, and suppose a bound $\HT \ge \HT(P,M)$ is
	known. For every $\gamma \in [0,1]$,
	there is a quantum walk-based algorithm of maximum walk depth
	\begin{equation}
		D(\gamma) \;=\; O\!\left(\HT^{\gamma/2}\,
		\sqrt{(1-\gamma)\log \HT + \log\log \HT}\right),
	\end{equation}
	which outputs a marked vertex with constant success probability and total complexity of order at most
	\begin{equation}
		(1+\gamma\log \HT)\,\HT^{1-\gamma}\cdot\mathsf S
		\;+\;
		(1+\gamma\log \HT)\,\HT^{1-\gamma/2}\,
		\sqrt{(1-\gamma)\log \HT + \log\log \HT}\,\cdot(\mathsf U+\mathsf C)\,.
	\end{equation}
\end{namedresult}

\begin{proof}
	Throughout the proof we abbreviate the given upper bound as $H := \HT$. Consider the routine \IAGJK$(P, M, H, \gamma)$, which sets $T \leftarrow T_\gamma := \lceil H^{\gamma}\rceil$, so $1 \le T \le H + 1$ and
	$\min(1, T/\HT(P,M)) \ge H^{\gamma-1}$, since $\HT(P,M) \le H$.

	Suppose $p_M \le \tfrac{1}{112}\min(1, T/\HT(P,M))= \Omega(H^{\gamma-1})$ then, by Lemma~\ref{cor:qff-trunc-perrun}, choosing $s\in S$ and $t\in\{1,...,24T_\gamma\}$ uniformly at random, we have
	\begin{equation}\label{eq:qgamma-def}
		\mathbb{E}_{s,t}\Bigl[\bigl\lVert \Pi_M D(s)^t
		\ket{\pi_U}\bigr\rVert^2\Bigr] = \Omega\!\left(\frac{1}{\log T_ \gamma}\,H^{\gamma-1}\right).
	\end{equation}
	Let $g_{t,\tau}(D(s))$ denote the QSVT polynomial implemented in step 3 approximating $D(s)^t$ with degree $\tau=\ceil{\sqrt{2t\ln(2/\varepsilon')}}$, where $\varepsilon'$ is the operator-norm approximation error to be chosen and defined by
	\begin{equation}
		\lVert g_{t,\tau}(D(s)) - D(s)^t\rVert \le \varepsilon'.
	\end{equation}
	Then, by the reverse triangle inequality
	\begin{align}
		\bigg|\left\lVert \Pi_M g_{t,\tau}(D(s))\ket{\pi_U}\right\rVert - \left\lVert\Pi_M D(s)^t\ket{\pi_U}\right\rVert\bigg|&\le\;\left\lVert \left(\Pi_M g_{t,\tau}(D(s))-\Pi_M D(s)^t\right)\ket{\pi_U}\right\rVert\nonumber\\&
		\le\;\left\lVert \Pi_M g_{t,\tau}(D(s))-\Pi_M D(s)^t\right\rVert\left\lVert\ket{\pi_U}\right\rVert\nonumber\\\nonumber&\le\;\left\lVert g_{t,\tau}(D(s)) - D(s)^t\right\rVert \\&\le\; \varepsilon',
	\end{align}
	since projection can only shrink the operator norm and $\left\lVert\ket{\pi_U}\right\rVert\le1$.
	Therefore,
	\begin{align}
		\left\lVert \Pi_M g_{t,\tau}(D(s))\ket{\pi_U}\right\rVert^2&\ge\;\left\lVert \Pi_M D(s)^t\ket{\pi_U}\right\rVert^2-2\left\lVert \Pi_M D(s)^t\ket{\pi_U}\right\rVert\,\varepsilon'.
	\end{align}
	Taking the uniform expectation over $(s,t)$ and using Jensen's inequality gives the lower bound
	\begin{align}\label{eq:succpqff}
		\mathbb{E}_{s,t}\Bigl[\bigl\lVert \Pi_M g_{t,\tau}(D(s))
		\ket{\pi_U}\bigr\rVert^2\Bigr]&\ge\;\mathbb{E}_{s,t}\Bigl[\bigl\lVert \Pi_M D(s)^t
		\ket{\pi_U}\bigr\rVert^2-2\varepsilon'\bigl\lVert \Pi_M D(s)^t
		\ket{\pi_U}\bigr\rVert\Bigr]\nonumber
		\\&\ge\;\mathbb{E}_{s,t}\Bigl[\bigl\lVert \Pi_M D(s)^t
		\ket{\pi_U}\bigr\rVert^2\Bigr]-2\varepsilon'\mathbb{E}_{s,t}\Bigl[\bigl\lVert \Pi_M D(s)^t
		\ket{\pi_U}\bigr\rVert\Bigr]
	\end{align}
	Then it is sufficient to take the operator-norm precision to be of order
\begin{align}
	\label{eq:qffsearchprecision}
	\varepsilon' =\; \Theta\left(\frac{H^{(\gamma-1)/2}}{\sqrt{\log T_ \gamma}}\right) =\;
	\Theta\left(\frac{H^{(\gamma-1)/2}}{\sqrt{1+\gamma\log H}}\right)\,,
\end{align}
where the additive $1$ factor was included to prevent divergence at $\gamma=0$ whereas it was initially omitted in Lemma~\ref{cor:ff-classical} since $|S|=1+\log T_\gamma=O(\log T_\gamma)$.

Under the choice of precision from \eqref{eq:qffsearchprecision}, the QSVT polynomial approximating the fast-forwarded Markov chain $D(s)^t$  has degree of order
\begin{align}
	\tau &=\; O\left(\sqrt{T_\gamma\log(2/\varepsilon')}\right)
	\nonumber \\ &= \; O \left(H^{\gamma/2}\sqrt{(1-\gamma)\log H+\log(1+\gamma\log H)}\right) \nonumber \\&=\;
	 O \left(H^{\gamma/2}\sqrt{(1-\gamma)\log H+\log\log H}\right)\,.
\end{align}
The QSVT implementation of $g_{t,\tau}(D(s))
$ uses $O(\tau)$ applications of the operator $U_{D}(s)$	since the implemented polynomial degree exactly matches the QSVT circuit depth. So the depth is
\begin{equation}
	D(\gamma) = \, O \left(H^{\gamma/2}\sqrt{(1-\gamma)\log H+\log\log H}\right)\,.
\end{equation}
The success probability is directly given by \eqref{eq:succpqff} under the optimal choice of $\varepsilon'$, which equates to \eqref{eq:qgamma-def}. Therefore, repeating the algorithm a total of
\begin{align}
	R(\gamma)= O\left(\left(1+\gamma\log H\right)H^{1-\gamma}\right)
\end{align}
yields a marked vertex with constant success probability.
Since each application of $U_{D}(s)$ has cost $O(\mathsf U + \mathsf C)$, then the total cost is of order at most
\begin{align}
	\label{eq:complqffint}
	R(\gamma)\cdot\left(\mathsf S + D(\gamma)\cdot\left(\mathsf U +\mathsf C\right)\right) &= (1+\gamma\log H)H^{1-\gamma}\cdot\mathsf S\nonumber + (1+\gamma\log H)H^{1-\gamma}\\
	&\quad
	\cdot H^{\gamma/2}\sqrt{(1-\gamma)\log H + \log\log H}
	\cdot(\mathsf U+\mathsf C)\nonumber \\
	&= (1+\gamma\log H)H^{1-\gamma}\cdot\mathsf S\nonumber \\
	&\quad + (1+\gamma\log H)H^{1-\gamma/2}
	\sqrt{(1-\gamma)\log H + \log\log H}\cdot(\mathsf U+\mathsf C).
\end{align}
If $p_M > \tfrac{1}{112}\min(1, T/\HT(P,M)) \geq \tfrac{1}{112}H^{\gamma-1}$, the initial $\Checkk(M)$ on
$\ket{\pi}$ (step 2) returns a marked vertex with probability
$p_M = \Omega(H^{\gamma-1})$ and hence a marked vertex is output with constant success probability after $O\left(H^{1-\gamma}\right)=O(R(\gamma))$ repetitions, which is absorbed into the bound \eqref{eq:complqffint}.
\end{proof}

\section{Discussion}\label{Discussion}

In this work we have extended the range of quantum-classical interpolation results to the spatial search setting. Previous interpolations schemes included unstructured search~\cite{sun2019hybrid}, amplitude and phase estimation~\cite{wang2019accelerated,giurgica2022qjournal,Rall2023amplitudeestimation,magano2022simplifying}, and AND-OR trees~\cite{murca2024making}. Interpolating the KMOR and AGJK algorithms, we obtain families parameterised by $\gamma\in[0,1]$ in which, writing $H$ for the relevant hitting time ($\HTp$ for KMOR and $\HT$ for AGJK), a marked vertex is found with constant probability with cost
\begin{equation}
    \widetilde O\!\left(
        \frac{H}{D^2}\cdot\mathsf S
        \;+\;
        \frac{H}{D}\cdot(\mathsf U+\mathsf C)
    \right),
\end{equation}
without using circuits of depth greater than $D$.

The dependence on $\mathsf C$ is optimal (up to logarithmic factors) in the worst case over graphs, as spatial search on the complete graph recovers unstructured search \cite{sun2019hybrid}.
This lower bound does not, however, constrain the dependence on
$\mathsf S$ or $\mathsf U$. To the best of our knowledge, no
corresponding lower bounds are known in a model that accounts
separately for setup, update, and checking costs. Determining whether
these dependencies are optimal, or can be improved by different
algorithmic approaches, remains an open problem.

The difference in the relevant hitting times reflects the current understanding of quantum walk search with multiple marked vertices. Based on numerical evidence, Ref.\cite{ambainisqff} conjectured that a suitable instance dependent choice of the randomised parameters always suffices to find a marked vertex in $O(\sqrt{\HT})$ steps, which to our knowledge remains unresolved. A proof would be a natural starting point for an interpolation in terms of $\HT$ free of logarithmic factors, for any number of marked vertices.

Despite these differences, both families share a limitation that is invisible in previous interpolations. Both algorithms reduce the number of walk steps in terms of the hitting time, which is defined for a walk started from the stationary distribution.
This means that, in this framework, any (quantum) attempt at finding a marked vertex, be it at full or limited-depth, must begin by performing some initial state preparation routine to produce a quantum state encoding the stationary distribution. Reducing the number of coherent walk steps hereby results in a decrease in the success probability of each run, which must be compensated by taking more classical samples through fresh invocations of the search algorithm, leading to the stationary distribution being prepared just as many times. This creates a dichotomy where the usefulness of shorter depth search rests on the relation between the cost of initialising the system to the stationary distribution and that of advancing the walk by one step.

Naturally, one might consider an alternative approach to mitigate the setup overhead, where, rather than restarting the algorithm from scratch after a failed measurement, we simply recycle the post-measurement state indefinitely.
While this strategy could successfully reduce the state preparation cost to $O(1)$, it pays the price in terms of coherence requirements, fundamentally challenging the core motivation behind quantum-classical interpolation in the first place. What state recycling achieves is to push all $\widetilde O(H^{1-\gamma})$ independent repetitions of depth $\widetilde O(H^{\gamma/2})$ to a single extended coherent run of depth $\widetilde O(H^{1-\gamma/2})$.
Consequently, at the classical limit, this approach simply produces a quantum analogue of the classical random walk search where the walk is interleaved with partial measurements in between every step, although coherently. This heavily penalises the algorithm with high coherence requirements and strictly worse complexity compared to the fully quantum baseline. Altogether, this suggests that constructing a family of algorithms capable of perfectly achieving both limits requires a fundamental departure from these approaches. Most notably, it demands eliminating the reliance on initial state preparations and, consequently, departing from the traditional hitting time framework, which is dependent on the stationary distribution.

The electric network framework of Belovs~\cite{belovs2013quantum} offers such a departure, allowing the walk to start from an arbitrary initial distribution, with the hitting time replaced by the effective resistance between its support and the marked set, which is related to the commute time of the classical walk~\cite{chandra}. Originally formulated for detection, it was extended to finding by Piddock~\cite{piddock2019} and by Apers, Gilyén and Jeffery~\cite{apers2021unified} using the QFF primitive, up to logarithmic factors. While starting from a single vertex makes the initial state preparation essentially free, the effective resistance to the marked set can be substantially larger than the hitting time, essentially trading the setup penalty for a potentially longer walk. Accordingly, this establishes an interpolation family in this framework as a complementary alternative to ours, where the optimal choice between the two families is dictated by graph topology and hardware costs.

Whether a depth-limited family of algorithms can be constructed within this framework to seamlessly interpolate between the classical and fully coherent limits (evading the bottlenecks of the hitting time setting) stands as a compelling open problem for future work. More broadly, the interpolation of other quantum walk-based algorithms, namely element distinctness~\cite{ambainis2007quantum} (for which an upper bound is noted in \cite{ambainis2026tight}), triangle finding~\cite{magniez2007triangle}, evaluation of AND-OR formulas~\cite{farhi2008nand,ambainis2010andor} and the quadratic acceleration of Markov chain Monte Carlo methods, such as simulated annealing and the estimation of partition functions~\cite{somma2008annealing,harrow2020adaptive,cornelissen2023sublinear} also constitutes a natural direction.

We also provide an optimal interpolation of fixed-point amplitude amplification. Formulated in the same language of parameterised families, this result can be either applied directly or adapted in broader scenarios for designing more complex interpolation protocols. Furthermore, all new results are framed in the language of the QSVT formalism, connecting circuit depth to the polynomial degree of the underlying approximation and extending the line of thought of Refs.~\cite{magano2022simplifying,murca2024making} showing that QSVT is not only a unifying framework for quantum algorithms but also offers a convenient lens for hybrid protocol design.

\section*{Acknowledgements}
DM acknowledges support from FCT-Portugal through Grant No. UID/04650/2025.
DR acknowledges support from FCT-Portugal through LaPMET (LA/P/0095/2020), grant Ref.~26/ECUM/CFUM/2026.

\section*{AI disclosure}

The authors developed the key ideas and wrote the manuscript.
The authors acknowledge the usage of AI to review the writing, help write the code used to generate the figures, assist finding and editing references, and for minor technical assistance when verifying the proofs.
All contents were reviewed and verified independently and the authors take full responsibility for the contents of the manuscript.

\newpage
\appendix

\section{Limited knowledge variations}\label{variations}

Theorems~\ref{thm:cq_interp} and~\ref{thm:interp-qff} assume knowledge of $p_M$ and $\HT(P,M)$, and of an upper bound on $\HT(P,M)$, respectively. We now show, generally following the full-depth approaches of Refs.~\cite{kmor,ambainisqff}, that these requirements can be substantially relaxed for every $\gamma\in[0,1]$.

We begin by considering settings in which $p_M$ or the extended hitting time $\HTp(P,M)$ is known only approximately, is merely bounded, or is unknown for the KMOR interpolation. The $\HT$ dependence follows by specialising the general forms below to the single marked vertex case.

Furthermore, all outer searches over candidate parameters are performed classically, to keep the depth minimal. If preferable, those of Corollaries~\ref{thm:cq_bound_pm} and~\ref{thm:cq_bound_H}, Theorem~\ref{thm:interp-qff} and Corollary~\ref{cor:interp-qff-unknown} can be replaced by coherent amplitude amplification, or alternatively by its interpolated version (Theorem~\ref{thm:interp-aa}) as previously commented.

\subsection{Variations of the KMOR algorithm}
\subsubsection{\texorpdfstring{Approximate $p_M$ known and $\HTp$ known}{Approximate pM known and HTp known}}

We first evaluate whether approximate knowledge of $p_M$ suffices to verify the same conditions as the general form of Theorem~\ref{thm:cq_interp}.
A key structural difference from the $\gamma = 1$ case is that the
interpolation parameter $s_\gamma$, defined implicitly by
the choice $\sin^2\!\theta(s_\gamma) = (\HTp)^{\gamma-1}$, takes the explicit form
\begin{equation}\label{eq:sgamma_explicit}
	s_\gamma \;=\;
	\frac{(\HTp)^{\gamma-1} - p_M}{(\HTp)^{\gamma-1}(1-p_M)},
\end{equation}
which depends jointly on $p_M$ and $\HTp$.

When the value of $(\HTp)^{\gamma-1}$ is available and we are given an approximate value of $p_M$, the
natural extension is to consider an approximate interpolation parameter obtained by substituting the
approximation $p^*$ for $p_M$
\begin{equation}\label{eq:sgamma_star_def}
	\tilde{s}_\gamma \;:=\;
	\frac{(\HTp)^{\gamma-1} - p^*}{(\HTp)^{\gamma-1}(1-p^*)}.
\end{equation}
The following lemma bounds the overlap
$\sin^2\!\theta(\tilde{s}_\gamma)$ under the true stationary probability
$p_M$ when only the approximation $p^*$ is used to compute
$\tilde{s}_\gamma$.

\begin{lemma}\label{lem:overlap_bounds}
	Let $p_M \leq 1/2$ and assume we are given an approximation $p^*$ such that
	$|p^* - p_M| \leq p_M/3$. For $\eta \in (0,1)$ satisfying
	$p^* \leq \tfrac{1}{2}\eta$, set
	\begin{equation}\label{eq:stilde_def}
		\tilde{s} \;:=\;
		\frac{\eta - p^*}{\eta(1-p^*)}.
	\end{equation}
	Then $\tilde{s} \in [0,1)$ and
	\begin{equation}\label{eq:overlap_bounds}
		\tfrac{1}{2}\,\eta
		\;\leq\; \sin^2\!\theta(\tilde{s})
		\;\leq\; 2\,\eta.
	\end{equation}
\end{lemma}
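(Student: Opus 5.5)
The plan is a direct computation from the closed form of $\sin^2\theta(s)$ in \eqref{eq:cos-sin-theta}, namely $\sin^2\theta(s)=p_M/(1-s(1-p_M))$. First we check that $\tilde s\in[0,1)$. Nonnegativity follows because $p^*\le\eta/2<\eta$, so the numerator $\eta-p^*$ is positive; also $1-p^*>0$. For $\tilde s<1$ we need $\eta-p^*<\eta-\eta p^*$, i.e.\ $\eta p^*<p^*$. This holds since $\eta<1$ and $p^*\ge \tfrac23 p_M>0$, where positivity of $p^*$ comes from the approximation hypothesis together with $p_M>0$.

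Next we compute $1-\tilde s(1-p_M)$ explicitly. Writing $\tilde s=(\eta-p^*)/(\eta(1-p^*))$ gives
\[
1-\tilde s(1-p_M)=\frac{\eta(1-p^*)-(\eta-p^*)(1-p_M)}{\eta(1-p^*)}
=\frac{\eta(p_M-p^*)+p^*(1-p_M)}{\eta(1-p^*)}.
\]
Hence
\[
\sin^2\theta(\tilde s)=\eta\cdot\frac{p_M(1-p^*)}{p^*(1-p_M)+\eta(p_M-p^*)}.
\]
As a sanity check, this equals $\eta$ exactly when $p^*=p_M$. The whole task therefore reduces to showing that the ratio $\rho:=p_M(1-p^*)/\bigl(p^*(1-p_M)+\eta(p_M-p^*)\bigr)$ lies in $[1/2,2]$.

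To bound $\rho$, we write $p^*=(1+\delta)p_M$ with $|\delta|\le 1/3$, divide numerator and denominator by $p_M$, and get
\[
\rho=\frac{1-p^*}{(1+\delta)(1-p_M)-\eta\delta}.
\]
The denominator equals $(1-p_M)+\delta(1-p_M-\eta)$, so it lies within $(1-p_M)\pm\tfrac13|1-p_M-\eta|$. We have $|1-p_M-\eta|\le 1$. Moreover, $1-p_M\ge 1/2$ and $1-p^*\ge 1-\eta/2\ge 1/2$. The ratio is then controlled by a short case analysis on the sign of $\delta$, using $p^*\le \eta/2$ and $p_M\le 1/2$.

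This last inequality check is the main obstacle. We need constants tight enough to land inside the factor-two window. The worst case appears to be $p_M$ near $1/2$ with $\delta=\pm1/3$. There, $1-p^*$ and $1-p_M$ can differ: for example, $p^*\le\eta/2$ forces $\eta\ge p^*/\cdots$. The first thing to try is the crude bounds $(1-p_M)-\tfrac13(1-p_M)\cdot c$ using $|1-p_M-\eta|\le \max(1-p_M,\eta)$. If that fails, we would split on whether $\eta\le 1-p_M$ and verify each resulting rational inequality directly.
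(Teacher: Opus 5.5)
Your domain check for $\tilde s$ is correct, and you justify $p^*>0$, which the paper only asserts. Your closed form for $\sin^2\theta(\tilde s)$ and the reduction to $\rho\in[\tfrac12,2]$ with $p^*=(1+\delta)p_M$ are also correct. The gap is that the proposal stops exactly where the lemma's content lies: you never prove $\tfrac12\le\rho\le2$, and the estimates you list cannot do it. Bounding the numerator $1-p^*$ and the denominator $(1-p_M)\pm\tfrac13|1-p_M-\eta|$ separately throws away the sign of $\delta(1-p_M-\eta)$ and the correlation between numerator and denominator. With your stated facts ($1-p^*\in[\tfrac12,1]$, $|1-p_M-\eta|\le1$, $1-p_M\ge\tfrac12$) you only get a denominator between $\tfrac16$ and $\tfrac43$, hence $\tfrac38\le\rho\le6$. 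So the first route you propose fails in both directions. Note also that $p^*\le\eta/2$ plays no role in this step; it is only needed for $\tilde s\ge0$.

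A split on the sign of $\delta$ does work, but only if you keep the coupling. The missing observation is $1-p^*=(1+\delta)(1-p_M)-\delta$. With $A:=(1+\delta)(1-p_M)$ this gives
\[
\rho=\frac{A-\delta}{A-\delta\eta}.
\]
Since $0<\eta<1$ and $A-\delta=1-p^*>0$, $\rho$ lies between $1$ and $1-\delta/A$. From $p_M\le\tfrac12$ you get $A\ge(1+\delta)/2$, so $|\delta|/A\le2|\delta|/(1+\delta)$. This is at most $\tfrac12$ for $\delta\in[0,\tfrac13]$ and at most $1$ for $\delta\in[-\tfrac13,0]$, which gives $\rho\in[\tfrac12,2]$.

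The paper argues by monotonicity instead. Since $\tilde s$ decreases in $p^*$ and $\sin^2\theta(s)$ increases in $s$, it suffices to evaluate at the endpoints $p^*=\tfrac23p_M$ and $p^*=\tfrac43p_M$. This gives $\eta(3-2p_M)/(2(1-p_M)+\eta)\le2\eta$ and $\eta(3-4p_M)/(4(1-p_M)-\eta)\ge\eta/2$, after dropping $\eta$ from the denominators and using $p_M\le\tfrac12$. Both arguments confirm the worst case you guessed, $p_M=\tfrac12$ and $\delta=\pm\tfrac13$, but you need one of them to close the proof.
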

\begin{proof}
	First, we verify that $\tilde{s}$ falls within the required domain. By assumption, $p^* \leq \tfrac{1}{2}\eta < \eta$, which makes the numerator $\eta - p^*$ strictly positive. Since $p^*$ and $\eta$ are both positive, the denominator $\eta(1-p^*)$ is also positive and $
\eta(1-p^*)-(\eta-p^*)
=
p^*(1-\eta)>0,$ thus $\tilde{s}\in[0,1)$.
Moreover, $\tilde s$ is strictly decreasing in $p^*$, as
	\begin{equation}
		\frac{d\tilde{s}}{dp^*}
		=
		\frac{\eta - 1}{(1-p^*)^2\,\eta}<0,
	\end{equation}
	while by Proposition~\ref{prop:principal}, $\sin^2\!\theta(s)=p_M/(1-s(1-p_M))$ is strictly increasing in $s$. Hence
	\begin{equation}\label{eq:sin2_general_revised}
		\sin^2\!\theta(\tilde{s})
		=
		\frac{p_M(1-p^*)\,\eta}
		{p^*(1-p_M) + (p_M - p^*)\,\eta}
	\end{equation}
	is decreasing in $p^*$ and since the condition $|p^* - p_M| \leq p_M/3$ restricts $p^*$ to the interval
	$
	\left[\tfrac{2}{3}p_M,\tfrac{4}{3}p_M\right],
	$
	the composition $\sin^2\!\theta(\tilde{s})$ is also monotonically decreasing in $p^*$ so it remains to evaluate it at the two endpoints.

	Substituting and using $p_M\le\tfrac12$,
	\[
		\sin^2\!\theta(\tilde{s})
		\le\frac{\eta\,(3 - 2p_M)}{2(1-p_M) + \eta}
		<\frac{\eta\,(3 - 2p_M)}{2(1-p_M)}
		\le 2\eta,
		\qquad
		\sin^2\!\theta(\tilde{s})
		\ge\frac{\eta\,(3 - 4p_M)}{4(1-p_M) - \eta}
		>\frac{\eta\,(3 - 4p_M)}{4(1-p_M)}
		\ge \tfrac12\eta,
	\]
	where the last inequalities follow since the first bound is increasing and the second decreasing in $p_M$ on $[0,\tfrac12]$.
\end{proof}

Applying this bound with $\eta = (\HTp)^{\gamma-1}$ preserves the required scaling of the overlap under the perturbed parameter $\tilde{s}_\gamma$, retaining the complexity of the perfect knowledge interpolation up to at most constant factors.

\begin{corollary}\label{thm:cq_approx_pm}
	Let $P$ be an ergodic, reversible Markov chain on $G=(X,E)$ and let
	$M\subseteq X$ be non-empty with $p_M\le\tfrac12$.
	Suppose an approximation $p^*$ of $p_M$ satisfying
	$|p^* - p_M| \leq p_M/3$ is given and $\HTp$ is known. Then, for any
	$\gamma\in[0,1]$, there is a quantum walk-based algorithm with
	quantum-walk depth
	\[
		D(\gamma)=O\!\left((\HTp)^{\gamma/2}\right)
	\]
	which finds a marked vertex with constant success probability and
	expected hybrid complexity of order
	\begin{equation}\label{eq:cq_approx_pm_complexity}
		(\HTp)^{1-\gamma}\cdot \mathsf{S}
		\;+\; (\HTp)^{1-\gamma/2}\cdot(\mathsf{U}+\mathsf{C}).
	\end{equation}
\end{corollary}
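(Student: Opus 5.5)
The plan is to rerun the three-regime proof of the general form of Theorem~\ref{thm:cq_interp}, with $s_\gamma$ replaced by $\tilde s_\gamma$ from \eqref{eq:sgamma_star_def}, and to use Lemma~\ref{lem:overlap_bounds} (with $\eta=H^{\gamma-1}$, $H:=\HTp$) wherever the exact identity $\sin^2\theta(s_\gamma)=H^{\gamma-1}$ was used before. Since $p_M$ is unknown, the case split must be made on the known quantity $p^*$. Because $p^*\in[\tfrac23p_M,\tfrac43p_M]$, every threshold moves by at most a constant factor.

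\textbf{Regimes handled by earlier arguments.} If $p^*>\tfrac12 H^{\gamma-1}$, then $p_M\ge\tfrac34p^*=\Omega(H^{\gamma-1})$. Direct sampling via $\Setup$ and $\Checkk$ then succeeds after $O(H^{1-\gamma})$ trials, exactly as in the first regime. If $H^{\gamma-1}$ is at least a suitable constant (say $\ge 1/8$), we are within a constant factor of the full-depth regime. In that case we run full-depth KMOR at a parameter that is robust to $p^*$. Concretely, I would apply Lemma~\ref{lem:overlap_bounds} with a constant $\eta$ such as $\eta=1/4$, which needs $p^*\le 1/8$; the case $p^*>1/8$ is again direct sampling. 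This gives $\sin^2\theta\in[1/8,1/2]$, so $\cos\theta\sin\theta=\Omega(1)$. We then take $t=\Theta(H)$ and use $\HT(s)\le H$, and Theorem~\ref{thm:main-qff} gives constant success at depth $O(\sqrt H)=O(H^{\gamma/2})$.

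\textbf{Main regime.} Here $p^*\le\tfrac12H^{\gamma-1}$ and $H^{\gamma-1}<1/8$. Lemma~\ref{lem:overlap_bounds} gives $\tilde s_\gamma\in[0,1)$ and $\tfrac12H^{\gamma-1}\le\sin^2\theta(\tilde s_\gamma)\le 2H^{\gamma-1}<\tfrac14$, so $\cos^2\theta(\tilde s_\gamma)>3/4$. Theorem~\ref{thm:HT-s-relation} then yields $\HT(\tilde s_\gamma)=\sin^4\theta\,H\le 4H^{2\gamma-1}$. We set
\[
\varepsilon_1:=\cos\theta(\tilde s_\gamma)\sin\theta(\tilde s_\gamma),\qquad \varepsilon_2:=c\,H^{(\gamma-1)/2},\qquad \varepsilon_3:=c',\qquad t_\gamma:=C\lceil H^\gamma\rceil,
\]
with small constants $c,c'$ and a large constant $C$. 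These must satisfy two conditions:
\[
\varepsilon_2+2\varepsilon_3\sin\theta\le\tfrac12\varepsilon_1,\qquad t_\gamma\ge \HT(\tilde s_\gamma)/(2\varepsilon_2^2).
\]
The first holds because $\varepsilon_1\ge\tfrac{\sqrt3}{2}\cdot\tfrac{1}{\sqrt2}H^{(\gamma-1)/2}$ and $\sin\theta\le\sqrt2 H^{(\gamma-1)/2}$. For the second, the right-hand side is at most $2H^{\gamma}/c^2$. Theorem~\ref{thm:main-qff} then gives per-run success $\Omega(\varepsilon_1^2)=\Omega(H^{\gamma-1})$ at depth $T_\gamma=O(\sqrt{t_\gamma})=O(H^{\gamma/2})$, with no logarithmic factor since $\varepsilon_3$ is constant. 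Repeating $O(H^{1-\gamma})$ times gives the cost \eqref{eq:cq_approx_pm_complexity}. The statement says ``expected'' complexity, so I would phrase the repetitions as running until success, which gives a geometric number of runs with mean $O(H^{1-\gamma})$.

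\textbf{Expected obstacle.} The only delicate point is bookkeeping. The branch has to be chosen from $p^*$ rather than $p_M$, and each borderline case must land either in the sampling regime or in a regime where the hypothesis $p^*\le\eta/2$ of Lemma~\ref{lem:overlap_bounds} holds. One must also check that the constants in $\varepsilon_1,\varepsilon_2,\varepsilon_3$ still satisfy the ordering required by Theorem~\ref{thm:main-qff} once the exact overlap is replaced by the factor-two window.
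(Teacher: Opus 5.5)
Your proposal is correct and follows the paper's proof essentially step for step: the same three-way split on the known $p^*$, the same use of Lemma~\ref{lem:overlap_bounds} with $\eta=H^{\gamma-1}$, and the same bound $\HT(\tilde s_\gamma)\le 4H^{2\gamma-1}$ feeding into Theorem~\ref{thm:main-qff} with constant $\varepsilon_3$. The only deviation is in the near-full-depth regime, where you rebuild the algorithm from Lemma~\ref{lem:overlap_bounds} with constant $\eta=1/4$ and Theorem~\ref{thm:main-qff}, while the paper simply cites Theorem~7 of \cite{kmor}; this keeps your argument self-contained within the QFF framework.
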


\begin{proof}[Proof of Corollary~\ref{thm:cq_approx_pm}]
	Throughout the proof we abbreviate $H := \HTp$, and take
	\begin{equation}\label{eq:sgamma_star}
		\tilde{s}_\gamma
		\;:=\; \frac{H^{\gamma-1} - p^*}{H^{\gamma-1}(1-p^*)},
		\qquad
		\varepsilon_3 \;:=\; \frac{\sqrt2}{24},
		\qquad
		t_\gamma \;:=\; 72\,\ceil{H^{\gamma}}\,,
	\end{equation}
	when possible.
	Furthermore, note that, by assumption, $\tfrac{2}{3}p_M \leq p^* \leq \tfrac{4}{3}p_M$.

	Suppose first that $p^* > \tfrac{1}{2}H^{\gamma-1}$.  Then
	$p_M \geq \tfrac{3}{4}p^* > \tfrac{3}{8}H^{\gamma-1}$, so each
	independent sample from $\pi$ via $\Setup(P)$ lands on a marked vertex
	with probability $\Omega(H^{\gamma-1})$.  After $O(H^{1-\gamma})$
	repetitions, each costing $\mathsf{S} + \mathsf{C}$, a marked vertex is
	found with constant probability.  The total cost
	$O(H^{1-\gamma})\cdot(\mathsf{S}+\mathsf{C})$ is absorbed
	into~\eqref{eq:cq_approx_pm_complexity} since
	$H^{1-\gamma} \leq H^{1-\gamma/2}$.

	Suppose instead that $p^* \leq \tfrac{1}{2}H^{\gamma-1}$, and consider
	first the sub-regime $H^{\gamma-1} \geq \tfrac{1}{6}$.  Then
	$H^{1-\gamma} \leq 6$, so $O(1)$ calls to the full-depth algorithm
	suffice. By Theorem~7 of Ref.\cite{kmor}, we
	can find a marked vertex with constant probability and complexity
	$O(\mathsf{S} + \sqrt{H}\,(\mathsf{U}+\mathsf{C}))$.  Since
	$H^{1-\gamma} = O(1)$ and
	$H^{1-\gamma/2} \geq H^{1/2}/\sqrt{6} = \Omega(\sqrt{H})$ in this
	regime, the cost is absorbed into~\eqref{eq:cq_approx_pm_complexity}.

	It remains to treat the regime
	$p^* \leq \tfrac{1}{2}H^{\gamma-1}$ with $H^{\gamma-1} < \tfrac{1}{6}$.
	Since $H^{\gamma-1} < 1$, all hypotheses of
	Lemma~\ref{lem:overlap_bounds} are satisfied with $\eta = H^{\gamma-1}$,
	giving
	\begin{equation}\label{eq:sin2_scenario1}
		\tfrac{1}{2}\,H^{\gamma-1}
		\;\leq\; \sin^2\!\theta(\tilde{s}_\gamma)
		\;\leq\; 2\,H^{\gamma-1}\,,
	\end{equation}
	which by the upper bound and $H^{\gamma-1} < 1/6$ yields
	$\sin^2\!\theta(\tilde{s}_\gamma) < 1/3$ and
	$\cos^2\!\theta(\tilde{s}_\gamma) > 2/3$.  Combining with the lower
	bound we have
	\begin{equation}\label{eq:eps1_scenario1}
		\varepsilon_1
		\;:=\; \cos\theta(\tilde{s}_\gamma)\,\sin\theta(\tilde{s}_\gamma)
		\;>\; \sqrt{\tfrac{2}{3}}
		\cdot \sqrt{\tfrac{1}{2}\,H^{\gamma-1}}
		\;=\; \tfrac{1}{\sqrt{3}}\,H^{(\gamma-1)/2}
		\;>\; \tfrac{1}{2}\,H^{(\gamma-1)/2}.
	\end{equation}
	By Theorem~\ref{thm:HT-s-relation} and the upper bound
	in~\eqref{eq:sin2_scenario1}
	\begin{equation}\label{eq:HT_scenario1}
		\HT(\tilde{s}_\gamma)
		\;=\; \sin^4\!\theta(\tilde{s}_\gamma)\cdot H
		\;\leq\; 4\,H^{2\gamma-1}\,
	\end{equation}
	so by setting
	\begin{equation}\label{eq:eps2_scenario1}
		\varepsilon_2 \;:=\; \tfrac{1}{6}\,H^{(\gamma-1)/2},
	\end{equation}
	the fast-forwarding time condition of Theorem~\ref{thm:main-qff} is
	verified by direct computation,
	\begin{equation}\label{eq:depth_scenario1}
		\frac{\HT(\tilde{s}_\gamma)}{2\,\varepsilon_2^{2}}
		\;\leq\;
		\frac{4\,H^{2\gamma-1}}{2}\cdot\frac{36}{H^{\gamma-1}}
		\;=\; 72\,H^{\gamma}
		\;=\; t_\gamma ,
	\end{equation}
	and by \eqref{eq:T-qff} the resulting coherent depth is
	\begin{equation}\label{eq:Tgamma_scenario1}
		T_\gamma
		\;=\; \Bigl\lceil \sqrt{2\,t_\gamma\ln(2/\varepsilon_3)} \Bigr\rceil
		\;=\; \Bigl\lceil \sqrt{144\,\ln(24\sqrt2)}\;H^{\gamma/2}\Bigr\rceil
		\;=\; O\!\bigl(H^{\gamma/2}\bigr),
	\end{equation}
	as claimed.  The ordering $\varepsilon_1 > \varepsilon_2$ follows trivially from
	$1/2 > 1/6$, and $\varepsilon_1 \leq 1/2$ since
	$\cos\theta\sin\theta = \tfrac{1}{2}\sin 2\theta \leq 1/2$. Moreover, $0 < \varepsilon_2 + 2\varepsilon_3\sin\theta(\tilde{s}_\gamma)
\leq \tfrac{1}{3}H^{(\gamma-1)/2} < \varepsilon_1$.

	All hypotheses of Theorem~\ref{thm:main-qff} are therefore satisfied and
	 each call to
	\KMOR$(P,M,\tilde{s}_\gamma,t_\gamma,\varepsilon_3)$ finds a marked
	vertex with probability at least
	\begin{align}\label{eq:q_scenario1}
		q_\gamma&\geq\;
		p_M+(1-p_M)\bigl(\varepsilon_1 - \varepsilon_2
		- 2\varepsilon_3\sin\theta(\tilde{s}_\gamma)\bigr)^{2} \nonumber\\&\geq\;
		\bigl(\varepsilon_1 - \varepsilon_2
		- 2\varepsilon_3\sin\theta(\tilde{s}_\gamma)\bigr)^{2}\nonumber\\
		&\geq\;
		\Bigl(\tfrac{1}{2}H^{(\gamma-1)/2} - \tfrac{1}{6}H^{(\gamma-1)/2} - 2\cdot\tfrac{\sqrt2}{24}\cdot\sqrt{2}\,H^{(\gamma-1)/2}\Bigr)^{2}\nonumber\\
		&\geq\;
		\Bigl(\tfrac{1}{2} - \tfrac{1}{6} - \tfrac{1}{6}\Bigr)^{2}H^{\gamma-1}=\; \tfrac{1}{36}\,H^{\gamma-1},
	\end{align}
	where we used the upper bound of \eqref{eq:sin2_scenario1} to bound the third term. After
	$O(H^{1-\gamma})$ independent repetitions, a marked vertex is then found with
	constant probability, giving total complexity of order
	\[
	O(H^{1-\gamma})\cdot
	\bigl(\mathsf{S} + T_\gamma\cdot(\mathsf{U}+\mathsf{C})\bigr)
	\;=\;
	H^{1-\gamma}\cdot\mathsf{S}
	\;+\; H^{1-\gamma/2}\cdot(\mathsf{U}+\mathsf{C}). \qedhere
	\]
\end{proof}

\subsubsection{\texorpdfstring{Approximate $p_M$ known and $\HTp$ unknown}{Approximate pM known and HTp unknown}}

We now show that the requirement to know $\HTp$ can be relaxed by invoking
Corollary~\ref{thm:cq_approx_pm} with geometrically increasing guesses for
$\HTp$ and terminating the algorithm once a marked vertex is found.  Since each
guess determines both the interpolation parameter $\tilde{s}_\gamma$ and the
fast-forwarding time $t_j$ simultaneously, no additional structural assumption
is needed.

\begin{corollary}\label{thm:cq_approx_pm_unknown_H}
	Let $P$ and $M$ be as in Corollary~\ref{thm:cq_approx_pm}.
	Suppose an approximation $p^*$ of $p_M$ satisfying
	$|p^* - p_M| \leq p_M/3$ is given, but the value of $\HTp$ is not known.
	For every $\gamma\in[0,1]$, there exists a quantum walk-based
	algorithm with expected maximum quantum-walk depth
	\[
		D(\gamma)=O\!\left((\HTp)^{\gamma/2}\right)
	\]
	which finds a marked vertex with constant success probability and
	expected hybrid complexity of order
	\begin{equation}\label{eq:cq_approx_pm_unknown_complexity}
		\begin{cases}
			(\HTp)^{1-\gamma}\cdot \mathsf{S}
			\;+\; (\HTp)^{1-\gamma/2}\cdot(\mathsf{U}+\mathsf{C}),
			& \gamma < 1,\\[4pt]
			\log(\HTp)\cdot \mathsf{S}
			\;+\; \sqrt{\HTp}\cdot(\mathsf{U}+\mathsf{C}),
			& \gamma = 1.
		\end{cases}
	\end{equation}
\end{corollary}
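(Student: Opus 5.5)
We run a doubling search over guesses for $\HTp$. For stages $j=0,1,2,\dots$ we set $H_j:=2^j$ and run the algorithm of Corollary~\ref{thm:cq_approx_pm} with $H_j$ in place of $\HTp$. Concretely, each call uses $\tilde s_\gamma$ from \eqref{eq:sgamma_star}, computed with $H_j$ and $p^*$, together with $t_j:=72\lceil H_j^{\gamma}\rceil$, $\varepsilon_3=\sqrt2/24$. At stage $j$ we make $c\,H_j^{1-\gamma}$ independent calls for a suitable constant $c$; this count also covers the plain-sampling branch used when $p^*>\tfrac12H_j^{\gamma-1}$. We stop as soon as a marked vertex is output, which is verifiable via $\Checkk(M)$.

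The first step is correctness at the threshold stage $j^*:=\lceil\log_2\HTp\rceil$, and at every later stage. The proof of Corollary~\ref{thm:cq_approx_pm} only uses $\HTp$ through an upper bound. In \eqref{eq:HT_scenario1} we have $\HT(\tilde s_\gamma)=\sin^4\theta(\tilde s_\gamma)\HTp\le 4H_j^{2\gamma-1}$ whenever $H_j\ge\HTp$. The overlap bounds from Lemma~\ref{lem:overlap_bounds} with $\eta=H_j^{\gamma-1}$ are independent of the true $\HTp$. Hence for every $j\ge j^*$ a single call succeeds with probability at least $\tfrac1{36}H_j^{\gamma-1}$. The stage therefore succeeds with probability at least some constant $\rho>0$, independently across stages; the regimes with $p^*$ large or $H_j^{\gamma-1}\ge1/6$ are handled exactly as before. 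Before $j^*$ we assume nothing about success, and we only pay cost.

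The second step is the expected cost. Stage $j$ costs $O(H_j^{1-\gamma}\mathsf S+H_j^{1-\gamma/2}(\mathsf U+\mathsf C))$ and has depth $O(H_j^{\gamma/2})$. The probability of reaching stage $j^*+k$ is at most $(1-\rho)^k$. The cost of the stages up to $j^*$ is a geometric sum, dominated by its last term $O(\HTp^{1-\gamma}\mathsf S+\HTp^{1-\gamma/2}(\mathsf U+\mathsf C))$ when $\gamma<1$. The tail is $\sum_k(1-\rho)^k2^{k(1-\gamma/2)}(\cdots)$, and the depth tail is $\sum_k(1-\rho)^k2^{k\gamma/2}$. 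Both converge if $\rho$ is boosted so that $(1-\rho)\,2^{1-\gamma/2}<1$, for instance by taking $c$ large enough that $1-\rho<1/2$. The expected maximum depth is likewise $O(\HTp^{\gamma/2})$.

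The main obstacle is the edge case $\gamma=1$. There $H_j^{1-\gamma}=1$, so the $\mathsf S$-sum no longer decays geometrically. Each of the roughly $\log\HTp$ stages before $j^*$ pays $\Theta(\mathsf S)$, which gives $\log(\HTp)\mathsf S$. The $(\mathsf U+\mathsf C)$ part is still $\sum_j 2^{j/2}=O(\sqrt{\HTp})$. This yields the second case of \eqref{eq:cq_approx_pm_unknown_complexity}. Care is needed to choose the per-stage repetition constant uniformly in $\gamma$, so that the tail series converges with exponent $1-\gamma/2\le1$. Choosing $1-\rho\le 1/4$ suffices for all $\gamma$.
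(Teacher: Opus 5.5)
Your proposal is correct and matches the paper's proof essentially step for step. The paper's \textsc{InterpolatedIncrementalSearch} is your doubling scheme over $\widetilde H=2^j$: $t_j=72\lceil\widetilde H^{\gamma}\rceil$, $\lceil k\widetilde H^{1-\gamma}\rceil$ repetitions per round, a sampling branch when $p^*>\tfrac12\widetilde H^{\gamma-1}$, the full-depth fallback when $\widetilde H^{\gamma-1}\ge 1/6$, threshold $j_0=\lceil\log\HTp\rceil$, and per-round failure probability $p_{\mathrm{fail}}\le1/4$ making the $\mathsf S$, $(\mathsf U+\mathsf C)$ and depth tails geometric, with $O(\log\HTp)$ setups at $\gamma=1$. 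Your remark that the analysis of Corollary~\ref{thm:cq_approx_pm} needs $\HTp$ only through $\HT(\tilde s)\le 4\widetilde H^{2\gamma-1}$ for $\widetilde H\ge\HTp$ is exactly what the paper means by ``the same argument''.
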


\begin{proof}
	Throughout the proof we again abbreviate $H := \HTp$.  The procedure is
	given in Algorithm~\ref{alg:iis}.

	\begin{algorithm}[H]
	\caption{\textsc{InterpolatedIncrementalSearch}$(P, M, \gamma, p^*, k)$}\label{alg:iis}
	\begin{enumerate}
					\item Set $j = 0$.
					\item For the current guess $\widetilde{H} := 2^j$:
					\begin{enumerate}
						\item If
						$p^* > \tfrac{1}{2}\widetilde{H}^{\,\gamma-1}$:
						draw
						$\lceil k\,\widetilde{H}^{\,1-\gamma}\rceil$
						independent samples from $\pi$ via $\Setup(P)$ and
						check each for membership in $M$.
						\item Otherwise: set
						\begin{equation}\label{eq:sj_Tj_def}
							\tilde{s}_j
							\;:=\;
							\frac{\widetilde{H}^{\,\gamma-1} - p^*}
							{\widetilde{H}^{\,\gamma-1}(1-p^*)},
							\qquad
							t_j \;:=\; 72\,\ceil{\widetilde{H}^{\,\gamma}},
						\end{equation}
						and call
						\KMOR$(P, M, \tilde{s}_j, t_j, \varepsilon_3)$
						with $\varepsilon_3 = \sqrt2/24$, a total of
						$N_j := \lceil k\,\widetilde{H}^{\,1-\gamma}\rceil$
						times.
					\end{enumerate}
					\item If a marked vertex is found, output it and stop.
					Otherwise, set $j \leftarrow j+1$ and return to
					step~2.
				\end{enumerate}
	\end{algorithm}

	\noindent
	Here $k$ is a sufficiently large universal constant which boosts the
	success probability of each \KMOR$(P, M, \tilde{s}_j, t_j,
	\varepsilon_3)$ call, and by \eqref{eq:T-qff} the depth of such a call
	is
	\begin{equation}\label{eq:Tj_def}
		T_j \;=\; \Bigl\lceil\sqrt{2\,t_j\ln(2/\varepsilon_3)}\Bigr\rceil
		\;=\; O\bigl(\widetilde{H}^{\,\gamma/2}\bigr).
	\end{equation}

	Let
	$j_0$ denote the threshold defined by $j_0 := \lceil \log H \rceil$, so that
	$\widetilde{H}_{j_0} := 2^{j_0} \in [H, 2H)$.  We first show that every
	round $j \geq j_0$ succeeds with constant probability, then bound the
	total expected cost.

	Fix $j \geq j_0$ and consider the guess $\widetilde{H} = 2^j$.  If the classical
	branch is taken ($p^* > \tfrac{1}{2}\widetilde{H}^{\,\gamma-1}$), then
	$p_M \geq \tfrac{3}{4}p^* > \tfrac{3}{8}\widetilde{H}^{\,\gamma-1}$,
	and $N_j = \Theta(\widetilde{H}^{\,1-\gamma})$ independent samples from
	$\pi$ find a marked vertex with constant probability for some large
	enough $k$.

	Suppose instead that the quantum branch is taken, so
	$p^* \leq \tfrac{1}{2}\widetilde{H}^{\,\gamma-1}$. Whenever $\widetilde H^{\gamma-1}\ge 1/6$, we could instead use, in step 2(b), the full-depth algorithm of Theorem~7 of \cite{kmor} with
$T=\lceil 14\sqrt{\widetilde H}\rceil$, keeping the same number
$N_j$ of repetitions.
For $\widetilde H\ge H$, this gives constant
success probability, while
$\sqrt{\widetilde H}\le\sqrt{6}\,\widetilde H^{\gamma/2}$,
so the stated complexity and depth bounds are preserved.
	We may
	therefore always assume $H^{\gamma-1} < 1/6$, in which case
	$\widetilde{H}^{\,\gamma-1} < 1/6$ uniformly for every $j \geq j_0$.

	Under this assumption, the analysis of the quantum branch reduces to
	the corresponding regime in the proof of Corollary~\ref{thm:cq_approx_pm}.
	By Lemma~\ref{lem:overlap_bounds} with
	$\eta = \widetilde{H}^{\,\gamma-1}$, the same argument as in
	Corollary~\ref{thm:cq_approx_pm} shows that each call to \KMOR{} in
	step~2(b) succeeds with probability
	\[
	q_j
	\;\geq\; \bigl(\varepsilon_1 - \varepsilon_2
	- 2\varepsilon_3\sin\theta(\tilde{s}_j)\bigr)^2
	\;\geq\; \tfrac{1}{36}\,\widetilde{H}^{\,\gamma-1}
	\;=\; \Omega(\widetilde{H}^{\,\gamma-1}).
	\]
	Since
	$N_j = \Theta(\widetilde{H}^{\,1-\gamma}) = \Theta(1/q_j)$, a standard
	Chernoff bound shows that $N_j$ independent calls find a marked vertex
	with constant probability for $k$ sufficiently large.

	So in both regimes, any guess $\widetilde{H} = 2^j$ for $j\geq j_0$ finds a marked vertex with constant probability. Consequently, we fix some $k=k^*$ which guarantees there exists a constant $p_{\mathrm{fail}} \le 1/4$ such that
	any round $j \ge j_0$ fails with probability at most $p_{\mathrm{fail}}$.

	Now let $j_f$ denote the index of the final round in which a marked vertex is found.  In order to bound the
	expected total cost, we split the number of repetitions of the algorithm into
	contributions from the early rounds $j < j_0$ (which may
	all fail) and the late rounds $j \geq j_0$, each of which
	fails independently with probability at most $p_{\mathrm{fail}} \le 1/4$.

	The complexity of \IIS$(P, M, \gamma, p^*, k^*)$ is therefore
	\begin{equation}
		N^{(\mathsf S)}\cdot \mathsf S + N^{(\mathsf U)}\cdot (\mathsf U + \mathsf C),
	\end{equation}
	where $N^{(\mathsf S)}$ and $N^{(\mathsf U)}$ denote the expected total number
	of calls to $\Setup(P)$ and to the walk subroutine, that is, the expectations of
	$\sum_{j=0}^{j_f} N_j$ and $\sum_{j=0}^{j_f} N_j\,T_j$, respectively.

	We now assume every iteration for $j<j_0$ fails. Thus
	\begin{align}
		N^{(\mathsf S)}
		&\le \sum_{j=0}^{j_0-1} N_j
		+ \sum_{n=0}^{\infty} p_{\mathrm{fail}}^{\,n} N_{j_0+n} \nonumber\\
		&= O\!\left(\sum_{j=0}^{j_0-1} 2^{j(1-\gamma)}\right)
		+ O\!\left(H^{1-\gamma}\right)
		\sum_{n=0}^{\infty}
		\bigl(p_{\mathrm{fail}}\,2^{1-\gamma}\bigr)^n
		\nonumber\\  &= O\!\left(\sum_{j=0}^{j_0-1} 2^{j(1-\gamma)}\right)
		+ O\!\left(H^{1-\gamma}\right) \frac{1}{1- p_{\mathrm{fail}}2^{1-\gamma}}\nonumber\\
		&= O\!\left(\sum_{j=0}^{j_0-1} 2^{j(1-\gamma)}\right)
		+ O\!\left(H^{1-\gamma}\right),
	\end{align}
	since, by our choice of $k^*$, we have
	$p_{\mathrm{fail}}\,2^{1-\gamma} \le 1/2$.

	If $\gamma<1$, we have a geometric sum and the expectation of $j_f$ is dominated by
	\begin{align}
		N^{(\mathsf S)}
		&= O\!\left(2^{j_0(1-\gamma)}\right) = O\!\left(H^{1-\gamma}\right).
	\end{align}
	If $\gamma=1$, the first sum is $O(j_0)$ and therefore $
		N^{(\mathsf S)} = O(\log H).$
	Similarly,
	\begin{align}
		N^{(\mathsf U)}
		&\le \sum_{j=0}^{j_0-1} N_j T_j
		+ \sum_{n=0}^{\infty} p_{\mathrm{fail}}^{\,n} N_{j_0+n}T_{j_0+n} \nonumber\\
		&= O\!\left(\sum_{j=0}^{j_0-1} 2^{j(1-\gamma/2)}\right)
		+ O\!\left(H^{1-\gamma/2}\right)
		\sum_{n=0}^{\infty}
		\bigl(p_{\mathrm{fail}}\,2^{1-\gamma/2}\bigr)^n
		\nonumber\\
		&= O\!\left(2^{j_0(1-\gamma/2)}\right) + O\!\left(H^{1-\gamma/2}\right) \nonumber\\
		&= O\!\left(H^{1-\gamma/2}\right).
	\end{align}
	Therefore the expected complexity splits into two cases depending on the value of $\gamma$ as
	\begin{equation}
		N^{(\mathsf S)}\mathsf S + N^{(\mathsf U)}(\mathsf U+\mathsf C)
		=
		\begin{cases}
			H^{1-\gamma}\cdot \mathsf S
			\;+\;
			H^{1-\gamma/2}\cdot(\mathsf U+\mathsf C),
			& \gamma<1,\\[4pt]
			\log(H)\cdot \mathsf S
			\;+\;
			\sqrt H\cdot(\mathsf U+\mathsf C),
			& \gamma=1.
		\end{cases}
	\end{equation}
It remains to bound the expected maximum depth. Using the same bound on the probability
of reaching round $j_0+n$, its expectation satisfies
\begin{align}
D(\gamma)
&\le O\!\left(2^{j_0\gamma/2}\right)
 + \sum_{n=0}^{\infty} p_{\mathrm{fail}}^n
   O\!\left(2^{(j_0+n)\gamma/2}\right)\nonumber\\
&= O\!\left(H^{\gamma/2}\right)
 + O\!\left(H^{\gamma/2}\right)
   \sum_{n=0}^{\infty}
   \bigl(p_{\mathrm{fail}}2^{\gamma/2}\bigr)^n\nonumber\\
&= O\!\left(H^{\gamma/2}\right).
\end{align}
since $p_{\mathrm{fail}}2^{\gamma/2}\le \sqrt{2}/4<1$.
\end{proof}

\subsubsection{\texorpdfstring{Bound on $p_M$ known}{Bound on pM known}}

When only a lower bound $p_{\min} \leq p_M$ is available rather than an
approximation satisfying $|p^* - p_M| \leq p_M/3$, one must search over
candidate approximations.  This introduces a multiplicative overhead that
depends on whether the search is performed classically (preserving the corresponding depth constraint) or coherently via Grover's algorithm (reducing the
overhead at the cost of increased depth). We detail the former here.

\begin{corollary}\label{thm:cq_bound_pm}
	Let $P$ and $M$ be as in Corollary~\ref{thm:cq_approx_pm}.
	Suppose a lower bound $0<p_{\min}\leq p_M$ is given, with no bound on
	$\HTp$ required. For every $\gamma\in[0,1]$, there exists a quantum
	walk-based algorithm with expected maximum quantum-walk depth
	\[
		D(\gamma)=O\!\left((\HTp)^{\gamma/2}\right)
	\]
	which finds a marked vertex with constant success probability
	and expected hybrid complexity of order
	\begin{equation}\label{eq:cq_bound_pm_classical}
		\log(1/p_{\min}) \;\cdot\;
		\begin{cases}
			(\HTp)^{1-\gamma}\cdot \mathsf{S}
			\;+\; (\HTp)^{1-\gamma/2}\cdot(\mathsf{U}+\mathsf{C}),
			& \gamma < 1,\\[4pt]
			\log(\sqrt{\HTp})\cdot \mathsf{S}
			\;+\; \sqrt{\HTp}\cdot(\mathsf{U}+\mathsf{C}),
			& \gamma = 1.
		\end{cases}
	\end{equation}
	Moreover, if an upper bound $\HTp_{\max} \geq \HTp$ is also given, then
	there exists an algorithm of maximum quantum-walk depth
	$O((\HTp_{\max})^{\gamma/2})$ with expected hybrid complexity of order
	at most
	\begin{equation}\label{eq:cq_bound_pm_hmax}
		\log(1/p_{\min}) \;\cdot\;
		\Bigl(
		(\HTp_{\max})^{1-\gamma}\cdot \mathsf{S}
		\;+\; (\HTp_{\max})^{1-\gamma/2}\cdot(\mathsf{U}+\mathsf{C})
		\Bigr).
	\end{equation}
\end{corollary}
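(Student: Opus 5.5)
We reduce to Corollary~\ref{thm:cq_approx_pm_unknown_H} by searching over a geometric grid of candidate approximations to $p_M$. Set $K:=\lceil\log_{4/3}(1/p_{\min})\rceil+1$ and consider the candidates $p^*_i:=(3/4)^i$, for $i=0,\dots,K$. Since $p_{\min}\le p_M\le 1/2$, some index $i^\star$ satisfies $p^*_{i^\star}\in[\tfrac23p_M,\tfrac43p_M]$, because consecutive candidates differ by a factor $4/3<2$. That candidate therefore fulfils the hypothesis $|p^*-p_M|\le p_M/3$. Candidates that overshoot or undershoot $p_M$ may cost more, but they never harm correctness. Every output is passed through $\Checkk(M)$ before being returned, so the algorithm can only output a genuinely marked vertex.

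The search is interleaved rather than run candidate by candidate, because Corollary~\ref{thm:cq_approx_pm_unknown_H} guarantees nothing when run with a bad $p^*$, and we cannot wait for it to terminate. Concretely, we run \IIS{} in lockstep over the guesses $\widetilde H=2^j$. In round $j$ we execute step~2 of Algorithm~\ref{alg:iis} once for each candidate $p^*_i$, with the same $k$, and stop as soon as any candidate returns a marked vertex. The cost of round $j$ is at most $(K+1)$ times the cost of a single round. Its depth is unchanged at $O(\widetilde H^{\gamma/2})$, since the depth of step~2 depends only on $\widetilde H$ and not on $p^*$.

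For every $j\ge j_0=\lceil\log H\rceil$, the argument in the proof of Corollary~\ref{thm:cq_approx_pm_unknown_H}, applied to the good candidate $p^*_{i^\star}$, shows that round $j$ succeeds with probability at least $1-p_{\mathrm{fail}}\ge 3/4$. The other candidates can only help. The expected-cost computation of that proof then goes through verbatim, with every $N_j$ multiplied by $(K+1)=O(\log(1/p_{\min}))$. This yields $N^{(\mathsf S)}=O(\log(1/p_{\min})\,H^{1-\gamma})$ when $\gamma<1$ and $O(\log(1/p_{\min})\log H)$ when $\gamma=1$. It also yields $N^{(\mathsf U)}=O(\log(1/p_{\min})\,H^{1-\gamma/2})$, and the expected maximum depth is $O(H^{\gamma/2})$. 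The $\gamma=1$ entry in \eqref{eq:cq_bound_pm_classical} is written as $\log\sqrt{\HTp}$, which equals $\log H$ up to a constant factor.

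For the second claim, with $\HTp_{\max}$ known, we skip the doubling. For each candidate $p^*_i$ we run the fixed-$H$ algorithm of Corollary~\ref{thm:cq_approx_pm} with $H$ replaced by $\HTp_{\max}$. This substitution is legitimate because the proof of Theorem~\ref{thm:main-qff} only uses $t\ge \HT(s)/(2\varepsilon_2^2)$, and $\HT(s)=\sin^4\theta(s)\HTp\le\sin^4\theta(s)\HTp_{\max}$. All other steps of that proof use only $\eta=(\HTp_{\max})^{\gamma-1}$ and are unaffected. Summing over the $K+1$ candidates gives \eqref{eq:cq_bound_pm_hmax} with deterministic depth $O((\HTp_{\max})^{\gamma/2})$.

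The main obstacle is the bookkeeping for wrong candidates. We must make sure they never increase the depth and never make a round's cost unbounded. One concern is the classical branch: a large wrong $p^*$ can trigger it. This is harmless, since that branch has depth zero and the same number $N_j$ of $\Setup$ calls as the quantum branch. A second concern is that the $1/6$-threshold substitution by the full-depth algorithm, used in the proof of Corollary~\ref{thm:cq_approx_pm_unknown_H}, must be applied uniformly over all candidates. With both points checked, the per-round cost is exactly $(K+1)$ times the single-candidate bound.
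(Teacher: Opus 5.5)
Your proposal is correct and follows essentially the paper's route: the paper's \BPCS{} likewise sweeps all candidates inside each doubling round $\widetilde H=2^j$, so it inherits the per-round failure bound and the expected-cost computation of Corollary~\ref{thm:cq_approx_pm_unknown_H} up to a factor $O(\log(1/p_{\min}))$. With $\HTp_{\max}$ given, the paper runs a single sweep at $\widetilde H=\HTp_{\max}$, using the same observation $\HT(\tilde s)\le\sin^4\theta(\tilde s)\,\HTp_{\max}$ that you make. The only difference is cosmetic: you use the grid $(3/4)^i$ where the paper uses the dyadic grid $\tfrac23\cdot2^{-l}$, and both have $O(\log(1/p_{\min}))$ points and provably contain a valid approximation.
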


\begin{proof}
	Throughout the proof we again abbreviate $H := \HTp$.

	Let $l$ be the largest integer such that $p_M \leq 2^{-l}$.  Then
	$\tfrac{1}{2}\cdot 2^{-l} < p_M \leq 2^{-l}$, and the choice
	$p^*_l := \tfrac{2}{3}\cdot 2^{-l}$ satisfies
	\begin{equation}\label{eq:pl_approx}
		\tfrac{2}{3}\,p_M
		\;\leq\; \tfrac{2}{3}\cdot 2^{-l}
		\;=\; p^*_l
		\;=\; \tfrac{4}{3}\cdot\tfrac{1}{2}\cdot 2^{-l}
		\;<\; \tfrac{4}{3}\,p_M,
	\end{equation}
	so $|p^*_l - p_M| \leq p_M/3$.  Since $p_{\min} \leq p_M \leq 1/2$, the
	correct value of $l$ lies in $\{1, \ldots, L\}$ where
	$L := \lfloor\log(1/p_{\min})\rfloor$. Therefore, while the value of $l$ is unknown, the finite candidate set
	$\{p^*_1, \ldots, p^*_L\} = \{\tfrac{2}{3}\cdot 2^{-1}, \ldots,
	\tfrac{2}{3}\cdot 2^{-L}\}$ is guaranteed to contain a valid
	approximation, which is searched classically by Algorithm~\ref{alg:bpcs}.

\begin{algorithm}[H]
\caption{\textsc{BoundedCandidateSearch}$(P,M,\gamma,p_{\min},k)$}\label{alg:bpcs}
\begin{enumerate}
    \item Set $L:=\lfloor\log(1/p_{\min})\rfloor$ and $j:=0$.

    \item For $l'=1,\ldots,L$, set
    $p^*_{l'}:=\tfrac23\cdot2^{-l'}$ and:
    \begin{enumerate}
        \item If no upper bound $H_{\max}$ on $H$ is given,
        execute step~2 of
        \IIS$(P,M,\gamma,p^*_{l'},k)$
        with $\widetilde H:=2^j$.

        \item If an upper bound $H_{\max}\ge H$ is given,
        execute step~2 of
        \IIS$(P,M,\gamma,p^*_{l'},k)$
        with $\widetilde H:=H_{\max}$.
    \end{enumerate}
    If a marked vertex is found, output it and stop.

    \item After an unsuccessful sweep, if no upper bound
    was given, set $j\leftarrow j+1$ and return to step~2.
    Otherwise, output ``fail.''
\end{enumerate}
\end{algorithm}

	\noindent
	Note that the outer loop over $l'$ in \BPCS{} is purely classical and sequential and as such can only multiply
	the expected cost by at most a factor of $L$ but not alter the
	maximum depth of any individual inner invocation, which remains $O(2^{j\gamma/2})$.

	For the valid candidate $p_{l}^*$
		we can apply Corollary~\ref{thm:cq_approx_pm_unknown_H} which guarantees success with constant probability.
	Multiplying by the outer factor $L$
	yields~\eqref{eq:cq_bound_pm_classical}, while the expected
maximum quantum-walk depth remains $O(H^{\gamma/2})$.

	Now suppose that an upper bound $H_{\max}\ge H$ is also given. The proof of Corollary~\ref{thm:cq_approx_pm} applies entirely with the substitution $H_{\max}\leftarrow H$ and parameters
	\begin{equation}\label{eq:hmax_params}
t:=\lceil72H_{\max}^{\gamma}\rceil,
\qquad
\tilde{s}_l:=
\frac{H_{\max}^{\gamma-1}-p_l^*}
     {H_{\max}^{\gamma-1}(1-p_l^*)}.
\end{equation}
	Specifically, note that by Theorem~\ref{thm:HT-s-relation} and Lemma~\ref{lem:overlap_bounds} with
	$\eta = H_{\max}^{\gamma-1}$, we have
	\begin{equation}\label{eq:HT_hmax_bound}
		\HT(\tilde{s}_{l})
		\;=\; \sin^4\!\theta(\tilde{s}_{l})\cdot H
		\;\leq\; 4\,H_{\max}^{2(\gamma-1)}\cdot H
		\;\leq\; 4\,H_{\max}^{2\gamma-1}.
	\end{equation}
	This is the same
	upper bound on $\HT(\tilde{s}_\gamma)$ used in the proof of
	Corollary~\ref{thm:cq_approx_pm} (with $H_{\max}$ in place of $H$), so the
	fast-forwarding time \eqref{eq:hmax_params} satisfies the condition of
	Theorem~\ref{thm:main-qff} and the remainder of that proof carries through
	without modification. Therefore a marked vertex is found with constant
	probability and expected complexity
	\[
	O\bigl(
	H_{\max}^{1-\gamma}\cdot\mathsf{S}
	\;+\; H_{\max}^{1-\gamma/2}\cdot(\mathsf{U}+\mathsf{C})
	\bigr),
	\]
	and maximum depth $O(H_{\max}^{\gamma/2})$.  Again multiplying by the outer factor
	$L$ yields~\eqref{eq:cq_bound_pm_hmax}.
\end{proof}

\subsubsection{\texorpdfstring{Bound on $\HTp$ known, $p_M$ unknown}{Bound on HTp known, pM unknown}}
\label{sssec:cq_bound_H}

We treat the final variation differently from the approach in Ref.\cite{kmor}. At full depth, that setting relies on a robust dichotomic search over $p^*$ with backtracking, which requires a comparison subroutine based on eigenvalue estimation to determine whether $p_M \lessgtr p^*$. This is, however, not needed, since the only difference with Corollary~\ref{thm:cq_bound_pm} is that a lower bound $p_{\min}$ is now
unavailable, meaning the candidate sweep over the grid $\{p^*_l\}$ cannot be
truncated at a fixed level. We can instead double the length of the sweep and
halt as soon as a marked vertex is produced making the number of candidates
dependent on $p_M$ itself rather than on any a priori bound.

\begin{corollary}\label{thm:cq_bound_H}
	Let $P$ be an ergodic, reversible Markov chain on $G=(X,E)$ and let
	$M\subseteq X$ be non-empty. Suppose an upper bound
	$\HTp_{\max} \geq \HTp$ is given, but no information about $p_M$ is
	available. For every $\gamma \in [0,1]$, there is a quantum walk-based
	algorithm with maximum quantum-walk depth
	\[
		D(\gamma)=O\!\left((\HTp_{\max})^{\gamma/2}\right)
	\]
	which finds a marked vertex with constant success probability and
	expected hybrid complexity of order
	\begin{equation}\label{eq:cq_bound_H_complexity}
		\log(1/p_M)\;\cdot\;
		\Bigl(
		(\HTp_{\max})^{1-\gamma}\cdot \mathsf{S}
		\;+\; (\HTp_{\max})^{1-\gamma/2}\cdot(\mathsf{U}+\mathsf{C})
		\Bigr).
	\end{equation}
\end{corollary}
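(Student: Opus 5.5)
The plan is to reuse the candidate grid $p^*_l:=\tfrac23\cdot 2^{-l}$ from the proof of Corollary~\ref{thm:cq_bound_pm}. Instead of truncating the grid at $L=\lfloor\log(1/p_{\min})\rfloor$, we sweep it in phases of doubling length and halt as soon as a marked vertex appears. Throughout, abbreviate $H_{\max}:=\HTp_{\max}$ and let $l^\star$ be the unique index with $\tfrac12 2^{-l^\star}<p_M\le 2^{-l^\star}$, so that $l^\star\le \lceil\log(1/p_M)\rceil$. As shown in \eqref{eq:pl_approx}, $p^*_{l^\star}$ satisfies $|p^*_{l^\star}-p_M|\le p_M/3$.

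The algorithm runs phases $i=0,1,2,\ldots$. In phase $i$ it loops over $l'=1,\ldots,2^i$. For each $l'$ it executes one block of step~2 of \IIS$(P,M,\gamma,p^*_{l'},k)$ with $\widetilde H:=H_{\max}$ fixed. That is, if $p^*_{l'}>\tfrac12 H_{\max}^{\gamma-1}$ it draws $\lceil kH_{\max}^{1-\gamma}\rceil$ samples. Otherwise it makes $\lceil kH_{\max}^{1-\gamma}\rceil$ calls to \KMOR$(P,M,\tilde s_{l'},t,\sqrt2/24)$, with $t$ and $\tilde s_{l'}$ as in \eqref{eq:hmax_params}; in the sub-regime $H_{\max}^{\gamma-1}\ge 1/6$ it uses the full-depth algorithm instead. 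Every invocation has depth $O(H_{\max}^{\gamma/2})$ regardless of $l'$, because the depth depends only on $t$. This gives the stated \emph{deterministic} maximum depth, and the outer loops are classical.

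For correctness, I use the $H_{\max}$ part of the proof of Corollary~\ref{thm:cq_bound_pm}: Lemma~\ref{lem:overlap_bounds} with $\eta=H_{\max}^{\gamma-1}$, together with \eqref{eq:HT_hmax_bound}. Whenever $l'=l^\star$, the block succeeds with probability at least some constant $1-p_{\mathrm{fail}}$, with $p_{\mathrm{fail}}\le 1/4$ after fixing $k$. Blocks with invalid candidates may produce nothing. They are harmless, however: \KMOR{} only outputs a vertex after $\Checkk$ confirms it is marked, so there are no false positives. Each block costs $O\bigl(H_{\max}^{1-\gamma}\mathsf S+H_{\max}^{1-\gamma/2}(\mathsf U+\mathsf C)\bigr)=:B$.

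For the expected cost, let $i_0:=\lceil\log_2 l^\star\rceil$, which is the first phase containing $l^\star$. Phases $i<i_0$ cost at most $\sum_{i<i_0}2^iB\le 2^{i_0}B\le 2l^\star B$. Each phase $i\ge i_0$ contains $l^\star$, and the algorithm reaches phase $i_0+n$ only if the $l^\star$ blocks of the $n$ earlier phases all failed. These blocks are independent, so this happens with probability at most $p_{\mathrm{fail}}^n$. Phase $i_0+n$ costs $2^{i_0+n}B$, so the tail contributes $2^{i_0}B\sum_n(2p_{\mathrm{fail}})^n\le 2^{i_0+1}B$. The total expectation is therefore $O(l^\star B)=O(\log(1/p_M)\,B)$, matching \eqref{eq:cq_bound_H_complexity}; when $p_M$ is close to $1/2$ the factor should be read as $O(1+\log(1/p_M))$. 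Halting on success also gives constant (indeed probability one) success.

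The main obstacle is making sure that invalid candidates cannot cause trouble. For such $l'$, Lemma~\ref{lem:overlap_bounds} does not apply, and $\tilde s_{l'}$ might even fall outside $[0,1)$. I would handle this by clamping $\tilde s_{l'}$ to $[0,1)$, or by skipping any $l'$ whose formula yields a value outside that interval. This is safe because we only need validity at $l^\star$, and, by the check-before-output structure, failures elsewhere cost at most $B$ and never return a wrong answer.
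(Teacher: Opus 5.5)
Your proposal is correct and follows the paper's proof essentially step for step. It uses the same doubling-length sweep over the grid $p^*_l=\tfrac23\cdot 2^{-l}$, with $H_{\max}$ plugged into the second part of Corollary~\ref{thm:cq_bound_pm}, the same first good phase $\lceil\log l^\star\rceil$, and the same geometric-tail bound $3\cdot 2^{i^*}\mathcal C=O(l^\star)\,\mathcal C$.

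Your extra safeguard of clamping $\tilde s_{l'}$ is unnecessary, though harmless. The domain argument in Lemma~\ref{lem:overlap_bounds} uses only $p^*\le\eta/2$ and $\eta<1$, and these hold for every candidate that reaches the quantum branch. Your remarks on the absence of false positives and on reading the factor as $1+\log(1/p_M)$ are sensible clarifications that the paper leaves implicit.
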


\begin{proof}
	Throughout the proof abbreviate $H := \HTp(P,M)$ and as in the previous proofs assume $p_M\le\tfrac12$, since otherwise
	a single $\Setup(P)$ draw followed by $\Checkk(M)$ returns a marked vertex
	with probability $\ge\tfrac12$.

	Following the candidate grid of Corollary~\ref{thm:cq_bound_pm}, define
	$p^*_l:=\tfrac23\cdot 2^{-l}$ and the largest integer $l^*$ with
	$p_M\le 2^{-l^*}$ satisfying $|p^*_{l^*}-p_M|\le p_M/3$ with
	\begin{equation}\label{eq:lstar}
		l^*=\lfloor\log(1/p_M)\rfloor .
	\end{equation}
	Since no lower bound on $p_M$ is available, the sweep length
	$L=\lfloor\log(1/p_{\min})\rfloor$ of \BPCS{} cannot be fixed in advance.
	It is enough to guess it. For $i=0,1,2,\ldots$ run
	\BPCS$(P, M, \gamma, \cdot, k)$ with the truncation level $L$ replaced by
	$L_i := 2^{i}$ and with the upper bound $H_{\max}$ supplied, halting as
	soon as a marked vertex is produced.  We write $i^* := \lceil \log l^*
	\rceil$ for the first attempt whose sweep is long enough to reach the
	valid candidate $p^*_{l^*}$.

	By the second part of Corollary~\ref{thm:cq_bound_pm}, a single candidate
	pass with the bound $H_{\max}$ available uses the parameters \eqref{eq:hmax_params}, and therefore costs
	\begin{equation}\label{eq:find-cost}
		\mathcal C \;:=\;
		O\bigl(H_{\max}^{1-\gamma}\mathsf S
		+ H_{\max}^{1-\gamma/2}(\mathsf U+\mathsf C)\bigr)
	\end{equation}
	at maximum quantum-walk depth $O(H_{\max}^{\gamma/2})$.  At a valid
	candidate, that is, one with $|p^*_l-p_M|\le p_M/3$, it finds a marked
	vertex with constant probability, say $p_{\text{succ}}\ge\tfrac34$ for
	some large enough $k$.

	By the same argument of Corollary~\ref{thm:cq_approx_pm_unknown_H}, for every attempt $i\ge i^*$, the algorithm fails to halt with probability
	$p_{\mathrm{fail}}\le\tfrac14$.  Attempt $i$ performs at most $2^i$ passes,
	each of cost $\mathcal C$.  Hence the expected total
	cost is
	\begin{equation}\label{eq:kmor-cost-split}
		\sum_{i<i^*}2^i\,\mathcal C
		\;+\;
		\sum_{n\ge 0} p_{\mathrm{fail}}^{\,n}\,2^{\,i^*+n}\,\mathcal C
		\;\le\;
		2^{i^*}\mathcal C
		\;+\; 2^{i^*}\mathcal C\sum_{n\ge0}\bigl(2p_{\mathrm{fail}}\bigr)^n
		\;\le\; 3\cdot 2^{i^*}\,\mathcal C\,,
	\end{equation}
	and since $2^{i^*}<2l^*$, the expected
	cost is
	\[
	O(l^*)\,\mathcal C
	\;=\; O(\log(1/p_M))\cdot
	\bigl(H_{\max}^{1-\gamma}\mathsf S
	+ H_{\max}^{1-\gamma/2}(\mathsf U+\mathsf C)\bigr),
	\]
	while maintaining the maximum quantum-walk depth $O(H_{\max}^{\gamma/2})$
	of a single pass.
\end{proof}

\subsubsection{Interpolated detection}
\label{ssec:detection}

The one caveat of not having a lower bound on $p_M$ is that,
when it is unknown whether the marked set is empty or not, the candidate
sweep may never halt. Nonetheless, given an upper bound on the
hitting time, we may first detect the
existence of marked vertices with at most constant overhead, by the following reduction.

\begin{proposition}[\cite{Sze04a,magniez2012hitting}]
\label{prop:detection-reduction}
Let $P$ be an ergodic, reversible Markov chain and let
$M\subseteq X$. Suppose an upper bound $\HT\geq1$ is given with the promise
that either $M=\emptyset$ or $\HT(P,M)\leq\HT$.

The problem of deciding whether $M$ is non-empty with constant probability reduces to eigenphase estimation
of $W(P')$ on the unmarked state $\ket{\pi_U}$, with precision
$\epsilon=\Theta(\HT^{-1/2})$. In particular, an empty marked set is always reported empty.
\end{proposition}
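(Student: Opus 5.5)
The reduction follows Szegedy's detection argument, in the hitting-time form of Magniez et al. Take the absorbing walk $P'$ and its walk operator $W(P')=W(1)$, built as in Definition~\ref{def:W}. If $M=\emptyset$, then $P'=P$. The state $\ket{\pi_U}\ket{\bar 0}$ then equals $\ket{\pi}\ket{\bar 0}=\ket{\Psi_n(0)}$, which is an exact $(+1)$-eigenvector of $W(P)$ by \eqref{eq:Psi-n-s}. Eigenphase estimation therefore returns phase $0$ with certainty, since the eigenphase is exactly zero and the standard estimator outputs $0$ deterministically. This gives one-sided error, so an empty marked set is always reported empty.

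For $M\neq\emptyset$, the aim is to show that $\ket{\pi_U}\ket{\bar0}$, normalised to $\ket U\ket{\bar0}$, has at most a constant fraction of its weight on eigenphases of $W(P')$ smaller than $c\,\HT^{-1/2}$. Szegedy's spectral correspondence, taken from Lemma~\ref{lem:blockencoding} applied to $P'$, handles this. Each eigenvalue $\lambda'_k=\cos\varphi_k$ of $D'$ on the invariant subspace spanned by $\ket{v'_k}\ket{\bar0}$ and $U_D\ket{v'_k}\ket{\bar0}$ gives eigenphases $\pm\varphi_k$ of $W(P')$. The $(+1)$-eigenvectors of $D'$ are supported on $M$, and $\ket U$ is orthogonal to them. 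The overlap of $\ket U\ket{\bar0}$ with phases $\pm\varphi_k$ is therefore exactly $|\braket{v'_k|U}|^2$ for $k\le n-m$, split between $+\varphi_k$ and $-\varphi_k$.

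Now use $1-\cos\varphi\le\varphi^2/2$ together with Proposition~\ref{prop:HT-spectral}, which gives $\HT(P,M)=\sum_k|\braket{v'_k|U}|^2/(1-\lambda'_k)$. Summing over modes with $\varphi_k\le\epsilon$ yields
\[
\sum_{k:\varphi_k\le\epsilon}|\braket{v'_k|U}|^2\le\frac{\epsilon^2}{2}\sum_{k}\frac{|\braket{v'_k|U}|^2}{1-\lambda'_k}\le\frac{\epsilon^2\HT}{2}.
\]
Choosing $\epsilon=\Theta(\HT^{-1/2})$ with a small constant makes this at most $1/4$. At least $3/4$ of the weight therefore sits on phases of magnitude at least $\epsilon$.

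Eigenphase estimation with precision $\epsilon/2$ and constant failure probability takes $O(1/\epsilon)=O(\sqrt{\HT})$ controlled applications of $W(P')$. It outputs a nonzero phase estimate with constant probability, which distinguishes the two cases. Repeating a constant number of times boosts this probability while keeping the zero-error behaviour in the empty case. Each $W(P')$ costs $O(\mathsf U+\mathsf C)$ by Proposition~\ref{prop:walkstep-cost} at $s=1$, where the construction of $V(P')$ needs only $\Checkk$. Preparing $\ket{\pi_U}$ costs $\mathsf S+\mathsf C$ by postselecting the check; if the check reports marked, we already answer non-empty.

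The main obstacle is making the spectral correspondence rigorous. Modes with $\lambda'_k=\pm1$ and the orthogonal complement of $\mathcal X$ must be handled so that the weight of $\ket U\ket{\bar0}$ near phase $0$ is exactly what the hitting-time sum controls. One must also check that phase estimation's leakage from phases $\ge\epsilon$ into the zero bin is bounded by a constant, for example by using precision $\epsilon/4$ rather than $\epsilon/2$.
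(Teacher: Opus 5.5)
Your proposal is correct and follows the same route as the paper, which only sketches the argument. In both, $\ket{\pi}\ket{\bar 0}$ is an exact phase-$0$ eigenvector of $W(P)$ when $M=\emptyset$. When $M\neq\emptyset$, a constant fraction of the spectral weight of $\ket{U}\ket{\bar 0}$ lies on phases $\Omega(\HT^{-1/2})$; you make this quantitative via Szegedy's correspondence and the bound $\sum_{k:\varphi_k\le\epsilon}|\braket{v'_k|U}|^2\le\epsilon^2\HT/2$ obtained from Proposition~\ref{prop:HT-spectral}.

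The obstacles you flag are routine. Proposition~\ref{prop:D-spectrum} (eigenvalue $1$ with multiplicity exactly $m$ at $s=1$) rules out eigenvalue $1$ for $D_{UU}$. $\ket{U}\ket{\bar 0}$ lies entirely in the span of the two-dimensional invariant subspaces, so the complement of $\mathcal X$ never enters. Finally, with $N=O(1/\epsilon)$ controlled walk steps, the probability of the all-zero outcome for a phase $|\varphi|\ge\epsilon$ is at most $\pi^2/(N\epsilon)^2$.
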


Note that no interpolation parameter $s$ is needed since the walk operator is that of the absorbing chain $P'$. This means the cost is dominated by the ordinary hitting time $\HT(P,M)$ and no knowledge of $p_M$ is needed.

The idea is that when $M = \emptyset$, the state $\ket{\pi_U}$ is an invariant eigenvector of $W(P')$ with eigenphase zero, returning the measurement outcome $0^t$ with certainty. In contrast, if $M$ is non-empty, a constant fraction of the spectral weight possesses non-zero eigenphases of order $\Omega(\HT^{-1/2})$. Rather than resolving these phases with $O(1)$ circuits of depth $\Theta(\HT^{1/2})$, we may instead detect these shifts with reduced precision using the interpolated phase estimation results of Refs.~\cite{wang2019accelerated,giurgica2022qjournal,Rall2023amplitudeestimation}.

\begin{corollary}\label{cor:interpolated-detection}
	Let $P$ be an ergodic, reversible Markov chain and let
	$M\subseteq X$. Suppose an upper bound $\HT\geq1$ is given with the
	promise that either $M=\emptyset$ or $\HT(P,M)\leq\HT$.
	For every $\gamma\in[0,1]$, there is a quantum walk-based
	algorithm of quantum-walk depth
	\[
		D(\gamma)=O\!\left(\HT^{\gamma/2}\right)
	\]
	that decides whether $M$ is non-empty with constant success
	probability and hybrid complexity
	\begin{equation}\label{eq:interp-detect-cost}
		O\!\left(
		\HT^{1-\gamma}\cdot\mathsf{S}
		\;+\;
		\HT^{1-\gamma/2}\cdot(\mathsf{U}+\mathsf{C})
		\right).
	\end{equation}
\end{corollary}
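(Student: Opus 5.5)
The plan is to combine the reduction of Proposition~\ref{prop:detection-reduction} with the interpolated phase (or amplitude) estimation results of Refs.~\cite{wang2019accelerated,giurgica2022qjournal,Rall2023amplitudeestimation}, in the same QSVT spirit as Theorem~\ref{thm:interp-aa}. The detection problem then becomes a hypothesis test between two cases. In the first, $\ket{\pi_U}$, normalised via $\Setup$ followed by a $\Checkk$ postselection, is a $(+1)$-eigenvector of $W(P')$. In the second, a constant fraction $c$ of its spectral weight lies on eigenphases of magnitude at least $\phi_0=\Theta(\HT^{-1/2})$. I will not resolve $\phi_0$ with a single circuit of depth $\Theta(\HT^{1/2})$. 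Instead, each run will use a shallow circuit of depth $D=\Theta(\HT^{\gamma/2})$ whose acceptance statistics differ between the two cases by roughly $(D\phi_0)^2=\Theta(\HT^{\gamma-1})$. Classical repetition then amplifies this gap.

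Concretely, a run prepares $\ket{\pi}$ and applies $\Checkk(M)$. If the outcome is marked, we report ``non-empty'' immediately; this branch can only help, and it never fires when $M=\emptyset$. Otherwise we apply a Hadamard test, or a single-ancilla QSVT circuit, that implements a low-degree even polynomial $f$ of the eigenphase of $W(P')$ with $f(0)=1$. A natural choice is $f(\phi)=\cos^2(k\phi/2)$ from the controlled-$W(P')^k$ test with $k=D$, or a degree-$D$ Chebyshev filter. The run accepts (``empty'') with probability $\sum_j w_j f(\phi_j)$. When $M=\emptyset$ this probability equals $1$, which gives one-sided error and preserves the claim that an empty set is always reported empty. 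When $M\neq\emptyset$, the eigenphases with $|\phi_j|\ge\phi_0$ satisfy $1-f(\phi_j)=\Omega(\min(1,D^2\phi_j^2))=\Omega(\HT^{\gamma-1})$, so the per-run rejection probability is $\Omega(c\,\HT^{\gamma-1})$. The one-sidedness means we only need $O(\HT^{1-\gamma})$ runs to see at least one rejection with constant probability. Each run costs $\mathsf S+O(\mathsf C)+O(D)(\mathsf U+\mathsf C)$, using Proposition~\ref{prop:walkstep-cost} for $W(P')$ (the absorbing chain $P'=P(1)$ uses only $\Checkk$ and $\Update$). Summing gives $\HT^{1-\gamma}\mathsf S+\HT^{1-\gamma/2}(\mathsf U+\mathsf C)$, as claimed.

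The main obstacle is the spectral step. I need a lower bound on $1-f(\phi_j)$ that holds uniformly over large phases, and the cosine test fails this because it has revivals at $k\phi\approx 2\pi$. I would first try randomising $k$ uniformly in $\{1,\dots,D\}$. Averaging $\sin^2(k\phi/2)$ over $k$ gives $\Omega(\min(1,D^2\phi^2))$ for every $\phi\in[\phi_0,\pi]$; this is the standard trick in the phase-estimation literature, and it avoids revivals entirely. I also need the ``constant fraction of weight on phases $\ge\phi_0$'' statement, which comes from the proof underlying Proposition~\ref{prop:detection-reduction} (Szegedy's analysis via $\HT(P,M)\le\HT$ and Proposition~\ref{prop:HT-spectral}). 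If that proof only gives the weaker form $\sum_j w_j\min(1,\phi_j^2\HT)=\Omega(1)$, that is still sufficient: averaging $1-f$ over $k$ yields $\Omega(\sum_j w_j\min(1,D^2\phi_j^2))\ge(D^2/\HT)\,\Omega(1)$. This recovers the same $\Omega(\HT^{\gamma-1})$ rejection rate without needing a sharp spectral gap.
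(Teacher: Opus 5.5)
Your argument is correct. At the top level it follows the paper's route: Proposition~\ref{prop:detection-reduction}, then shallow phase tests repeated $O(\HT^{1-\gamma})$ times.

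The difference is in the second step. The paper treats it as a black box, citing interpolated phase estimation~\cite{wang2019accelerated,giurgica2022qjournal,Rall2023amplitudeestimation} to ``resolve a phase of order $\epsilon=\Theta(\HT^{-1/2})$'' at depth $O(\epsilon^{-\gamma})$ with $O(\epsilon^{2(\gamma-1)})$ repetitions. You open the box instead: a Hadamard test on controlled-$W(P')^k$ with $k$ uniform in $\{1,\dots,D\}$, together with the averaged bound $\mathbb{E}_k\sin^2(k\phi/2)=\Omega(\min(1,D^2\phi^2))$.

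This buys two things the paper leaves implicit. First, $\ket{\pi_U}$ is spread over many eigenphases of $W(P')$ rather than being a single eigenstate, whereas the cited schemes are formulated for a single amplitude or phase. Your averaged rejection probability handles the superposition directly. Second, your test has one-sided error, so an empty $M$ is always reported empty, as Proposition~\ref{prop:detection-reduction} asserts, and you only need to observe one rejection rather than produce an estimate.

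The spectral input you were unsure about follows directly from Proposition~\ref{prop:HT-spectral}, so the weaker fallback is not needed. Each $\ket{v'_k}\ket{\bar 0}$ with $k\le n-m$ splits evenly over the eigenphases $\pm\phi_k$ of $W(P')$, where $\phi_k=\arccos\lambda'_k$. With $w_k=|\braket{v'_k|U}|^2$, the bound $1-\cos\phi\le\phi^2/2$ gives $\sum_k w_k\phi_k^{-2}\le\HT/2$. A Markov-type bound then places weight at least $1/2$ on $|\phi_k|\ge\HT^{-1/2}$.

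The paper's route is shorter and inherits the constants and generality of the cited schemes. One minor point: Proposition~\ref{prop:walkstep-cost} is stated only for $s\in[0,1)$. The $O(\mathsf U+\mathsf C)$ cost of a controlled $W(P')$ step therefore comes from the standard absorbing-walk construction rather than directly from that proposition; the paper's proof relies on the same fact.
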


\begin{proof}
	By Proposition~\ref{prop:detection-reduction}, it is enough to resolve a phase of order $\epsilon:=\Omega(\HT^{-1/2})$, following state preparation at cost $\mathsf S + \mathsf C$. Using interpolated phase estimation, this can be achieved in time $O(\epsilon^{\gamma-2})$ and quantum walk depth $D(\gamma)=O(\epsilon^{-\gamma})$, with each walk application costing $O(\mathsf{U}+\mathsf{C})$, across $O(\epsilon^{2(\gamma-1)})$ repetitions.
\end{proof}

Thus, if $|M|=1$, detecting before finding contributes at most
a constant factor and is always cheaper for the multiple marked vertices case.

\subsection{AGJK with unknown hitting time}

For the AGJK algorithm, no a priori knowledge of $p_M$ is required. We may only be limited by the lack of an upper bound on $\HT(P,M)$, which can be resolved with the
standard doubling strategy used previously.

\begin{corollary}\label{cor:interp-qff-unknown}
	Let $P$ be an ergodic, reversible Markov chain on $G=(X,E)$, and let
	$M\subseteq X$ be non-empty, with no bound on $\HT:=\HT(P,M)$ given.
	For every $\gamma\in[0,1]$, there is a quantum walk-based
	algorithm with expected maximum quantum-walk depth
	\begin{equation}\label{eq:interp-qff-unknown-depth}
		D(\gamma)=O\!\left(
			\HT^{\gamma/2}
			\sqrt{(1-\gamma)\log\HT+\log\log\HT}
		\right)
	\end{equation}
	which outputs a marked vertex with constant success probability
	and expected hybrid complexity of order at most
	\begin{equation}\label{eq:interp-qff-unknown-complexity}
		\begin{cases}
			\begin{aligned}
				&(1+\gamma\log\HT)\,
				\HT^{1-\gamma}\cdot\mathsf S\\[-1pt]
				&\quad{}+
				(1+\gamma\log\HT)\,
				\HT^{1-\gamma/2}
				\sqrt{(1-\gamma)\log\HT+\log\log\HT}
				\cdot(\mathsf U+\mathsf C)
			\end{aligned},
			& \gamma<1,\\[6pt]
			\log^2(\HT)\cdot\mathsf S
			\;+\;
			\sqrt{\HT}\,\log(\HT)\sqrt{\log\log\HT}
			\cdot(\mathsf U+\mathsf C),
			& \gamma=1.
		\end{cases}
	\end{equation}
\end{corollary}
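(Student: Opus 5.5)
We use the standard doubling strategy of Corollary~\ref{thm:cq_approx_pm_unknown_H}, with Theorem~\ref{thm:interp-qff} as the inner routine. For $j=0,1,2,\ldots$ we set the guess $\widetilde H_j:=2^j$. In round $j$ we run \IAGJK$(P,M,\widetilde H_j,\gamma)$ a total of
\[
N_j:=\bigl\lceil k(1+\gamma\log\widetilde H_j)\widetilde H_j^{\,1-\gamma}\bigr\rceil
\]
times, for a universal constant $k$, and halt as soon as a marked vertex is output. Let $j_0:=\lceil\log\HT\rceil$, so that $\widetilde H_{j_0}\in[\HT,2\HT)$. For every $j\ge j_0$ the guess is a valid upper bound on $\HT(P,M)$. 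The proof of Theorem~\ref{thm:interp-qff} then shows that one run succeeds with probability $\Omega(\widetilde H_j^{\gamma-1}/(1+\gamma\log\widetilde H_j))$, in the quantum regime via Lemma~\ref{cor:qff-trunc-perrun} or in the dense regime via the initial check. Hence, for $k$ large, every round $j\ge j_0$ fails with probability at most some $p_{\mathrm{fail}}\le1/4$, independently of the other rounds. The output is always a verified marked vertex, so rounds with invalid guesses $j<j_0$ cannot produce wrong answers; they only cost resources.

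Next we account for costs. Round $j$ costs $N_j\mathsf S+N_jD_j(\mathsf U+\mathsf C)$, where
\[
D_j=O\!\bigl(\widetilde H_j^{\gamma/2}\sqrt{(1-\gamma)\log\widetilde H_j+\log\log\widetilde H_j}\bigr)
\]
is the depth of round $j$. The expected totals split as in Corollary~\ref{thm:cq_approx_pm_unknown_H}: a sum over $j<j_0$, where we pessimistically assume all rounds run, plus $\sum_{n\ge0}p_{\mathrm{fail}}^n(\cdot)_{j_0+n}$.

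For the setup count with $\gamma<1$, the terms $2^{j(1-\gamma)}(1+\gamma j)$ are eventually geometric. The early sum is therefore dominated by its last term, up to a constant depending on $1-\gamma$. This gives $O((1+\gamma\log\HT)\HT^{1-\gamma})$. The tail is controlled because $p_{\mathrm{fail}}2^{1-\gamma}\le1/2$, and polylogarithmic factors only grow polynomially in $n$. For $\gamma=1$ the early sum becomes $\sum_{j<j_0}(1+j)=O(\log^2\HT)$, which yields the $\log^2(\HT)\,\mathsf S$ term. The walk-step count behaves the same way with exponent $1-\gamma/2\ge1/2$, which is always strictly positive. The sum is therefore geometric in every case and gives the stated bound, including $\sqrt{\HT}\log\HT\sqrt{\log\log\HT}$ at $\gamma=1$.

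The expected maximum depth is bounded as at the end of Corollary~\ref{thm:cq_approx_pm_unknown_H}. Depth is monotone in $j$, and the rounds beyond $j_0$ are weighted by $p_{\mathrm{fail}}^n2^{n\gamma/2}$, a convergent series. This gives $O(D_{j_0})$, which is \eqref{eq:interp-qff-unknown-depth}.

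The main obstacle is the bookkeeping near $\gamma\to1$ for $\gamma<1$. The factor $1/(1-2^{-(1-\gamma)})$ blows up, so the early geometric sum is not uniformly $O(\HT^{1-\gamma})$ in $\gamma$. I would handle this by bounding the early sum by $\min\bigl(j_0,\,O(1/(1-\gamma))\bigr)\cdot\HT^{1-\gamma}(1+\gamma\log\HT)$. I would then absorb it either into the stated bound, treating $\gamma$ as a fixed constant, or into the $\gamma=1$ expression, choosing whichever is appropriate. A secondary point is that $\log\log\widetilde H_j$ is ill-defined for small $j$; replacing it by $\log\log(\widetilde H_j+2)$ fixes this at constant cost.
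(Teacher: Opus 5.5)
Your proposal is correct and follows essentially the same route as the paper. The paper also uses doubling guesses $\widetilde H=2^j$ with $O\bigl((1+\gamma\log\widetilde H)\widetilde H^{1-\gamma}\bigr)$ calls to \IAGJK{} per round. It likewise pessimistically charges every round below $j_0$, bounds the rounds above $j_0$ by a $p_{\mathrm{fail}}^n$-weighted geometric tail, and obtains the $\log^2\HT$ setup term at $\gamma=1$ from the arithmetic sum.

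Your observation that, for $\gamma<1$, the early geometric sum hides a $1/(1-2^{-(1-\gamma)})$ constant is a valid refinement that the paper's proof glosses over: the stated bound holds only for fixed $\gamma$, not uniformly as $\gamma\to1$. Your series bound on the expected maximum depth also matches the ``expected'' claim more directly than the paper's argument that $j_f\le j_0+O(1)$ with constant probability.
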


\begin{proof}
	We repeatedly invoke
	Theorem~\ref{thm:interp-qff} using exponentially increasing guesses for the
	unknown hitting time until a marked vertex is found, as in Algorithm~\ref{alg:iiagjk}.

	\begin{algorithm}[H]
	\caption{\textsc{IncrementalAGJK}$(P, M, \gamma)$}\label{alg:iiagjk}
	\begin{enumerate}
					\item Set $j = 0$.
					\item For the current guess $\widetilde{H} := 2^j$ call \IAGJK$(P, M, \widetilde{H}, \gamma)$ a total of $R(\gamma)= O\left(\left(1+\gamma\log \widetilde H\right)\widetilde H^{1-\gamma}\right)$ times.
					\item If a marked vertex is found, output it and stop.
					Otherwise, set $j \leftarrow j+1$ and return to step~2.
				\end{enumerate}
	\end{algorithm}

	\noindent

	Let $j_0 := \lceil\log(3\HT)\rceil$ be the index where the current guess on the hitting time bound takes the value
	$\widetilde{H}_{j_0} = 2^{j_0} \in [3\HT, 6\HT)$. Then, for every $j \ge j_0$, by Theorem~\ref{thm:interp-qff}, \IAGJK$(P, M, \widetilde{H},\gamma)$ finds a marked vertex with constant probability, say
	$1 - p_{\mathrm{fail}} \ge \tfrac34$ for some large enough multiplicative constant in the number of repetitions.

	Additionally, let $j_f$ denote the final round. Conditioned on reaching
	round $j$, the depth of that round is, by
	\eqref{eq:interp-qff-depth},
	\begin{align}
	D^{(j)}(\gamma) &= O\!\left(\widetilde{H}^{\,\gamma/2}
	\sqrt{(1-\gamma)\log\widetilde H + \log\log\widetilde H}\right)
	\nonumber\\ &= O\!\left(2^{j\gamma/2}
	\sqrt{(1-\gamma)\,j + \log j}\right),
	\end{align}
	which is increasing in $j$. Since every round $j \ge j_0$ succeeds with
	probability at least $\tfrac34$, the index $j_f$ exceeds $j_0 + n$ with
	probability at most $p_{\mathrm{fail}}^{\,n} \le 4^{-n}$, so with constant
	probability $j_f \le j_0 + O(1)$ and the maximum depth encountered is
	\begin{equation}
D(\gamma) = O\bigl(\HT^{\gamma/2}\sqrt{(1-\gamma)\log\HT +
		\log\log\HT}\bigr).
\end{equation}
	 By Theorem~\ref{thm:interp-qff}, that same call at round $j$ costs
	\begin{equation}\label{eq:round-cost}
		\mathcal C_j
		\;=\; O\!\left(
		(1+\gamma j)\,2^{j(1-\gamma)}\cdot\mathsf S
		\;+\;
		(1+\gamma j)\,2^{j(1-\gamma/2)}\sqrt{(1-\gamma)\,j + \log j}
		\cdot(\mathsf U+\mathsf C)\right).
	\end{equation}
	Now, we (pessimistically) assume that every round $j < j_0$ fails to return a marked vertex, while each round $j \ge j_0$ fails independently with probability at
	most $p_{\mathrm{fail}} \le \tfrac14$. The expected total cost is then
	bounded by
	\begin{align}\label{eq:cost-split}
		&\;\;\sum_{j=0}^{j_0 - 1}\mathcal C_j
		\;+\;
		\sum_{n \ge 0} p_{\mathrm{fail}}^{\,n}\,\mathcal C_{j_0 + n}\nonumber\\
		&=N^{(\mathsf{S})}\cdot \mathsf{S}+N^{(\mathsf{U})}\cdot(\mathsf{U}+\mathsf{C}),
	\end{align}
	where $N^{(\mathsf{S})}$ and $N^{(\mathsf{U})}$ denote the expectation of the number of calls to $\Setup(P)$ and to the walk unitary, respectively.
	Explicitly, $N^{(\mathsf{S})}$ is given by
	\begin{align}
		N^{(\mathsf S)}
		&=
		O\!\left(
		\sum_{j=0}^{j_0-1}
		(1+\gamma j)\,2^{j(1-\gamma)}
		+
		\sum_{n\ge0}
		p_{\mathrm{fail}}^{\,n}
		(1+\gamma(j_0+n))\,2^{(j_0+n)(1-\gamma)}
		\right).
	\end{align}
	For $\gamma<1$, the factor $2^{j(1-\gamma)}$ grows geometrically, so the first sum is dominated by its final term
	\begin{align}
		\sum_{j=0}^{j_0-1}
		(1+\gamma j)\,2^{j(1-\gamma)}
		&=
		O\!\left(
		(1+\gamma j_0)\,
		2^{j_0(1-\gamma)}
		\right)\nonumber\\
		&=
		O\!\left(
		(1+\gamma\log\HT)\,
		\HT^{1-\gamma}
		\right),
	\end{align}
	where we used that $j_0=\Theta(\log\HT)$ and
	$2^{j_0}\in[3\HT,6\HT)$. Similarly,
	\begin{align}
		\sum_{n\ge0}
		p_{\mathrm{fail}}^{\,n}
		(1+\gamma(j_0+n))\,2^{(j_0+n)(1-\gamma)}
		&=
		2^{j_0(1-\gamma)}
		\sum_{n\ge0}
		\bigl(p_{\mathrm{fail}}2^{1-\gamma}\bigr)^n
		(1+\gamma(j_0+n))\nonumber\\
		&=
		O\!\left(
		(1+\gamma j_0)\,
		2^{j_0(1-\gamma)}
		\right)\nonumber\\
		&=
		O\!\left(
		(1+\gamma\log\HT)\,
		\HT^{1-\gamma}
		\right),
	\end{align}
	since $
	p_{\mathrm{fail}}\,2^{1-\gamma}
	\le
	2^{-1-\gamma}
	<1$.

	Hence, for every $\gamma<1$,
	\begin{equation}
	N^{(\mathsf S)}
	=
	O\!\left(
	(1+\gamma\log\HT)\,
	\HT^{1-\gamma}
	\right).
	\end{equation}
	For $\gamma=1$, the geometric factor disappears and we instead obtain
	\begin{align}
		N^{(\mathsf S)}
		=
		O\!\left(
		\sum_{j=0}^{j_0-1}(1+j)
		+
		\sum_{n\ge0}
		p_{\mathrm{fail}}^{\,n}(1+j_0+n)
		\right)
		=
		O\!\left(
		j_0^2+j_0
		\right)
		=
		O\!\left(
		\log^2\HT
		\right).
	\end{align}
	We now turn to the contribution of the update and checking costs, which is given by
	\begin{align}
		N^{(\mathsf U)}
		&\le
		\sum_{j=0}^{j_0-1}
		(1+\gamma j)\,
		2^{j(1-\gamma/2)}
		\sqrt{(1-\gamma)j+\log j}\nonumber\\
		&+
		\sum_{n\ge0}
		p_{\mathrm{fail}}^{\,n}
		(1+\gamma(j_0+n))\,
		2^{(j_0+n)(1-\gamma/2)}
		\sqrt{(1-\gamma)(j_0+n)+\log(j_0+n)}.
	\end{align}
Since $1-\gamma/2\ge\tfrac12$, the first sum is again dominated by its final term
	\begin{align}
		&\sum_{j=0}^{j_0-1}
		(1+\gamma j)\,
		2^{j(1-\gamma/2)}
		\sqrt{(1-\gamma)j+\log j}\nonumber\\
		&=
		O\!\left(
		(1+\gamma j_0)\,
		2^{j_0(1-\gamma/2)}
		\sqrt{(1-\gamma)j_0+\log j_0}
		\right)\nonumber\\
		&=
		O\!\left(
		(1+\gamma\log\HT)\,
		\HT^{1-\gamma/2}
		\sqrt{(1-\gamma)\log\HT+\log\log\HT}
		\right).
	\end{align}
Likewise,
	\begin{align}
		&\sum_{n\ge0}
		p_{\mathrm{fail}}^{\,n}
		(1+\gamma(j_0+n))\,
		2^{(j_0+n)(1-\gamma/2)}
		\sqrt{(1-\gamma)(j_0+n)+\log(j_0+n)}\nonumber\\
		&=
		2^{j_0(1-\gamma/2)}
		\sum_{n\ge0}
		\bigl(p_{\mathrm{fail}}2^{1-\gamma/2}\bigr)^n
		(1+\gamma(j_0+n))
		\sqrt{(1-\gamma)(j_0+n)+\log(j_0+n)}\nonumber\\
		&=
		O\!\left(
		(1+\gamma j_0)\,
		2^{j_0(1-\gamma/2)}
		\sqrt{(1-\gamma)j_0+\log j_0}
		\right)\nonumber\\
		&=
		O\!\left(
		(1+\gamma\log\HT)\,
		\HT^{1-\gamma/2}
		\sqrt{(1-\gamma)\log\HT+\log\log\HT}
		\right),
	\end{align}
using $
	p_{\mathrm{fail}}\,2^{1-\gamma/2}
	\le
	2^{-1-\gamma/2}
	<1$.

	Therefore we recover the original $(\mathsf U+\mathsf C)$ term for all values of $\gamma$
	\begin{equation}
	N^{(\mathsf U)}
	=
	O\!\left(
	(1+\gamma\log\HT)\,
	\HT^{1-\gamma/2}
	\sqrt{(1-\gamma)\log\HT+\log\log\HT}
	\right).
	\end{equation}
	Substituting the bounds on $N^{(\mathsf S)}$ and $N^{(\mathsf U)}$ into
	\eqref{eq:cost-split} yields \eqref{eq:interp-qff-unknown-complexity}.
\end{proof}

\end{document}